\renewcommand{\backref}[1]{}
\renewcommand{\backrefalt}[4]{%
\ifcase #1 %
\or
[p.\ #2]%
\else
[pp.\ #2]%
\fi}
\newtheorem{theorem}{Theorem}[section]
\newtheorem{lemma}[theorem]{Lemma}
\newtheorem{proposition}[theorem]{Proposition}
\newtheorem{fact}[theorem]{Fact}
\newtheorem{corollary}[theorem]{Corollary}
\theoremstyle{definition}
\newtheorem{definition}[theorem]{Definition}
\renewcommand{\Pr}{\mathop{\bf Pr\/}}
\newcommand{\E}{\mathop{\bf E\/}}
\newcommand{\Ex}{\mathop{\bf E\/}}
\newcommand{\tr}{\mathrm{tr}} \newcommand{\Tr}{\tr} 
\newcommand{\poly}{\mathrm{poly}}
\newcommand{\negl}{\mathrm{negl}}
\newcommand{\R}{\mathbb R}
\newcommand{\C}{\mathbb C}
\newcommand{\Z}{\mathbb Z}
\newcommand{\F}{\mathbb F}
\newcommand{\NP}{\mathsf{NP}}
\newcommand{\eps}{\varepsilon}
\newcommand{\wh}[1]{\widehat{#1}}
\newcommand{\calC}{\mathcal{C}}
\newcommand{\calD}{\mathcal{D}}
\newcommand{\calS}{\mathcal{S}}
\newcommand{\stabset}{\mathcal{S}}
\newcommand{\weyl}{\mathrm{Weyl}}
\newcommand{\fidelity}{F}
\newcommand{\tracedistance}[1]{d_{\mathrm{tr}}(#1)}
\newcommand{\indic}[1]{\mathbbm{1}_{#1}}   %
\newcommand{\abs}[1]{\lvert #1 \rvert}
\newcommand{\ketbra}[2]{\ket{#1}\!\!\bra{#2}}
\renewcommand{\hat}{\widehat}
\newcommand{\sympcomp}{\perp}
\newcommand{\ignore}[1]{}
\newcommand{\anote}[1]{}
\newcommand{\jnote}[1]{}
\newcommand{\hnote}[1]{}
\newcommand{\ainnote}[1]{}
\newcommand{\jinnote}[1]{}
\newcommand{\hinnote}[1]{}
\newcommand{\tnote}[1]{}
\newcommand{\znote}[1]{}
\newcounter{termcounter}[equation]
\renewcommand{\thetermcounter}{\the\numexpr\value{equation}+1\relax.\roman{termcounter}}
\crefname{term}{term}{terms}
\def\term{\@ifnextchar[\term@optarg\term@noarg}%
\def\term@optarg[#1]#2{%
  \textup{#1}%
  \def\@currentlabel{#1}%
  \def\cref@currentlabel{[][2147483647][]#1}%
  \cref@label[term]{#2}}
\def\term@noarg#1{%
  \refstepcounter{termcounter}%
  \textup{\thetermcounter}%
  \cref@label[term]{#1}}
\title{
Improved Stabilizer Estimation via Bell Difference Sampling
}
\author{Sabee Grewal\thanks{University of Texas at Austin. \texttt{sabee@cs.utexas.edu}.}\and Vishnu Iyer\thanks{University of Texas at Austin. \texttt{vishnu.iyer@utexas.edu}.}\and William Kretschmer\thanks{University of Texas at Austin, Simons Institute for the Theory of Computing, and University of California, Berkeley. \texttt{kretsch@berkeley.edu}.}\and Daniel Liang\thanks{University of Texas at Austin and Rice University. \texttt{dliang@utexas.edu}}.}
\date{}
\begin{document}

\maketitle

\begin{abstract}
We study the complexity of learning quantum states in various models with respect to the stabilizer formalism and obtain the following results:
\begin{itemize}
    \item We prove that a linear number of $T$-gates are necessary for any Clifford+$T$ circuit to prepare computationally pseudorandom quantum states, an exponential improvement over the previously known bound. This bound is asymptotically tight if linear-time quantum-secure pseudorandom functions exist.
    \item Given an $n$-qubit pure quantum state $\ket\psi$ that has fidelity $\tau$ with some stabilizer state, we give an algorithm that outputs a succinct description of a stabilizer state that witnesses fidelity at least $\tau - \eps$. The algorithm uses $O(n/(\eps^2\tau^4))$ samples and $\exp\left(O(n/\tau^4)\right) / \eps^2$ time. 
    In the regime of $\tau$ constant, this algorithm estimates stabilizer fidelity substantially faster than the na\"ive $\exp(O(n^2))$-time brute-force algorithm over all stabilizer states.
    \item In the special case of $\tau > \cos^2(\pi/8)$, we show that a modification of the above algorithm runs in polynomial time.
    \item 
    We exhibit a tolerant property testing algorithm for stabilizer states.
\end{itemize}

The underlying algorithmic primitive in all of our results is Bell difference sampling. 
To prove our results, 
we establish and/or strengthen connections between Bell difference sampling, symplectic Fourier analysis, and graph theory. 
\end{abstract}

 \newpage

\section{Introduction}
A central goal in quantum information is to understand which quantum states are efficiently learnable. While many quantum state learning algorithms are extremely efficient in sample complexity \cite{aaronson2018shadow, buadescu2021improved, HKP20-classical-shadows}, fewer classes of time-efficiently-learnable quantum states are known. One such example is the class of \textit{stabilizer states}, which are $n$-qubit states that are stabilized by a group of $2^n$ commuting Pauli matrices.\footnote{Some other examples of state classes that admit time-efficient learning algorithms include matrix product states \cite{cramer2010efficient}, non-interacting fermion states \cite{aaronson2023efficient}, and certain classes of phase states \cite{arunachalam2022phase}.} 
Stabilizer states are well-studied because of their broad importance and widespread applications throughout quantum information, including in quantum error correction \cite{shor1995codes, calderbank1996codes, Got97-thesis}, efficient classical simulation of quantum circuits \cite{bravyi2016trading, Bravyi2019simulationofquantum}, randomized benchmarking \cite{knill2008benchmarking}, and measurement-based quantum computation \cite{raussendorf2000mbqc}, to name a few examples.

The first computationally efficient algorithm for learning a complete description of an unknown stabilizer state was given by Montanaro \cite{montanaro-bell-sampling}.\footnote{In 2008, Gottesman gave a short video lecture explaining how to learn stabilizer states, based on joint work with Aaronson \cite{aaronson43identifying}. However, the details of this algorithm were never published.} Given copies of a stabilizer state $\ket{\phi}$, Montanaro's algorithm utilizes the algebraic properties of Pauli matrices and Bell-basis measurements to efficiently learn the generators of the stabilizer group of $\ket{\phi}$, which suffices to determine $\ket{\phi}$.
More specifically, Montanaro (implicitly) introduced \emph{Bell difference sampling}, which, at a high level, is an algorithmic primitive that takes copies of some state and induces a measurement distribution on Pauli matrices.
Bell difference sampling was studied more thoroughly in \cite{gross2021schur} and has seen extended success in the development of algorithms for stabilizer states and states that are close to stabilizer states \cite{montanaro-bell-sampling, gross2021schur, lai2022learning,grewal_et_al:LIPIcs.ITCS.2023.64, haug2023scalable}.

In this work, we extend the use of Bell difference sampling to give faster, more general, and otherwise improved algorithms for learning properties of quantum states related to the stabilizer formalism.
By understanding how these properties affect the Bell difference sampling distribution, we are able to find relevant certificates of these properties faster than the previous state-of-the-art.

\subsection{Our Results}\label{subsec:our-results}

\subsubsection*{Tight Pseudorandomness Bounds}

Pseudorandom states are a quantum cryptographic primitive that have recently attracted much attention in quantum cryptography and complexity theory. They can be thought of as a quantum analogue of pseudorandom generators, with the main difference being that pseudorandom states mimic the Haar measure over $n$-qubit states, rather than the uniform distribution over $n$-bit strings. 
Formally, they are defined as follows:
\begin{definition}[Pseudorandom quantum states \cite{ji-pseudorandom-states2018}]
\label{def:prs}
A keyed family of $n$-qubit quantum states $\{\ket{\varphi_k}\}_{k \in \{0,1\}^\kappa}$ is \emph{pseudorandom} if the following conditions hold:
\begin{enumerate}
\item (Efficient generation) There is a polynomial-time quantum algorithm $G$ that generates $\ket{\varphi_k}$ on input $k$ (so, in particular, $n \le \poly(\kappa)$).
\item (Computational indistinguishability) For any $\poly(\kappa)$-time quantum adversary $\mathcal{A}$ and $T = \poly(\kappa)$:
\[\left| \Pr_{k \sim \{0,1\}^\kappa}\left[\mathcal{A}\left(\ket{\varphi_k}^{\otimes T}\right) = 1 \right] - \Pr_{\ket{\psi} \sim \mu_{\mathrm{Haar}}^n}\left[\mathcal{A}\left(\ket{\psi}^{\otimes T}\right) = 1 \right] \right| = \negl(\kappa),\]
where $\mu_{\mathrm{Haar}}^n$ denotes the $n$-qubit Haar measure, and $\negl(\kappa)$ denotes an arbitrary negligible function of $\kappa$.
\end{enumerate}
\end{definition}
Pseudorandom states suffice to build a wide range of cryptographic primitives, including quantum commitments, secure multiparty computation, one-time digital signatures, and more \cite{ji-pseudorandom-states2018,ananth2021crypto,morimae2021commitments,bartusek21oneway,grilo21oblivious,HMY22-qpke}. The language of pseudorandom states has also been found to play a key role in resolving some paradoxes at the heart of black hole physics \cite{BFV20-prs-wormhole,Bra22-decoding}. 
Finally, and perhaps most surprisingly, there is recent evidence to suggest that pseudorandom states can be constructed without assuming the existence of one-way functions \cite{Kre21-pseudorandom,KQST23-prs}. 

Collectively, these results have motivated recent works that seek to characterize what computational properties or resources are required of pseudorandom states. For example, \cite{aaronson2022quantumpseudo} investigates the possibility of building pseudorandom quantum states with limited entanglement, and prove the existence of pseudorandom state ensembles with entanglement entropy substantially smaller than $n$, assuming the existence of quantum-secure one-way functions.

Analogously, Grewal, Iyer, Kretschmer, and Liang \cite{grewal_et_al:LIPIcs.ITCS.2023.64} study quantum pseudorandomness from the perspective of stabilizer complexity. They treat the number of non-Clifford gates in a circuit as a resource, similar to size or depth. The main result of \cite{grewal_et_al:LIPIcs.ITCS.2023.64} shows that states having fidelity at least $\frac{1}{\poly(n)}$ with a stabilizer state cannot be computationally pseudorandom. As a consequence, they deduce that $\omega(\log n)$ non-Clifford gates are necessary for a family of circuits to yield an ensemble of pseudorandom quantum states.

We give an exponential improvement on this lower bound:\footnote{We remark that while the result of \cite{grewal_et_al:LIPIcs.ITCS.2023.64} is not tight in terms of the number of non-Clifford gates, recent work \cite{aaronson2022quantumpseudo} shows that \cite{grewal_et_al:LIPIcs.ITCS.2023.64}'s bound in terms of stabilizer fidelity is optimal up to polynomial factors, because \cite{aaronson2022quantumpseudo} constructs pseudorandom state ensembles with any inverse-superpolynomial stabilizer fidelity (assuming quantum-secure one-way functions exist).}

\begin{theorem}[Informal version of \cref{cor:prs-linear-lower-bound}]
\label{thm:pseudorandomness-lb-intro}
Any family of Clifford circuits that produces an ensemble of $n$-qubit computationally pseudorandom quantum states must use at least $n/2$ auxiliary non-Clifford single-qubit gates.
\end{theorem}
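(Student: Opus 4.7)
The plan is to derive a contradiction by constructing a polynomial-time distinguisher between Haar-random states and states produced by Clifford circuits with fewer than $n/2$ auxiliary non-Clifford single-qubit gates. The distinguisher is based on Bell difference sampling combined with $\F_2$-linear algebra.

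First I would prove a structural lemma: any $n$-qubit state $|\psi\rangle$ prepared by a Clifford circuit augmented with $t$ non-Clifford single-qubit gates has a Pauli stabilizer subgroup of size at least $2^{n-t}$. Equivalently, there exists an isotropic subspace $S \subseteq \F_2^{2n}$ of dimension at least $n-t$ with $W_s |\psi\rangle = \pm |\psi\rangle$ for every $s \in S$. This follows by induction on the gate sequence: Clifford gates preserve the stabilizer dimension under conjugation, and each non-Clifford single-qubit gate can decrease the dimension by at most one, since the centralizer of such a gate within the single-qubit Pauli group has index at most two (corresponding to the identity and the Pauli aligned with the gate's rotation axis).

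The key consequence is that the Bell difference sampling distribution $q_\psi$ on $\F_2^{2n}$ is supported on the symplectic dual $S^\perp$, which has dimension at most $n+t$. Indeed, $\langle\psi|W_x|\psi\rangle = 0$ whenever $x \notin S^\perp$: any such $x$ anti-commutes with some $s \in S$, forcing $W_x|\psi\rangle$ to lie in a different eigenspace of $W_s$ than $|\psi\rangle$. Since $q_\psi = p_\psi * p_\psi$ and $S^\perp$ is closed under addition, all Bell samples land in $S^\perp$. The distinguisher then draws $m = \Theta(n)$ Bell difference samples $x_1, \ldots, x_m$, computes the $\F_2$-rank of their span, and outputs ``structured'' if this rank is strictly less than $2n$. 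For any Clifford circuit with $t < n/2$ non-Clifford gates, the samples are confined to an at most $(n+t)$-dimensional subspace, so the observed rank is at most $n+t < 3n/2$ and the distinguisher always outputs ``structured''.

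For a Haar-random state, symmetry gives $\Ex_\psi[|\langle\psi|W_x|\psi\rangle|^2] = 1/(2^n+1)$ for every nonzero $x$, so $q_\psi$ is close to the uniform distribution on $\F_2^{2n}$ in expectation. The main technical obstacle will be promoting this to a high-probability statement over the Haar measure: I would use Weingarten calculus or Levy-style measure concentration to argue that $q_\psi$ is close to uniform in total variation except with negligible probability, after which a standard calculation on random $\F_2$-vectors shows that $m = \Theta(n)$ samples span all of $\F_2^{2n}$ with overwhelming probability. The resulting gap between rank $2n$ for Haar states and rank at most $3n/2$ for Clifford circuits with fewer than $n/2$ non-Clifford gates yields a polynomial-time distinguisher of constant advantage, contradicting the assumed computational pseudorandomness of any such ensemble.
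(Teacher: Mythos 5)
Your overall strategy is the same as the paper's: prove that a $t$-doped Clifford output has large stabilizer dimension, deduce that Bell difference samples land in a proper subspace, and then show that for Haar-random states the samples span all of $\F_2^{2n}$ with high probability via L\'evy-style concentration. The distinguisher (compute the $\F_2$-rank of $\Theta(n)$ Bell samples) is identical.

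However, your structural lemma is incorrect as stated, and the error is precisely what determines the constant in the theorem. You claim that each single-qubit non-Clifford gate decreases the stabilizer dimension by at most one, justifying this by saying "the centralizer of such a gate within the single-qubit Pauli group has index at most two (corresponding to the identity and the Pauli aligned with the gate's rotation axis)." This reasoning only applies when the gate's rotation axis is a Pauli axis (e.g., diagonal gates like $T$, which rotate about $Z$). For a generic single-qubit non-Clifford gate, the rotation axis is not aligned with $X$, $Y$, or $Z$, and the only Pauli commuting with it is the identity --- so the centralizer has index $4$, not $2$, in the single-qubit Pauli group modulo phase. Consequently the stabilizer dimension can drop by $2$, not $1$, per non-Clifford gate. (A concrete witness: apply a rotation about a non-Pauli axis to one qubit of a Bell state; the stabilizer dimension drops from $2$ to $0$.) Your version of the lemma would yield a threshold of $n$ non-Clifford gates, which is true for $T$-gates specifically but is strictly stronger than what holds for arbitrary single-qubit non-Clifford gates, and is what the theorem at hand is explicitly \emph{not} claiming.

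The fix, which is what the paper does, is to lower-bound the number of surviving stabilizers by the subgroup of Weyl operators in $\weyl(\ket\varphi)$ that restrict to the identity on the qubit being acted upon. The projection from $\weyl(\ket\varphi)$ to the two $\F_2$-coordinates of that qubit has kernel of index at most $4$, so the stabilizer dimension drops by at most $2$ per gate, giving dimension at least $n - 2t$ and hence the $n/2$ threshold. Your Haar-random analysis is a bit more ambitious than necessary (total-variation closeness of $q_\psi$ to uniform is stronger than needed --- the paper only bounds the $q_\psi$-mass on each $(2n{-}1)$-dimensional subspace by $2/3$), but that direction would also work. The remainder of the argument --- that $\braket{\psi|W_x|\psi}=0$ for $x \notin S^\perp$ and that $q_\psi$ is therefore supported on $S^\perp$ --- is sound.
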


In the special case that the non-Clifford gates are all diagonal (e.g.\ $T$-gates), our lower bound improves to $n$.

Under plausible computational assumptions, \Cref{thm:pseudorandomness-lb-intro} is tight up to constant factors. In particular, the existence of linear-time quantum-secure pseudorandom functions implies the existence of linear-time constructible pseudorandom states \cite{brakerski10.1007/978-3-030-36030-6_10,grewal_et_al:LIPIcs.ITCS.2023.64}, which of course have at most $O(n)$ non-Clifford gates. Note that linear-time \textit{classically}-secure pseudorandom functions are strongly believed to exist \cite{ishai_10.1145/1374376.1374438, fan_10.1145/3519935.3520010}, and it seems conceivable that these constructions remain secure against quantum adversaries.

We remark that \Cref{thm:pseudorandomness-lb-intro} bears analogy to a recent result of Leone, Oliviero, Lloyd, and Hamma \cite{leone-stabilizer-nullity} that information scrambled by an $n$-qubit unitary implemented with Clifford gates and $t < n$ $T$-gates can be efficiently unscrambled. In particular, both \Cref{thm:pseudorandomness-lb-intro} and \cite{leone-stabilizer-nullity} establish different forms of non-pseudorandomness (for states and unitaries, respectively) in the same parameter regime of non-Cliffordness.

\subsubsection*{Faster Stabilizer State Approximation}

As noted earlier, one of the prominent applications of stabilizer states is in classical simulation algorithms of quantum circuits. Such algorithms work by modeling the output state of a quantum circuit as a decomposition of stabilizer states (e.g., as a linear combination) \cite{Bravyi2019simulationofquantum}. The runtime of these algorithms then scale with respect to one of several measures of the ``amount of non-stabilizerness'' in this decomposition. These measures are sometimes called \textit{magic monotones} \cite[Definition 3]{Veitch_2014} \cite[Definition 3]{gu2023little}, because they are non-increasing under Clifford operations. Typically, magic monotones increase exponentially as non-Clifford gates are applied.\footnote{Some authors prefer to work with the logarithm of the monotone, so that they scale linearly as non-Clifford gates are applied.} Examples of well-known magic monotones include the stabilizer rank, stabilizer extent, and inverse of stabilizer fidelity \cite{Bravyi2019simulationofquantum}.

A series of recent and simultaneous works have explored the question of whether magic monotones can be estimated efficiently, or whether states with low magic are efficiently learnable. 
For example, recall that \cite{grewal_et_al:LIPIcs.ITCS.2023.64} showed that states with non-negligible stabilizer fidelity are weakly learnable, in the sense that they are efficiently distinguishable from random.
\cite{grewal2023efficient,grewal2023efficient2,hangleiter2023bell,leone2023learning} proved that states with bounded stabilizer nullity are efficiently learnable, and \cite{grewal2023efficient} also gave an efficient property tester for stabilizer nullity. \cite{gu2023little} showed that various magic monotones \textit{cannot} be estimated efficiently in certain parameter regimes, by constructing states with low magic that are cryptographically indistinguishable from states with large magic. Finally, \cite{arunachalam2022phase,anshu2023survey} raised the question of whether states of bounded stabilizer rank are efficiently learnable.

Our second result is a further contribution towards understanding the learnability of low-magic states: we give an algorithm that finds stabilizer state approximations of states with non-negligible \textit{stabilizer fidelity}.
As its name suggests, stabilizer fidelity (denoted $F_\stabset(\ket{\psi})$) measures how close a state $\ket{\psi}$ is to a stabilizer state: it is simply the maximum of $\abs{\braket{\phi|\psi}}^2$ over all stabilizer states $\ket{\phi}$. Hence, it is not hard to see that the inverse of stabilizer fidelity is a magic monotone. Assuming $\ket{\psi}$ has stabilizer fidelity at least $\tau$, our algorithm returns a stabilizer state that witnesses overlap at least $F_\stabset(\ket{\psi}) - \eps$ with $\ket{\psi}$.

\begin{theorem}[Informal version of \cref{thm:fidelity_learning}]
Fix $\tau > \eps > 0$. There is an algorithm that, given copies of an $n$-qubit pure state $\ket{\psi}$ with $F_\stabset(\ket{\psi}) \ge \tau$, returns a stabilizer state $\ket\phi$ that satisfies $\abs{\braket{\phi|\psi}}^2 \geq F_\stabset(\ket{\psi}) - \eps$ with high probability. The algorithm uses $O(n/(\eps^2\tau^4))$ copies of $\ket\psi$ and $\exp\left(O(n/\tau^4)\right) / \eps^2$ time.
\label{thm:fidelity_learning_informal}
\end{theorem}

To our knowledge, this is the first nontrivial algorithm to approximate an arbitrary quantum state with a stabilizer state.\footnote{We thank David Gosset (personal communication) for bringing this barrier to our attention.} Indeed, we are not aware of any prior algorithm better than 
a brute-force search over all stabilizer states, which takes $2^{O(n^2)}$ time and $O(n^2)$ samples.\footnote{The polynomial sample complexity follows from a straightforward application of the classical shadows framework \cite{HKP20-classical-shadows}. See \cite[Corollary 21]{gross2006hudson} for a proof that there are $2^{\Theta(n^2)}$ many stabilizer states.}
Thus our algorithm offers a substantial improvement in the regime of $\tau = \omega(n^{-1/4})$. Arguably, the most interesting setting of parameters is constant $\tau$, in which case we have a quadratic improvement in sample complexity and a superpolynomial improvement in time complexity.

Observe that, because we output a witness of stabilizer fidelity at least $\tau - \eps$ with high probability, assuming a state with fidelity $\tau$ exists, our algorithm can additionally be used as a subroutine to \emph{estimate} stabilizer fidelity and, moreover, find a stabilizer state that witnesses this. 
More precisely, if the goal is to estimate stabilizer fidelity to accuracy $\pm \eps$, then one can break $[0,1]$ into intervals of width $\eps$ and perform a binary search procedure using our algorithm. Overall, this takes $O(n/\eps^6)$ samples and $\exp(O(n/\eps^4))$ time.

As an application, our stabilizer state approximation algorithm could be used to search for better stabilizer decompositions of magic states.
Recall that magic states are states that, when injected into Clifford circuits,
allow for %
universal quantum computation \cite{bravyi2005magicstates}. 
The best-known algorithms for simulating quantum circuits dominated by Clifford gates use decompositions of magic states into linear combinations of stabilizer states and have a runtime that scales polynomially in the complexity of the decomposition, either in terms of the stabilizer rank or stabilizer extent \cite{Bravyi2019simulationofquantum}. 
Hence, better stabilizer decompositions of magic states yield faster algorithms.
These decompositions are often obtained by writing the tensor product of a small number of magic states (usually on the order of $10$ qubits) as linear combination of a slightly larger number of stabilizer states \cite{bravyi2016trading, kocia2022improved}.
Therefore, if a classical simulation of our algorithm could be made practical for (say) $n \approx 15$ qubits, there is reason to believe that running this algorithm on magic states, combined with a meta-algorithm such as matching pursuit \cite{mallat1994matching}, could find better stabilizer decompositions of magic states and, as a result, improve the runtime of near-Clifford simulation.

Finally, we remark that the problem we solve is similar in spirit to the agnostic probably approximately correct (PAC) learning framework \cite{valiant1984learning, kearns1992toward}. 
In the agnostic PAC model, a learner is given labeled training data $\{(x_1, y_1), \ldots, (x_m, y_m)\}$ from some unknown distribution $\calD$, as well as some concept class $\mathcal{C}$ to choose a hypothesis from.
The goal of the learner is to find a hypothesis function $h \in \calC$ that approximates the best fit for the training data, even though no function in $\calC$ will necessarily fit the training data perfectly.
In an analogous fashion, our algorithm finds a stabilizer state $\ket{\phi}$ that approximates the best fit for $\ket{\psi}$ over the set of stabilizer states, which need not contain $\ket{\psi}$.
We note that Aaronson studied PAC learning of quantum states in the so-called realizable setting \cite{aaronson2007learning}. However, agnostic PAC learning of quantum states has not yet appeared in the literature. 

\subsubsection*{Bounded-Distance Stabilizer Approximation}

Although our stabilizer state approximation algorithm significantly improves upon brute force, it still requires exponential time in general. One might wonder whether this exponential runtime is necessary. For example, is it possible that finding stabilizer state approximations is computationally hard, even for states whose distance to the nearest stabilizer state is bounded by some small constant? \textit{A priori}, this might even be expected, because in other contexts, learning stabilizer states with a constant rate of noise can be as hard as the Learning Parities with Noise (LPN) problem \cite{Gollakota2022hardnessofpac,hinsche-single-t-gate}, which is believed to be hard.
What if the stabilizer fidelity is large enough to guarantee the existence of a \textit{unique} closest stabilizer state?
Our third result shows that in this regime, a modification of the algorithm from \cref{thm:fidelity_learning_informal} is computationally efficient. In particular, this modification works when the stabilizer fidelity is larger than $\cos^2(\pi/8) \approx 0.8536$, which is precisely threshold above which $\ket{\psi}$ is guaranteed to have a unique closest stabilizer state.

\begin{theorem}[Informal version of \cref{thm:bounded-distance}]
\label{thm:bounded-distance-informal}
    Fix $\gamma > 0$. There is an algorithm that, given copies of an $n$-qubit pure state $\ket{\psi}$ that has fidelity at least $\cos^2(\pi/8) + \gamma$ with some stabilizer state $\ket\phi$, returns $\ket\phi$ with high probability. The algorithm uses $O\left(n + \frac{\log n}{\gamma^2}\right)$ copies of $\ket{\psi}$ and $O\left(n^3 + \frac{n^2 \log n}{\gamma^2}\right)$ time.
\end{theorem}

Note that, unlike \cref{thm:fidelity_learning_informal}, this algorithm finds the stabilizer state $\ket{\phi}$ witnessing the maximum fidelity $F_\stabset(\ket{\psi})$, rather than a (possibly different) state witnessing fidelity $F_\stabset(\ket{\psi}) - \eps$.

\subsubsection*{Tolerant Stabilizer Testing}
Our final result is a tolerant property testing algorithm for stabilizer states. 
In the tolerant property testing model \cite{parnas2006tolerant}, which generalizes ordinary property testing \cite{rubinfeld1996robust, goldreich1998property}, 
a tester must accept objects that are at most $\eps_1$-far from having some property (``completeness'') and reject objects that are at least $\eps_2$-far from having that same property (``soundness'') for $0 \leq \eps_1 < \eps_2 \leq 1$. 
The standard property testing model is recovered when $\eps_1 = 0$, and the relaxed completeness condition generally makes tolerant testing a much harder problem. 
Nonetheless, the tolerant testing model is natural to consider in certain error models, such as in the presence of imprecise quantum gates.

Our result extends work by Gross, Nezami, and Walter \cite{gross2021schur}, who gave a property tester (hereafter, the ``GNW algorithm'') for stabilizer states.
When combined with the prior work of \cite{grewal_et_al:LIPIcs.ITCS.2023.64}, we deduce the existence of a \textit{tolerant} property testing algorithm for stabilizer states. Our algorithm takes copies of an $n$-qubit quantum state $\ket\psi$ and decides whether $\ket\psi$ has stabilizer fidelity at least $\alpha_1$ or at most $\alpha_2$, promised that one of these is the case. 
Note that we have taken $\alpha_1 \coloneqq 1 - \eps_1$ and $\alpha_2 \coloneqq 1-\eps_2$ for notational simplicity.

\begin{theorem}[Informal version of \cref{thm:tolerant-tester}]
Fix $\alpha_1,\alpha_2 \in [0,1]$ such that $\alpha_2 < \frac{4\alpha_1^6-1}{3}$, 
and define $\gamma \coloneqq \alpha_1^6 - \frac{3 \alpha_2+ 1}{4}$. 
There is an algorithm that uses $O(1/\gamma^2)$ copies of a quantum state $\ket\psi$, $O(n/\gamma^2)$ time, and decides whether $\ket\psi$ has stabilizer fidelity at least $\alpha_1$ or at most $\alpha_2$, promised that one of these is the case.
\label{thm:tolerant_tester_informal}
\end{theorem}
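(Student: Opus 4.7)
I would implement the tester as a thresholded version of the Gross--Nezami--Walter (GNW) stabilizer-test statistic. Each trial consumes a constant number of copies of $\ket\psi$: Bell-difference sample a Pauli $\bP$ from the distribution $p_\psi$ (4 copies), then measure $\bP$ on two more copies and record the product of outcomes. The expected value of each trial is $A(\ket\psi) := \E_{\bP \sim p_\psi}[(\bra\psi \bP \ket\psi)^2]$, and Hoeffding's inequality says that averaging the outputs of $O(1/\gamma^2)$ trials and comparing to the midpoint $\tfrac{1}{2}\bigl(\alpha_1^6 + \tfrac{3\alpha_2+1}{4}\bigr)$ distinguishes $A(\ket\psi) \geq \alpha_1^6$ from $A(\ket\psi) \leq (3\alpha_2+1)/4$ with constant error probability. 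Hence the theorem reduces to two analytic inequalities: a \emph{completeness} bound $A(\ket\psi) \geq F_{\stab}(\ket\psi)^6$ and a \emph{soundness} bound $A(\ket\psi) \leq (3F_{\stab}(\ket\psi)+1)/4$.

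My plan is to represent $A$ via symplectic Fourier analysis on $\F_2^{2n}$. Setting $q_\psi(P) := 2^{-n}|\bra\psi P \ket\psi|^2$, Bell difference sampling has distribution $p_\psi = q_\psi \ast q_\psi$, whence $A(\ket\psi) = 2^n (q_\psi \ast q_\psi \ast q_\psi)(0) = 2^{-n}\sum_{x \in \F_2^{2n}}\widehat{q_\psi}(x)^3$. A short calculation based on $\sum_P P \otimes P = 2^n \, \mathrm{SWAP}$ together with the Weyl commutation relation $W_x P W_x^\dagger = (-1)^{\langle x, P\rangle_s} P$ identifies the Fourier coefficient as $\widehat{q_\psi}(x) = |\bra\psi W_x \ket\psi|^2 \in [0,1]$, so that $\widehat{q_\psi}$ is nonnegative with $\widehat{q_\psi}(0) = 1$ and $\sum_x \widehat{q_\psi}(x) = 2^n$. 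The main structural lemma to prove first is that, for \emph{every} Lagrangian (maximal isotropic) subspace $T \subseteq \F_2^{2n}$, $F_{\stab}(\ket\psi)^2 \leq \tfrac{1}{2^n}\sum_{x \in T}\widehat{q_\psi}(x) \leq F_{\stab}(\ket\psi)$. The lower bound comes from Cauchy--Schwarz applied to the identity $|\braket{\phi|\psi}|^2 = 2^{-n}\sum_{P \in T(\phi)}\sigma_P\bra\psi P\ket\psi$; the upper bound uses that the $2^n$ stabilizer states sharing stabilizer $T$ form an orthonormal basis $\{\ket{\phi_s}\}_s$, so that $\sum_{P \in T}q_\psi(P) = \sum_s|\braket{\phi_s|\psi}|^4 \leq F_{\stab}(\ket\psi)\sum_s|\braket{\phi_s|\psi}|^2 = F_{\stab}(\ket\psi)$.

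Completeness is then immediate: fixing $T$ to be the stabilizer of a state witnessing $F_{\stab}(\ket\psi) \geq \alpha_1$, the lower bound gives $\tfrac{1}{2^n}\sum_{x \in T}\widehat{q_\psi}(x) \geq \alpha_1^2$, and Jensen's inequality applied to the convex function $t \mapsto t^3$ on the nonnegative values of $\widehat{q_\psi}|_T$ yields $A(\ket\psi) \geq \tfrac{|T|}{2^n}(\alpha_1^2)^3 = \alpha_1^6$. Soundness is the main obstacle and constitutes the paper's improvement on the GNW analysis: assuming $F_{\stab}(\ket\psi) \leq \alpha_2$, the structural lemma gives $\tfrac{1}{2^n}\sum_{x \in T}\widehat{q_\psi}(x) \leq \alpha_2$ for \emph{every} Lagrangian $T$, and the task is to convert this family of per-Lagrangian upper bounds into the single global bound $\tfrac{1}{2^n}\sum_x \widehat{q_\psi}(x)^3 \leq (3\alpha_2+1)/4$. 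I would exploit the incidence geometry of Lagrangians in $\F_2^{2n}$ --- each nonzero $x$ lies in exactly a $1/(2^n+1)$-fraction of Lagrangians --- to average the per-$T$ constraints against the weight $\widehat{q_\psi}(x)^2$ (legitimate since $\widehat{q_\psi}(x) \in [0,1]$), obtaining $\ell^1$-type control on $\widehat{q_\psi}$ off the origin. Combining this with the pointwise inequality $\widehat{q_\psi}(x)^3 \leq \widehat{q_\psi}(x)$ and isolating the unit contribution $\widehat{q_\psi}(0)^3 = 1$ should produce the sharp constants $\tfrac{3}{4}$ and $\tfrac{1}{4}$ appearing in the target soundness bound.
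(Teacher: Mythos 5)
Your algorithm, the Hoeffding analysis of the sample complexity, and the completeness direction $A(\ket\psi) \geq F_\stab(\ket\psi)^6$ all match the paper (completeness is already a result of \cite{grewal_et_al:LIPIcs.ITCS.2023.64}, which the paper cites). Your structural lemma's upper bound, $\frac{1}{2^n}\sum_{x \in T}\widehat{q_\psi}(x) \le F_\stab(\ket\psi)$ for every Lagrangian $T$, proved via the identity $\sum_{x\in T}q_\psi(x) = \sum_s |\braket{\phi_s|\psi}|^4$, is correct and is precisely the ingredient buried inside the GNW lemma that the paper invokes for soundness. (Note, though, that the lower bound $F_\stab(\ket\psi)^2 \le \frac{1}{2^n}\sum_{x\in T}\widehat{q_\psi}(x)$ does \emph{not} hold for every Lagrangian $T$: take $\ket\psi = \ket{0^n}$ and $T$ the all-$X$ Lagrangian, where the left side is $1$ but the right side is $2^{-n}$. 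It holds only for the $T$ containing the fidelity maximizer, which fortunately is the only $T$ you use.)

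The gap is in soundness, which is exactly the paper's new contribution (\cref{prop:improved-soundness}). Averaging the per-Lagrangian constraints over \emph{all} Lagrangians cannot produce the constant $3/4$: each nonzero $x$ appears in only a $1/(2^n+1)$ fraction of Lagrangians, so the averaged constraint is diluted by that factor and yields nothing beyond the trivial $A\le 1$. What is actually needed is a \emph{single} Lagrangian $T$ capturing every heavy Weyl coefficient. Set $M = \{x : \widehat{q_\psi}(x) > 1/2\}$; the uncertainty principle (for anticommuting $W_x, W_y$ one has $\widehat{q_\psi}(x) + \widehat{q_\psi}(y) \le 1$) forces $M$ to be isotropic, hence contained in some Lagrangian $T$. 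This is the ingredient from \cite{gross2021schur} that your sketch omits. Then $\widehat{q_\psi}(x)^3 \le \widehat{q_\psi}(x)$ on $T$ and $\widehat{q_\psi}(x)^3 \le \frac14\widehat{q_\psi}(x)$ off $T$, so
\[
2^n A \;=\; \sum_x \widehat{q_\psi}(x)^3 \;\le\; \tfrac34\sum_{x\in T}\widehat{q_\psi}(x) + \tfrac14\sum_x\widehat{q_\psi}(x) \;\le\; \tfrac34\cdot 2^n F_\stab(\ket\psi) + \tfrac{2^n}{4},
\]
which is the target. The paper reaches the identical bound via Markov's inequality applied to $\Pr_{x\sim q_\psi}[x \in M]$ (in your notation), but in either phrasing the pivot is the single isotropic heavy set $M$, not an average over Lagrangians.
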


While our algorithm does not work for all settings of $\eps_1$ and $\eps_2$---giving such an algorithm is an open problem---our algorithm does significantly improve over prior work. 
In \cref{subsec:parameter-regime}, we compare the parameter regimes in which our algorithm works to the existing literature and show those regimes visually in \cref{fig:parameter-regime}.  

We remark that the algorithm from \cref{thm:bounded-distance-informal} can also be modified into a tolerant property testing algorithm that works whenever $\alpha_1 > \cos^2(\pi/8)$, by simply estimating the fidelity $\abs{\braket{\psi|\phi}}^2$ of the stabilizer state $\ket{\phi}$ output by the algorithm. The main advantages of \cref{thm:tolerant_tester_informal} are its improved sample complexity and runtime: whereas \cref{thm:tolerant_tester_informal} uses a system-size independent number of samples and linear time, \cref{thm:bounded-distance-informal} requires a linear number of samples and cubic time. Additionally, \cref{thm:tolerant_tester_informal} operates in some parameter regimes where $\alpha_1 \le \cos^2(\pi/8)$; see \cref{fig:parameter-regime}.

\subsection{Our Techniques}
The unifying tool in our work is \emph{Bell difference sampling}, a measurement primitive that has recently found applications in a variety of algorithms related to stabilizer states \cite{montanaro-bell-sampling,gross2021schur,grewal_et_al:LIPIcs.ITCS.2023.64}. We defer a full definition of Bell difference sampling to \cref{subsec:bell-difference-sampling}, but note some of its important properties here. Bell difference sampling involves measuring pairs of qubits of $\ket{\psi}^{\otimes 2}$ in the Bell basis, repeating again with $\ket{\psi}^{\otimes 2}$, and combining the measurements to interpret the result as corresponding to an $n$-qubit Pauli operator. Overall, this consumes four copies of $\ket{\psi}$, though it only performs measurements across two copies of $\ket{\psi}$ at a time. It will be most convenient to parameterize the sampled Pauli operators by strings in $\F_2^{2n}$, which we do as follows. For $x = (a,b) \in \F_2^{2n}$, where $a$ and $b$ are the first and last $n$ bits of $x$, respectively, we define the \emph{Weyl operator} $W_x$ as 
\[
W_x \coloneqq i^{a \cdot b} X^{a_1}Z^{b_1} \otimes \dots \otimes X^{a_n} Z^{b_n}.
\]
Importantly for us, the Weyl operators form an orthogonal basis for $\C^{2^n \times 2^n}$, and so they give rise to the \emph{Weyl expansion} of a quantum state $\ket{\psi}$ as
\[
\ketbra{\psi}{\psi} = \frac{1}{2^n} \sum_{x \in \F_2^{2n}} \braket{\psi|W_x|\psi}  W_x.
\]
For pure states, the squared coefficients in this expansion sum to $1$, and therefore form a distribution over $\F_2^{2n}$. We denote this distribution by $p_\psi(x) \coloneqq 2^{-n} \braket{\psi|W_x|\psi}^2$.\footnote{Here is an easy proof that $p_\psi$ is a distribution: $\sum_x p_\psi(x) = \sum_x 2^{-n}\braket{\psi|W_x|\psi}^2 = \tr(\ket\psi\!\!\braket{\psi|\psi}\!\!\bra\psi) = 1$, where the second step follows from using the Weyl expansion of $\ket{\psi}$.\label{footnote:p-psi-dist}}

Gross, Nezami, and Walter \cite{gross2021schur} give an explicit form for the distribution obtained by performing Bell difference sampling. In particular, they showed that Bell difference sampling a quantum pure state $\ket{\psi}$ is equivalent to sampling from the following distribution:
\[
q_\psi(x) \coloneqq \sum_{a \in \F_2^{2n}} p_\psi(a) p_\psi(a + x),
\]
i.e., the convolution of $p_\psi$ with itself. 
At a high level, we establish our results by proving some structure on $q_\psi$ and $p_\psi$ for certain quantum states. 

\paragraph{Tight Pseudorandomness Bounds}
To prove our lower bound on the number of non-Clifford gates required to prepare pseudorandom states, we give an algorithm that distinguishes Haar-random states from quantum states prepared by circuits with fewer than $n/2$ non-Clifford single-qubit gates. 
The key insight is that if $\ket \psi$ is the output of such a circuit, then $q_\psi$ is concentrated on a proper subspace of $\F_2^{2n}$, whereas for Haar-random states, $q_\psi$ is anticoncentrated on all such subspaces with overwhelming probability over the Haar measure. 
Proving these properties of $q_\psi$ reveals a simple algorithm: draw a linear number of samples from $q_\psi$ and compute the number of linearly independent vectors in the sample.  Haar-random states will have $2n$ such vectors with high probability and, otherwise, there will be strictly less than $2n$ such vectors. 

\paragraph{Faster Stabilizer State Approximation}
Our algorithms for stabilizer approximation also rely on proving anticoncentration properties of $q_\psi$. We begin by showing that if $\ket{\psi}$ has large fidelity with some stabilizer state $\ket{\phi}$, then $q_\psi$ is well-supported on the $n$-dimensional subspace $\weyl(\ket{\phi}) \coloneqq \{x \in \F_2^{2n} : W_x\ket{\phi} = \pm \ket{\phi} \}$
of Weyl operators that stabilize $\ket{\phi}$ (up to sign). Next, we establish that if $\ket{\phi}$ is the state that maximizes stabilizer fidelity, then the mass of $q_\psi$ on $\weyl(\ket{\phi})$ is not too concentrated on any proper subspace. 
Hence, by sampling from $q_\psi$ enough times, we can be guaranteed that with high probability, $\weyl(\ket{\phi})$ will be generated by some subset of the sampled Weyl operators. 
By iterating through all mutually commuting subsets of the sampled Weyl operators, we compile a list of candidate stabilizer states $\ket{\phi}$ that must contain the fidelity-maximizing $\ket{\phi}$. 
Therefore, our algorithm reduces to estimating the fidelity of $\ket{\psi}$ with each candidate $\ket{\phi}$. We further improve the time efficiency via an algorithm for finding maximal cliques, due to \cite{tomita2006worst}, by observing that the candidate subsets must correspond to maximal cliques in the graph of commutation relations.\footnote{I.e., the graph whose edges connect nodes corresponding to commuting Weyl operators.} We also improve the sample complexity by using the classical shadows protocol \cite{HKP20-classical-shadows} to estimate all of the fidelities with candidate states $\ket{\phi}$ efficiently. For more details on these improvements, see \cref{subsec:stab-fidelity-algo}.

\paragraph{Bounded-Distance Stabilizer Approximation}
In the case where stabilizer fidelity is bounded below by $\cos^2(\pi /8)$, we follow the same approach, but use a different and more efficient subroutine for determining which of the sampled Weyl operators generate $\weyl(\ket \phi)$. In particular, we show that there is a simple statistical test for this purpose: if $\abs{\braket{\phi|\psi}}^2 > \cos^2(\pi/8)$, then for any $x \in \F_2^{2n}$, $x \in \weyl(\ket\phi)$ if and only if $\braket{\psi | W_x | \psi}^2 > \frac{1}{2}$ (\cref{prop:always-larger}).
This allows us to eschew the maximal clique algorithm entirely, and we instead directly estimate $\braket{\psi | W_x | \psi}^2$ to determine whether $W_x$ belongs to $\weyl(\ket\phi)$. We further improve upon the sample complexity of this subroutine by making use of an algorithm due to Huang, Kueng, and Preskill \cite{huang2021information} for estimating the expectation of $m$ different Weyl operators from only $O(\log m)$ samples. We also provide a simpler proof of this result in \cref{appendix:hkp}, based on the Fourier-analytic techniques described below.

\paragraph{Tolerant Stabilizer Testing}
Our last result, the tolerant property testing algorithm for stabilizer states, is based on running the GNW stabilizer testing algorithm \cite{gross2021schur} repeatedly to estimate its acceptance probability. We prove this algorithm's correctness by combining existing bounds on the completeness and soundness of the GNW test in terms of stabilizer fidelity. The bound on completeness is due to \cite{grewal_et_al:LIPIcs.ITCS.2023.64}, while the soundness analysis comes from \cite{gross2021schur}.

\paragraph{Symplectic Fourier Analysis}

An essential tool for proving the above results is symplectic Fourier analysis, wherein the Fourier transform over real-valued functions is defined with respect to the symplectic product on $\F_2^{2n}$.
To give a sense of the usefulness of symplectic Fourier analysis in our work, we showcase two powerful theorems whose proofs are symplectic-Fourier-analytic.
In what follows, for a subspace $T \subseteq \F_2^{2n}$ identified with a set of Weyl operators $\{W_x : x \in T\}$, the subspace $T^\sympcomp$ denotes the set of Weyl operators that commute with $T$.

\begin{theorem}[Restatement of \cref{thm:p-mass-identity-subgroups}
and \cref{thm:q-mass-identity-subgroups}]
\label{thm:p_duality_intro}
Let $T \subseteq \F_2^{2n}$ be a subspace, and let $\ket    \psi$ be an $n$-qubit quantum pure state.  Then 
\[
\sum_{a \in T}p_\psi(a) =  \frac{\abs{T}}{2^n}\sum_{x \in T^{\sympcomp}}p_\psi(x),   
\]
and
\[
\sum_{a \in T}q_\psi(a) = \abs{T}  \sum_{x \in T^{\sympcomp}}p_\psi(x)^2.  
\]
\end{theorem}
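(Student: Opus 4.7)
The plan is to prove both identities via symplectic Fourier analysis on $\F_2^{2n}$. I will write $[x,y]$ for the symplectic form, defined by $W_x W_y = (-1)^{[x,y]} W_y W_x$, and adopt the convention $\hat{f}(y) \coloneqq \sum_x f(x)(-1)^{[x,y]}$. The workhorse is the annihilator identity
\[
\sum_{a \in T}(-1)^{[x,a]} \;=\; \begin{cases} \abs{T} & \text{if } x \in T^{\sympcomp}, \\ 0 & \text{otherwise}, \end{cases}
\]
which is immediate from character orthogonality and the nondegeneracy of the symplectic form. Observe that $T^{\sympcomp}$ as described in the preamble is exactly the annihilator of $T$ under $[\cdot,\cdot]$, since $W_x$ and $W_a$ commute iff $[x,a]=0$.

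The technical core is a self-duality lemma for $p_\psi$. Writing $c_x \coloneqq \braket{\psi|W_x|\psi}$ so that $p_\psi(x) = 2^{-n} c_x^2$, I will establish
\[
c_y^2 \;=\; 2^{-n} \sum_x c_x^2 (-1)^{[x,y]}, \qquad \text{i.e.,} \qquad \hat{p}_\psi(y) = 2^n p_\psi(y).
\]
The argument exploits the pure-state collapse $\rho W_y \rho = c_y \rho$, which gives $c_y^2 = \tr\!\left((\rho W_y)^2\right)$. Expanding $\rho = 2^{-n}\sum_x c_x W_x$ in the Weyl basis and simplifying using $W_y W_x W_y = (-1)^{[x,y]} W_x$ and the orthogonality $\tr(W_x W_z) = 2^n \delta_{x,z}$ collapses the double sum to the claimed expression. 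I expect this lemma to be the main obstacle, as it is the only step that uses the purity of $\ket\psi$ in an essential way.

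With self-duality in hand, the first identity follows by summing the pointwise identity for $c_a^2$ over $a \in T$, swapping the order of summation, and applying the annihilator identity to restrict to $x \in T^{\sympcomp}$; dividing through by $2^n$ converts $c^2$ into $p_\psi$ and produces $\sum_{a \in T} p_\psi(a) = (\abs{T}/2^n) \sum_{x \in T^{\sympcomp}} p_\psi(x)$.

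For the second identity, note that $q_\psi = p_\psi \ast p_\psi$, so the convolution theorem gives $\hat{q}_\psi = \hat{p}_\psi^{\,2} = 2^{2n}\, p_\psi^{\,2}$. Fourier inversion then yields the pointwise formula
\[
q_\psi(a) \;=\; \sum_y p_\psi(y)^2 (-1)^{[a,y]},
\]
and summing over $a \in T$ and again invoking the annihilator identity produces $\abs{T}\sum_{y \in T^{\sympcomp}} p_\psi(y)^2$, as desired. A consistency check at $T = \F_2^{2n}$ (so $T^{\sympcomp} = \{0\}$) recovers $\sum_a p_\psi(a) = 1$ and $\sum_a q_\psi(a) = 2^{2n} p_\psi(0)^2 = 1$, which matches expectations.
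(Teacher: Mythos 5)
Your proposal is correct and follows essentially the same Fourier-analytic route as the paper: Fourier inversion, the character/annihilator sum over $T$ restricting to $T^\sympcomp$, and the convolution theorem for the $q_\psi$ case. The one addition worth noting is that you derive the self-duality $\hat{p}_\psi = 2^{-n}p_\psi$ (the paper's \cref{fact:weyl-fourier-duality}) from scratch via the pure-state collapse $\rho W_y \rho = c_y\rho$, whereas the paper cites it from \cite{gross2021schur} without proof — a nice self-contained touch, and indeed the only place purity is used.
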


In words, \cref{thm:p_duality_intro} shows that $p_\psi$ and $q_\psi$ exhibit a strong duality property with respect to the commutation relations among Weyl operators. In particular, the first part shows that the mass of $p_\psi$ on a subspace $T$ of Weyl operators is directly proportional to the mass on the subspace $T^\sympcomp$ of Weyl operators that commute with $T$.
\cref{thm:p_duality_intro} is especially powerful when the subspace $T$ is very large, because $T$ and $T^\perp$ always have inversely proportional size (see \cref{fact:symp-vector-space-facts}). Hence, using our duality theorems, we can convert summations over high-dimensional subspaces into summations over just a few terms.

\section{Preliminaries}
We introduce notation and background that is central to our work. 
We assume familiarity with common concepts in quantum information and computer science, such as the stabilizer formalism and basic graph theory. 
For more background on the stabilizer formalism, see, e.g., \cite{Got97-thesis,nielsen2002quantum}.

We write $[n] \coloneqq \{1, \ldots, n\}$. 
For $x = (a,b) \in \F_2^{2n}$, $a$ and $b$ always denote the first and last $n$ bits of $x$, respectively.
For a probability distribution $\calD$ on a set $S$, we denote drawing a sample $s \in S$ according to $\calD$ by $s \sim \calD$. 
We denote drawing a sample $s \in S$ uniformly at random by $s \sim S$.
In an undirected graph $G$, a clique is a complete subgraph of $G$. 
A maximal clique is a clique that is not a proper subgraph of another clique. 
For quantum pure states $\ket\psi, \ket\phi$, let $\tracedistance{\ket\psi,\ket\phi} = \sqrt{1 - \abs{\braket{\psi|\phi}}^2}$ denote the trace distance
and $F(\ket\psi, \ket\phi) = \abs{\braket{\psi|\phi}}^2$ denote the fidelity.
The trace distance quantifies the distinguishability between two quantum states by a two-outcome measurement. 
We also use the following Chernoff bound. 

\begin{fact}[Chernoff bound]\label{fact:chernoff}
Let $X_1, \ldots, X_n$ be independent identically distributed random variables taking values in $\{0,1\}$. Let $X$ denote their sum and let $\mu = \E[X]$. Then for any $\delta > 0$, 
\[
\Pr\left[X \leq (1- \delta)\mu\right] \leq e^{-\delta^2 \mu / 2}.
\]
\end{fact}

We additionally require the following version of Hoeffding's inequality. 

\begin{fact}[Hoeffding’s inequality]\label{fact:hoeffding}
Suppose $X_1,\dots, X_n$ are independent random variables subject to $a_i \le X_i \le b_i$ for all $i$. 
Let $X = \sum_{i=1}^n X_i$ and let $\mu = \E[X]$. 
Then for all $t \ge 0$ it holds that:
\[
\Pr[X - \mu \geq t] \le \exp\left(- \frac{2t^2}{\sum_{i=1}^n (b_i - a_i)^2} \right)
\]
and
\[
\Pr[\abs{X - \mu} \ge t] \le 2 \exp\left(- \frac{2t^2}{\sum_{i=1}^n (b_i - a_i)^2} \right).
\]
\end{fact}

The $n$-qubit Pauli group $\mathcal{P}_n$ is the set $\{\pm 1, \pm i\} \times \{I, X, Y, Z\}^{\otimes n}$, where $I, X, Y, Z$ are the standard Pauli matrices. We refer to unitary transformations in the Clifford group as Clifford circuits (equivalently, Clifford circuits are quantum circuits comprised only of Clifford \emph{gates}, namely, the Hadamard, Phase, and CNOT gates).
Clifford gates with the addition of any single-qubit non-Clifford gate form a universal gate set. The $T$-gate is often the non-Clifford gate of choice,  where the $T$-gate is defined by $T \coloneqq \ket{0}\!\!\bra{0} + e^{i \pi/4} \ket{1}\!\!\bra{1}$.
We denote the set of $n$-qubit stabilizer states by $\stabset_n$.
One way to measure the ``stabilizer complexity'' of a quantum state is the stabilizer fidelity. 

\begin{definition}[Stabilizer fidelity, {\cite[Definition 4]{Bravyi2019simulationofquantum}}]
Suppose $\ket{\psi}$ is a pure $n$-qubit state.
The \emph{stabilizer fidelity} of $\ket{\psi}$, denoted $\fidelity_{\stabset}$, is 
\[
\fidelity_{\stabset}(\ket\psi) \coloneqq \max_{\ket{\phi} \in \stabset_n}\abs{\braket{\phi|\psi}}^2.
\]
\end{definition}

\subsection{Symplectic Vector Spaces}\label{subsec:symplectic-vector}

We work extensively with $\F_2^{2n}$ as a symplectic vector space by equipping it with the symplectic product.

\begin{definition}[Symplectic product]\label{def:symplectic-product}
For $x,y \in \F_2^{2n}$, we define the \emph{symplectic product} as $[x,y] = x_1 \cdot y_{n+1} + x_2\cdot y_{n+2} + ... + x_n \cdot y_{2n} + x_{n+1} \cdot y_1 + x_{n+2} \cdot y_2 + ... + x_{2n} \cdot y_n$, where all operations are performed over $\F_2$.
\end{definition}

The symplectic product gives rise to the notion of a \emph{symplectic complement}, much like the orthogonal complement for the standard inner product.

\begin{definition}[Symplectic complement]\label{def:symplectic-complement}
Let $T \subseteq \F_2^{2n}$ be a subspace. The \emph{symplectic complement} of $T$, denoted by $T^\sympcomp$, is defined by
\[
T^\sympcomp \coloneqq \{a \in \F_2^{2n} : \forall x \in T,\, [x,a] = 0 \}.
\]
\end{definition}

We present the following useful facts about the symplectic complement, many of which are similar to that of the more familiar orthogonal complement.
\begin{fact}\label{fact:symp-vector-space-facts}
Let $S$ and $T$ be subspaces of $\F_2^{2n}$. Then:
\begin{itemize}
\item $T^\sympcomp$ is a subspace.
\item $(T^\sympcomp)^\sympcomp = T$.
\item $\abs{T} \cdot \abs{T^\sympcomp} = 4^n$, or equivalently $\dim T + \dim T^\sympcomp = 2n$.
\item $T \subseteq S \iff S^\sympcomp \subseteq T^\sympcomp$.
\end{itemize}
\end{fact}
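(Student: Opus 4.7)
The four bullets are standard consequences of $[\cdot,\cdot]$ being a nondegenerate $\F_2$-bilinear form on $\F_2^{2n}$, so I would prove them roughly in the order listed, with the third bullet doing most of the work.

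For the first bullet, I would just observe that $[\cdot,\cdot]$ is $\F_2$-bilinear (immediate from the coordinate-wise formula in \cref{def:symplectic-product}), so the set $\{a : [x,a]=0 \text{ for all } x \in T\}$ is closed under addition and contains $0$.

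The third bullet is the crux, and this is where the main (mild) obstacle lies. I would introduce the linear map
\[
\Phi \colon \F_2^{2n} \to \mathrm{Hom}_{\F_2}(T, \F_2), \qquad \Phi(a)(x) \coloneqq [x,a],
\]
and note that $\ker \Phi = T^\sympcomp$ by definition, so rank-nullity gives $\dim T^\sympcomp = 2n - \dim(\mathrm{image}\,\Phi)$. It therefore suffices to show $\Phi$ is surjective onto $\mathrm{Hom}_{\F_2}(T,\F_2)$, which has dimension $\dim T$. This reduces to nondegeneracy of $[\cdot,\cdot]$ on $\F_2^{2n}$, which I would check by hand: writing the standard basis as $e_1, \dots, e_{2n}$, the formula in \cref{def:symplectic-product} gives $[e_i, e_{n+i}] = [e_{n+i}, e_i] = 1$ and all other basis-pair products equal to $0$, so if $[x,y] = 0$ for every $y$, plugging in $y = e_j$ for each $j$ forces every coordinate of $x$ to vanish. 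Extending a basis of $T$ to a basis of $\F_2^{2n}$ and using this nondegeneracy shows $\Phi$ surjects onto $\mathrm{Hom}_{\F_2}(T,\F_2)$. Everything else in this part is formal; nondegeneracy is the only computation.

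With the third bullet in hand, the second bullet is immediate. The inclusion $T \subseteq (T^\sympcomp)^\sympcomp$ follows from the symmetry $[x,a]=[a,x]$: any $t \in T$ satisfies $[t,a]=0$ for every $a \in T^\sympcomp$. Equality then follows from comparing dimensions using the third bullet twice: $\dim (T^\sympcomp)^\sympcomp = 2n - \dim T^\sympcomp = 2n - (2n - \dim T) = \dim T$.

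Finally, for the fourth bullet, the forward direction is direct: if $T \subseteq S$ and $a \in S^\sympcomp$, then $[x,a]=0$ for all $x \in S$, hence in particular for all $x \in T$, so $a \in T^\sympcomp$. The backward direction follows by applying the forward direction to $S^\sympcomp \subseteq T^\sympcomp$ to obtain $(T^\sympcomp)^\sympcomp \subseteq (S^\sympcomp)^\sympcomp$, and then simplifying both sides via the second bullet to get $T \subseteq S$.
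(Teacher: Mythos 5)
The paper states this as a \texttt{Fact} and gives no proof, treating the four bullets as standard background. Your proof is correct and is essentially the canonical argument: the first and fourth bullets are immediate from bilinearity and symmetry, the dimension formula in the third bullet is deduced from rank-nullity applied to $\Phi \colon a \mapsto [\,\cdot\,, a]|_T$ together with nondegeneracy of the symplectic form (which you verify by hand on the standard basis, correctly), and the second bullet then follows from the containment $T \subseteq (T^\sympcomp)^\sympcomp$ plus dimension counting. Since there is no paper proof to compare against, the only thing worth noting is that you have supplied the proof the authors implicitly left to the reader, and that your ordering --- proving the dimension identity first and deriving the double-complement equality from it --- is the clean way to do it; the one small detail you glossed over (extending a basis of $T$ to a symplectic basis, or just any basis, of $\F_2^{2n}$ to exhibit preimages under $\Phi$) is routine and does not constitute a gap.
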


A subspace $T \subset \F_2^{2n}$ is \emph{isotropic} when for all $x,y \in T$, $[x,y] = 0$.
A subspace $T \subset \F_2^{2n}$ is \emph{Lagrangian} when $T^\sympcomp = T$. 
Lagrangian subspaces can equivalently be defined as isotropic subspaces with dimension $n$.

\subsection{Symplectic Fourier Analysis}\label{subsec:symplectic-fourier}

Our work uses \emph{symplectic} Fourier analysis, which is similar to Boolean Fourier analysis (see e.g., \cite{o2014analysis}), except the Fourier characters are defined with respect to the symplectic product.

\begin{definition}[Symplectic Fourier transform]
Let $f:\F_2^{2n} \to \R$. We define the \emph{symplectic Fourier transform} of $f$, which is given by a function $\hat{f}:\F_2^{2n} \to \R$, by
\[
\hat{f}(a) = \dfrac{1}{4^n} \sum_{x \in \F_2^{2n}} (-1)^{[a,x]}f(x).
\]
Hence, the \emph{symplectic Fourier expansion} of $f$ is
\[
f(x) = \sum_{a \in \F_2^{2n}} (-1)^{[a,x]}\hat{f}(a).
\]
\end{definition}

The well-known Plancherel's Theorem holds under this Fourier transform. 
\begin{fact}[Plancherel's Theorem]\label{fact:plancherel}
\[
\frac{1}{4^n} \sum_{x \in \F_2^{2n}} f(x) g(x) = \sum_{x \in  \F_2^{2n}} \widehat{f}(x)\widehat{g}(x).
\]
\end{fact}
\begin{proof}
    \begin{align*}
        \frac{1}{4^n}\sum_{x \in  \F_2^{2n}} f(x) g(x)
        &= \frac{1}{4^n}\sum_{a,b \in  \F_2^{2n}} \widehat{f}(a) \widehat{g}(b) \sum_{x \in  \F_2^{2n}} (-1)^{[b + c, x]}\\
        &= \frac{1}{4^n}\sum_{a,b \in  \F_2^{2n}} \widehat{f}(a) \widehat{g}(b)  \left( 4^n \indic{a = b}\right)\\
        &= \sum_{x \in  \F_2^{2n}} \widehat{f}(x) \widehat{g}(x) && \qedhere
    \end{align*}
\end{proof}

Convolution plays an important role in our work.

\begin{definition}[Convolution]
Let $f, g : \F_2^{2n} \to \R$. Their convolution is the function $f \ast g:\F_2^{2n} \to \R$ defined by 
\[
(f \ast g)(x) = \E_{t \sim \F_2^{2n}}[f(t) g(t + x)] = \frac{1}{4^n} \sum_{t \in \F_2^{2n}} f(t) g(t + x).
\]
\end{definition}

Convolution corresponds to the multiplication of Fourier coefficients, even under the symplectic Fourier transform. 

\begin{proposition}\label{thm:convolution-theorem}
   Let $f, g: \F_2^{2n} \to \R.$ Then for all $a \in \F_2^{2n}$, 
   \[
   \wh{f \ast g}(a) = \wh{f}(a) \wh{g}(a).
   \]
\end{proposition}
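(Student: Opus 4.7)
The plan is to mimic the standard proof of the classical convolution theorem, using the fact that the symplectic product $[\cdot, \cdot] : \F_2^{2n} \times \F_2^{2n} \to \F_2$ is bilinear (which is immediate from \cref{def:symplectic-product}, since each summand there is bilinear). The only substantive ingredient beyond bilinearity is a change of variables in a double sum.

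First, I would expand $\widehat{f \ast g}(a)$ by applying the definition of the symplectic Fourier transform and then the definition of convolution, obtaining
\[
\widehat{f \ast g}(a) \;=\; \frac{1}{4^n}\sum_{x \in \F_2^{2n}} (-1)^{[a,x]} \cdot \frac{1}{4^n}\sum_{t \in \F_2^{2n}} f(t)\, g(t+x).
\]
Next, I would perform the change of variables $s \coloneqq t + x$ (so $x = s + t$, using characteristic $2$). Since $s$ ranges over all of $\F_2^{2n}$ as $x$ does, the double sum becomes
\[
\widehat{f \ast g}(a) \;=\; \frac{1}{4^{2n}} \sum_{t,s \in \F_2^{2n}} (-1)^{[a,\,s+t]}\, f(t)\, g(s).
\]

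The key step is then bilinearity of the symplectic product, which gives $[a, s+t] = [a,s] + [a,t]$ in $\F_2$, and hence $(-1)^{[a,s+t]} = (-1)^{[a,s]} (-1)^{[a,t]}$. This lets the double sum factor as a product of independent single sums:
\[
\widehat{f \ast g}(a) \;=\; \left(\frac{1}{4^n}\sum_{t \in \F_2^{2n}} (-1)^{[a,t]} f(t)\right)\!\left(\frac{1}{4^n}\sum_{s \in \F_2^{2n}} (-1)^{[a,s]} g(s)\right) \;=\; \widehat{f}(a)\, \widehat{g}(a),
\]
where the last equality is the definition of the symplectic Fourier coefficients of $f$ and $g$ at $a$. (This incidentally corrects the typo in the stated formula, which writes $\widehat{f}(x)$ in place of $\widehat{f}(a)$.)

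There is no real obstacle here; the argument is purely formal. The only point requiring the slightest care is verifying bilinearity of $[\cdot,\cdot]$ over $\F_2$, which follows termwise from \cref{def:symplectic-product}, and confirming that the change of variables $s = t+x$ is a bijection on $\F_2^{2n}$ for fixed $t$ (which it is, being an involution in characteristic $2$).
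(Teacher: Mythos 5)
Your proof is correct and follows essentially the same route as the paper's: both expand the double sum, use bilinearity of the symplectic product to write $(-1)^{[a,x]} = (-1)^{[a,t]}(-1)^{[a,x+t]}$, and factor. The only cosmetic difference is that you make the substitution $s = t + x$ explicit where the paper leaves it implicit, and you correctly flag the typo $\widehat{f}(x)$ for $\widehat{f}(a)$ in the statement.
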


\begin{proof}
A useful observation is that the symplectic product is bilinear, such that $[a,x] = [a,t] + [a,x+t]$. Using this, we can expand and simplify:
\begin{align*}
\wh{f \ast g}(a) &= \dfrac{1}{4^n}\sum_{x \in \F_2^{2n}} (-1)^{[a,x]} (f \ast g)(x) \\
&= \dfrac{1}{16^n}\sum_{x,t \in \F_2^{2n}} (-1)^{[a,x]} f(t)g(x+t) \\
&= \dfrac{1}{16^n}\sum_{t \in \F_2^{2n}} (-1)^{[a,t]} f(t)
\sum_{x \in \F_2^{2n}}(-1)^{[a,x+t]} g(x+t) \\
&= \wh{f}(a)\wh{g}(a). \qedhere
\end{align*}
\end{proof}

We prove a fact that will be useful in our symplectic Fourier analysis.

\begin{lemma}\label{lemma:sum-over-characters}
For any subspace $T \subseteq \F_2^{2n}$ and a fixed $x \in \F_2^{2n}$,
    \[\sum_{a \in T} (-1)^{[a, x]} = \abs{T} \cdot \indic{x \in T^\perp}.\]
\end{lemma}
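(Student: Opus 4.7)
The plan is to handle the two cases $x \in T^\sympcomp$ and $x \notin T^\sympcomp$ separately, using the standard character-sum dichotomy. The key observation is that the map $a \mapsto (-1)^{[a,x]}$ restricted to $a \in T$ is a group homomorphism $T \to \{\pm 1\}$, by bilinearity of the symplectic product. So it is either trivial or a nontrivial character, and standard character-orthogonality dictates the answer.

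First, I would dispatch the easy case. If $x \in T^\sympcomp$, then by definition $[a,x] = 0$ for every $a \in T$, so every summand equals $1$ and the total is exactly $\abs{T}$. This matches the right-hand side since $\indic{x \in T^\sympcomp} = 1$.

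Next, the more interesting case: suppose $x \notin T^\sympcomp$. Then by definition there exists some $a_0 \in T$ such that $[a_0, x] = 1$. I would use the involution $a \mapsto a + a_0$, which is a bijection of $T$ to itself since $T$ is a subspace and $a_0 \in T$. Under this bijection, bilinearity gives $[a + a_0, x] = [a,x] + [a_0, x] = [a,x] + 1$, so $(-1)^{[a+a_0,x]} = -(-1)^{[a,x]}$. Therefore the sum equals its own negation and must vanish, again matching the right-hand side since $\indic{x \in T^\sympcomp} = 0$.

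There is no real obstacle here; the only subtle bookkeeping point is making sure the notation $T^\perp$ in the lemma statement is interpreted as the symplectic complement $T^\sympcomp$ defined in \cref{def:symplectic-complement} (which is clear from context, since the paper is working entirely in the symplectic setting). The proof is just two or three lines once the involution trick is identified.
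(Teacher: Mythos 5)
Your proof is correct and uses essentially the same approach as the paper: both identify an element $a_0 \in T$ (the paper calls it $y$) with $[a_0,x]=1$ and pair $a$ with $a+a_0$; the paper phrases the conclusion as ``exactly half the terms are $+1$ and half are $-1$,'' while you phrase it as ``the sum equals its own negation,'' which is the same pairing argument.
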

\begin{proof}
    If $x \in T^\sympcomp$ then this is easy to see. 
    Suppose $x \not \in T^\sympcomp$. Then we claim $[a, x] = 0$ for exactly half of the elements $a \in T$. 
    To see this, we observe that there exists a $y \in T$ such that $[y,x] = 1$. 
    Let $T / y$ denote $T$ modulo addition by $y$. 
    Given a pair $\{a, a + y\} \in T / y$, observe that exactly one of $[a, x]$ and $[a+y, x]$ is $0$ and the other is $1$.
    As such we have that for half of all $a \in T$, $[a,x] = 0$ and for the other half, $[a,x] = 1$, giving us $\sum_{a \in T} (-1)^{[a, x]} = 0$. \qedhere
\end{proof}

\subsection{Weyl Operators and Bell Difference Sampling}
\label{subsec:bell-difference-sampling}
For $x = (a, b) \in \F_2^{2n}$, the \emph{Weyl operator} $W_x$ is defined as 
\[
W_x \coloneqq 
i^{a'\cdot b'}(X^{a_1} Z^{b_1}) \otimes \dots \otimes (X^{a_n} Z^{b_n}),
\]
where $a',b' \in \Z^n$ are the embeddings of $a,b$ into $\Z^n$.
Each Weyl operator is a Pauli operator, and every Pauli operator is a Weyl operator up to a phase.
Because the Clifford group normalizes the Pauli group, Clifford circuits induce an action on $\F_2^{2n}$ by conjugation of the corresponding Weyl operators (up to phase). That is, for every Clifford circuit $C$ and $x \in \F_2^{2n}$, there exists a unique $y \in \F_2^{2n}$ and phase $\alpha \in \{ \pm 1\}$ such that $C W_x C^\dagger = \alpha W_y$. In a slight abuse of notation, we denote this action on $\F_2^{2n}$ by $C(x) = y$.

There is clearly a bijection between $\F_2^{2n}$ and the set of Weyl operators, so any subset of $\F_2^{2n}$ corresponds to a subset of Weyl operators.
Importantly, commutation relations between Weyl operators are determined by the symplectic product. 
In particular, for $x,y \in \F_2^{2n}$, the Weyl operators $W_x, W_y$ commute when $[x,y] = 0$ and anticommute when $[x,y] = 1$.
So, if $T \subseteq \F_2^{2n}$ is a subspace, then $T$ is isotropic if and only if $\{W_x : x \in T\}$ is a set of mutually commuting Weyl operators.
Similarly, $T$ is Lagrangian if and only if $\{W_x : x \in T\}$ is a set of $2^n$ mutually commuting Weyl operators. 

\begin{definition}[Unsigned stabilizer group]
Let $\weyl(\ket\psi) \coloneqq \{x \in \F_2^{2n} : W_x \ket{\psi} = \pm \ket{\psi}\}$ denote the unsigned stabilizer group of $\ket{\psi}$.
\end{definition}

It is not hard to show that, as a consequence of the uncertainty principle, $\weyl(\ket\psi)$ is an isotropic subspace of $\F_2^{2n}$. 
Additionally, if $T \subset \F_2^{2n}$ is a Lagrangian subspace, then the set of states $\{\ket{\varphi} : \weyl(\ket{\varphi}) = T\}$ forms an orthonormal basis of the $n$-qubit Hilbert space. Moreover, since each basis state $\ket{\varphi}$ is stabilized by $2^n$ Weyl operators (up to phase), every basis state is a stabilizer state. Conversely, observe that for any stabilizer state $\ket{\varphi}$, $\weyl(\ket{\varphi})$ is a Lagrangian subspace.

We now define a new stabilizer complexity measure based on the unsigned stabilizer group. 

\begin{definition}[Stabilizer dimension]\label{def:stabilizer-dimension}
Let $\ket{\psi}$ be an $n$-qubit pure state. The \emph{stabilizer dimension of $\ket{\psi}$} is the dimension of $\weyl(\ket{\psi})$ as a subspace of $\F_2^{2n}$.\footnote{The stabilizer dimension is closely related to the stabilizer nullity \cite{beverland2020lower} (in fact, for $n$-qubit states, the stabilizer dimension is simply $n$ minus the stabilizer nullity).  }
\end{definition}

The stabilizer dimension of a stabilizer state is $n$, which is maximal, and, for most states, the stabilizer dimension is $0$.

The Weyl operators collectively form an orthogonal basis for $2^n \times 2^n$ matrices with respect to the inner product $\langle A, B \rangle = \tr{(A^\dagger B)}$. This gives rise to the so-called \emph{Weyl expansion} of a quantum state. 

\begin{definition}[Weyl expansion]\label{def:weyl-expansion}
Let $\ket{\psi} \in \C^{2^n}$ be an $n$-qubit quantum pure state. The Weyl expansion of $\ket\psi$ is 
\[
\ketbra{\psi}{\psi} = \dfrac{1}{\sqrt{2^n}} \sum_{x \in F_2^{2n}} c_\psi(x) W_x,
\]
where $c_\psi(x) = \frac{1}{\sqrt{2^n}}  \braket{\psi|W_x|\psi}$. 
\end{definition}

Squaring the $c_\psi(x)$'s gives rise to a distribution over $\F_2^{2n}$ and therefore over the Weyl operators (see \cref{footnote:p-psi-dist} for a proof).
We denote this distribution by $p_\psi(x) \coloneqq c_\psi(x)^2$ and refer to it as the \emph{characteristic distribution}. Note that, for all $x$, $p_\psi(x) \in [0, 2^{-n}]$.   
A convenient fact about the $p_{\psi}$ is its invariance (up to scaling) under the symplectic Fourier transform.
\begin{fact}
\label{fact:weyl-fourier-duality}
For any $n$-qubit pure state $\ket\psi$ and any $x \in \F_2^{2n}$, $p_\psi(x) = 2^n\hat{p_{\psi}}(x)$.
\end{fact}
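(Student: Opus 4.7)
The plan is a short symplectic-Fourier calculation. Unwinding the definitions, the claim $p_\psi(a)=2^n\hat{p_\psi}(a)$ is equivalent to
\[
\braket{\psi|W_a|\psi}^2 \;=\; \frac{1}{2^n}\sum_{x\in \F_2^{2n}} (-1)^{[a,x]}\,\braket{\psi|W_x|\psi}^2,
\]
so it suffices to establish this ``self-dual'' identity. The natural way to do so is to write the left side as $\tr(\rho W_a\rho W_a)$ with $\rho=\ketbra{\psi}{\psi}$ and then expand both copies of $\rho$ in the Weyl basis.

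Concretely, substituting $\rho = 2^{-n}\sum_y \braket{\psi|W_y|\psi}\,W_y$ twice gives
\[
\tr(\rho W_a\rho W_a) \;=\; \frac{1}{4^n}\sum_{y,z}\braket{\psi|W_y|\psi}\braket{\psi|W_z|\psi}\,\tr(W_y W_a W_z W_a).
\]
Two standard properties of the Weyl basis then collapse this double sum. The first is the symplectic commutation relation $W_a W_z = (-1)^{[a,z]} W_z W_a$, which combined with $W_a^2 = I$ (a consequence of the phase $i^{a\cdot b}$ in the definition of $W_a$) yields $W_a W_z W_a = (-1)^{[a,z]} W_z$. The second is Weyl orthogonality, $\tr(W_y W_z)=2^n\,\delta_{y,z}$, which reduces the sum to its diagonal. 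After substitution, the double sum becomes $\frac{1}{2^n}\sum_y (-1)^{[a,y]}\braket{\psi|W_y|\psi}^2$, which is exactly the desired identity and immediately rearranges to the claim.

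I do not foresee a real obstacle: both the commutation relation and the orthogonality identity are standard features of Weyl operators implicit in the background of \cref{subsec:bell-difference-sampling}. The only mild care required is verifying that $W_a^\dagger = W_a$ and $W_a^2 = I$ for every Weyl operator, which relies on the phase convention $W_x = i^{a\cdot b}(X^{a_1}Z^{b_1})\otimes\cdots\otimes(X^{a_n}Z^{b_n})$ making each single-qubit tensor factor Hermitian and squaring to the identity. An equivalent route, should one prefer an operator-level derivation, is to apply the ``Pauli SWAP trick'' $\sum_x W_x\otimes W_x = 2^n\,\mathrm{SWAP}$ and conjugate by $W_a\otimes I$ to pick up the sign $(-1)^{[a,x]}$ before evaluating on $\ket\psi^{\otimes 2}$; the Weyl-expansion argument above is simply the spelled-out Fourier version of this.
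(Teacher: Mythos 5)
Your proof is correct. The paper does not actually prove \cref{fact:weyl-fourier-duality} directly but instead cites Equation 3.5 of \cite{gross2021schur} (noting a normalization difference) and Proposition 3.3 of \cite{grewal_et_al:LIPIcs.ITCS.2023.64} (which uses the standard rather than symplectic Fourier transform). Your self-contained derivation is essentially the calculation carried out in those references: rewrite $\braket{\psi|W_a|\psi}^2 = \tr(\rho W_a \rho W_a)$ (using that $\rho = \ketbra{\psi}{\psi}$ is rank one), expand both copies of $\rho$ in the Weyl basis, and collapse via $W_a W_z W_a = (-1)^{[a,z]} W_z$ together with the orthogonality $\tr(W_y W_z) = 2^n \delta_{y,z}$. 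Your observations that $W_a$ is Hermitian and $W_a^2 = I$, which hinge on the phase $i^{a'\cdot b'}$ in the definition, are exactly what make the cancellation clean, and the equivalent ``SWAP trick'' route you mention is indeed another standard way to see it. Your version is a perfectly acceptable replacement for the paper's citation.
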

For a proof of this fact, we refer the reader to \cite[Equation 3.5]{gross2021schur}, noting our slight difference in normalization.\footnote{Alternatively, one can refer to \cite[Proposition 17]{grewal_et_al:LIPIcs.ITCS.2023.64}, where the normalization is consistent with this work, but \cite{grewal_et_al:LIPIcs.ITCS.2023.64} uses the standard Fourier transform rather than the symplectic one. Despite this difference, the proof goes through in a similar way.} 

A significant algorithmic primitive in our work is \emph{Bell difference sampling} \cite{montanaro-bell-sampling, gross2021schur}.
Let $\ket{\Phi^+} \coloneqq \frac{\ket{00} + \ket{11}}{\sqrt{2}}$. 
Then, the set of quantum states $\{\ket{W_x} \coloneqq (W_x \otimes I) \ket{\Phi^+} : x\in\mathbb{F}_2^{2}\}$ forms an orthonormal basis of $\mathbb{C}^{2} \otimes \mathbb{C}^{2}$, which we call the \textit{Bell basis}. 
Bell difference sampling an $n$-qubit state $\ket{\psi}$ just means the following.
First, take two copies of a pure state $\ket{\psi}$. 
Take the first qubit in each copy and measure them in the Bell basis. 
Repeat this for each remaining pair of qubits. 
Let $(a_i, b_i)$ denote the two-bit measurement outcome from measuring the $i$th pair of qubits. 
Then, we denote the measurement outcome on the two copies by $x = (a_1, \ldots, a_n, b_1, \ldots, b_n) \in \F_2^{2n}$. 
Repeat this once more with two fresh copies of $\ket{\psi}$ to obtain a string $y \in \F_2^{2n}$. 
Finally, output $x + y$.\footnote{Even when $\ket{\psi}$ is a stabilizer stabilizer state, measuring two copies of $\ket{\psi}$ in the Bell basis returns $x \in \F_2^{2n}$ with probability $p_\psi(x+a)$, where $a \in \F_2^{2n}$ is an unwanted shift. Bell difference sampling essentially cancels out this unwanted shift $a$. See \cite{montanaro-bell-sampling, gross2021schur} for more detail.} 
Historically, Bell difference sampling has found use in algorithms for stabilizer states. However, Gross, Nezami, and Walter proved that Bell difference sampling is meaningful for all quantum states.

\begin{lemma}[Bell difference sampling, {\cite[Theorem 3.2]{gross2021schur}}]\label{lem:bell_diff_sampling}
Let $\ket{\psi}$ be an arbitrary $n$-qubit pure state. 
Bell difference sampling corresponds to drawing a sample from the following distribution:
\[
q_{\psi}(x) \coloneqq 4^n (p_\psi \ast p_\psi)(x) = \sum_{y \in \F_2^{2n}} p_{\psi}(y) p_{{\psi}}(x + y),
\]
and uses four copies of $\ket\psi$.
We refer to $q_\psi(x)$ as the \emph{Weyl distribution}.
\end{lemma}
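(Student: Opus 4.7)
The plan is to reduce the claim to a calculation in symplectic Fourier analysis. Let $r(x)$ denote the probability that a single Bell basis measurement of $\ket\psi^{\otimes 2}$ yields outcome $x \in \F_2^{2n}$. Since Bell difference sampling outputs $x_1 + x_2$ with $x_1, x_2$ drawn independently from $r$, the resulting distribution is $\sum_z r(z)r(z+x) = 4^n(r \ast r)(x)$. Hence it suffices to prove that $r \ast r = p_\psi \ast p_\psi$.

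First I would compute $r(x)$ explicitly. Using $\ket{W_x} = (W_x \otimes I)\ket{\Phi^+}$, the ``ricochet'' identity $(A \otimes I)\ket{\Phi^+} = (I \otimes A^T)\ket{\Phi^+}$, the transpose relation $W_z^T = (-1)^{a_z \cdot b_z} W_z$ for $z = (a_z,b_z)$, and Weyl orthogonality $\tr(W_y W_z) = 2^n \delta_{y,z}$, one obtains $\bra{\Phi^+}(W_y \otimes W_z)\ket{\Phi^+} = (-1)^{a_y \cdot b_y}\delta_{y,z}$. Expanding $(\ket\psi\!\bra\psi)^{\otimes 2}$ in the Weyl basis and commuting $W_x$ past $W_y$ (which picks up $(-1)^{[x,y]}$) then yields $r(x) = \tfrac{1}{2^n}\sum_{y \in \F_2^{2n}} (-1)^{[x,y] + a_y \cdot b_y}\, p_\psi(y)$. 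Reading this as a symplectic Fourier inversion, the coefficients are $\hat r(y) = \tfrac{1}{2^n}(-1)^{a_y\cdot b_y}p_\psi(y)$. By \cref{fact:weyl-fourier-duality}, $\hat{p_\psi}(y) = \tfrac{1}{2^n}p_\psi(y)$, so $\hat r$ and $\hat{p_\psi}$ agree up to a sign $(-1)^{a_y \cdot b_y}$ of absolute value one. Now by the convolution theorem (\cref{thm:convolution-theorem}), $\widehat{r\ast r}(y) = \hat r(y)^2 = \tfrac{1}{4^n}p_\psi(y)^2 = \hat{p_\psi}(y)^2 = \widehat{p_\psi\ast p_\psi}(y)$, because the troublesome signs square to $1$. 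Uniqueness of Fourier coefficients gives $r\ast r = p_\psi \ast p_\psi$, and so the Bell-differenced distribution equals $4^n(p_\psi \ast p_\psi)(x) = q_\psi(x)$, as claimed.

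The main obstacle will be the careful phase bookkeeping in the computation of $r(x)$, specifically the appearance of the sign $(-1)^{a_y \cdot b_y}$ arising from $W_y^T$. This is not merely cosmetic: a single Bell measurement of $\ket\psi^{\otimes 2}$ does \emph{not} sample from $p_\psi$ in general (one can verify $r \neq p_\psi$ by explicit computation on $\ket\psi = T\ket{+}$). The whole point of XORing two independent Bell samples is that self-convolution squares the Fourier coefficients and so annihilates this symplectic-character sign, reducing the statistic to the phase-invariant quantity $q_\psi$, which depends only on $\ket\psi\!\bra\psi$ rather than on the wavefunction's global phases.
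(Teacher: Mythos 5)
Your proof is correct. Note that the paper itself does not prove this lemma --- it cites it directly from Gross, Nezami, and Walter (\cite[Theorem 3.2]{gross2021schur}) --- and your argument follows essentially the same route as that reference: express the single Bell-measurement distribution $r$ via the Weyl decomposition of $\ketbra{\Phi^+}{\Phi^+}$ together with the transpose identity $W_y^T = (-1)^{a_y \cdot b_y} W_y$, read off $\hat r(y) = (-1)^{a_y \cdot b_y}\hat{p_\psi}(y)$ using \cref{fact:weyl-fourier-duality}, and observe that this sign (which genuinely prevents $r$ from being a shift of $p_\psi$, as you correctly emphasize) is annihilated by self-convolution via \cref{thm:convolution-theorem}. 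The phase bookkeeping you flagged as the main obstacle is handled correctly throughout.
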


\section{On the Weyl and Characteristic Distributions}\label{sec:duality}

We prove identities related to the characteristic distribution $p_\psi$ and Weyl distribution $q_\psi$ that are critical for our results. 
We emphasize that these results hold for \emph{all} pure quantum states.
First, we show that the mass on a subspace $T \subseteq \F_2^{2n}$ under $p_\psi$ is proportional to the mass on $T^\sympcomp$ under $p_\psi$.

\begin{theorem}\label{thm:p-mass-identity-subgroups}
Let $T \subseteq \F_2^{2n}$ be a subspace. Then 
\[
\sum_{a \in T}p_\psi(a) =  \frac{\abs{T}}{2^n}\sum_{x \in T^{\sympcomp}}p_\psi(x).  
\]
\end{theorem}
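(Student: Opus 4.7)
The plan is to prove this identity by a direct application of symplectic Fourier analysis, combined with the self-duality of $p_\psi$ under the symplectic Fourier transform (\cref{fact:weyl-fourier-duality}) and the character sum formula for subspaces (\cref{lemma:sum-over-characters}).

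First, I would expand each $p_\psi(a)$ on the left-hand side via its symplectic Fourier expansion, writing
\[
\sum_{a \in T} p_\psi(a) = \sum_{a \in T} \sum_{x \in \F_2^{2n}} (-1)^{[a,x]} \hat{p_\psi}(x).
\]
Swapping the order of summation isolates an inner sum of the form $\sum_{a \in T} (-1)^{[a,x]}$, which by \cref{lemma:sum-over-characters} equals $|T| \cdot \indic{x \in T^\sympcomp}$. This collapses the outer sum over $x \in \F_2^{2n}$ down to a sum over $x \in T^\sympcomp$, yielding $|T| \sum_{x \in T^\sympcomp} \hat{p_\psi}(x)$.

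Finally, I would invoke \cref{fact:weyl-fourier-duality}, which says that $\hat{p_\psi}(x) = 2^{-n} p_\psi(x)$, to convert the Fourier coefficients back into values of $p_\psi$ and obtain the factor of $|T|/2^n$ claimed in the theorem statement.

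I do not anticipate any real obstacle here; each step is a routine manipulation. The only thing to be careful about is the direction of the duality in \cref{fact:weyl-fourier-duality} (the normalization convention) and the slightly unusual fact that the symplectic product, though defined in terms of a nontrivial pairing between the two halves of $\F_2^{2n}$, is still bilinear and nondegenerate, so \cref{lemma:sum-over-characters} applies verbatim. Because $p_\psi$ is essentially its own symplectic Fourier transform, the proof is really just Parseval-style bookkeeping: mass of $p_\psi$ on a subspace $T$ is controlled by the Fourier coefficients supported on $T^\sympcomp$, and for $p_\psi$ those Fourier coefficients are (up to scaling) the values of $p_\psi$ itself.
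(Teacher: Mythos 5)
Your proposal is correct and follows essentially the same argument as the paper: expand $p_\psi$ in the symplectic Fourier basis, collapse the character sum over $T$ via \cref{lemma:sum-over-characters}, and substitute $\hat{p_\psi}(x) = 2^{-n}p_\psi(x)$ from \cref{fact:weyl-fourier-duality}. The only cosmetic difference is that the paper invokes the duality fact before the character sum, whereas you do it afterward; these two steps commute trivially.
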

\begin{proof}
    \begin{align*}
        \sum_{a \in T}p_\psi(a) &= \sum_{a \in T} \sum_{x \in \F_2^{2n}} \widehat{p}_\psi(x) (-1)^{[a, x]}\\
        &= \frac{1}{2^n} \sum_{a \in T} \sum_{x \in \F_2^{2n}} p_\psi(x) (-1)^{[a, x]} && (\text{\cref{fact:weyl-fourier-duality}})\\
        &= \frac{\abs{T}}{2^n} \sum_{x \in \F_2^{2n}} p_\psi(x) \cdot \indic{x \in T^\perp} && \text{(\cref{lemma:sum-over-characters})}\\
        &= \frac{\abs{T}}{2^n} \sum_{x \in T^\perp} p_\psi(x). && \qedhere
    \end{align*} 
\end{proof}

A similar result is true for $q_\psi$. In words, we show that the average probability mass on a subspace $T$ under $q_\psi$ is equal to the squared-$\ell_2$-norm of the probability mass on $T^\sympcomp$ under $p_\psi$. 

\begin{theorem}\label{thm:q-mass-identity-subgroups}
Let $T \subseteq \F_2^{2n}$ be a subspace. Then 
\[
\frac{1}{\abs{T}} \sum_{a \in T}q_\psi(a) =  \sum_{x \in T^{\sympcomp}}p_\psi(x)^2.  
\]
\end{theorem}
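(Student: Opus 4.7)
My plan is to mimic the proof of \cref{thm:p-mass-identity-subgroups} almost verbatim, with the single extra ingredient being the symplectic convolution theorem applied to $q_\psi$. The key observation is that $q_\psi = 4^n (p_\psi \ast p_\psi)$ by \cref{lem:bell_diff_sampling}, so the symplectic Fourier transform of $q_\psi$ has an especially clean form.

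First I would compute $\widehat{q_\psi}$. By \cref{thm:convolution-theorem}, $\widehat{p_\psi \ast p_\psi}(a) = \widehat{p_\psi}(a)^2$, and by \cref{fact:weyl-fourier-duality}, $\widehat{p_\psi}(a) = p_\psi(a)/2^n$. Combining these yields
\[
\widehat{q_\psi}(a) = 4^n \widehat{p_\psi}(a)^2 = 4^n \cdot \frac{p_\psi(a)^2}{4^n} = p_\psi(a)^2.
\]
This is the heart of the proof: the symplectic Fourier coefficients of the Weyl distribution are precisely the squared values of the characteristic distribution.

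Next I would expand $q_\psi$ via its symplectic Fourier expansion and sum over $a \in T$, swapping the order of summation:
\[
\sum_{a \in T} q_\psi(a) = \sum_{a \in T} \sum_{x \in \F_2^{2n}} (-1)^{[a,x]} \widehat{q_\psi}(x) = \sum_{x \in \F_2^{2n}} p_\psi(x)^2 \sum_{a \in T} (-1)^{[a,x]}.
\]
Applying \cref{lemma:sum-over-characters} to the inner sum collapses it to $\abs{T} \cdot \indic{x \in T^\sympcomp}$, so
\[
\sum_{a \in T} q_\psi(a) = \abs{T} \sum_{x \in T^\sympcomp} p_\psi(x)^2,
\]
and dividing by $\abs{T}$ finishes the proof.

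There is no real obstacle here; the argument is a direct parallel of \cref{thm:p-mass-identity-subgroups}, with the only new step being the identification $\widehat{q_\psi} = p_\psi^2$ via the convolution theorem. The ``duality'' flavor in the statement of \cref{thm:p_duality_intro} is ultimately just the standard fact that summing a function over a subspace in Fourier domain corresponds to summing its Fourier transform over the annihilator, specialized to the symplectic pairing.
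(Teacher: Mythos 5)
Your proof is correct and follows essentially the same route as the paper's: symplectic Fourier expansion of $q_\psi$, identification $\widehat{q_\psi} = p_\psi^2$ via \cref{lem:bell_diff_sampling}, \cref{thm:convolution-theorem}, and \cref{fact:weyl-fourier-duality}, then collapsing the character sum via \cref{lemma:sum-over-characters}. The only cosmetic difference is that you compute $\widehat{q_\psi}$ up front rather than substituting the convolution and duality facts inline within the summation chain.
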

\begin{proof}
\begin{align*}
    \sum_{a \in T} q_\psi(a) 
    &= \sum_{a \in T} \sum_{x \in \F_2^{2n}} \wh{q}(x) (-1)^{[a,x]} \\
    &= 4^n \sum_{a \in T} \sum_{x \in \F_2^{2n}} \wh{p}(x)^2 (-1)^{[a,x]} && \text{(\cref{lem:bell_diff_sampling}, \cref{thm:convolution-theorem}.)} \\
    &= \sum_{a \in T} \sum_{x \in \F_2^{2n}} p(x)^2 (-1)^{[a,x]} && \text{(\cref{fact:weyl-fourier-duality})} \\
    &= \abs{T} \sum_{x \in T^\sympcomp} p(x)^2. && \text{(\cref{lemma:sum-over-characters})}\qedhere  
\end{align*}
\end{proof}

\section{Pseudorandomness Lower Bounds}
\label{sec:pseudorandomness}
We prove that the output state of any Clifford circuit augmented with fewer than $n/2$ non-Clifford single-qubit gates can be efficiently distinguished from Haar random.\footnote{If we fix the non-Clifford gate to be a $T$-gate, then $n/2$ can be improved to $n$.} 
As a result, any circuit family that prepares an ensemble of $n$-qubit pseudorandom quantum states must use at least $\Omega(n)$ non-Clifford single-qubit gates. 
The key idea is that Haar-random states have minimal stabilizer dimension (\cref{def:stabilizer-dimension}) with overwhelming probability. By contrast, for a quantum circuit that acts on a stabilizer state (which has stabilizer dimension $n$), each single-qubit non-Clifford gate decreases the stabilizer dimension by at most $2$.

We introduce the following definition to simplify the exposition, borrowing terminology from \cite{leone-stabilizer-nullity}. 

\begin{definition}[$t$-doped Clifford circuits]
A $t$-doped Clifford circuit is a quantum circuit comprised only of Clifford gates (i.e., Hadamard, Phase, and $\mathrm{CNOT}$) and at most $t$ single-qubit non-Clifford gates that starts in the state $\ket{0^n}$.
\end{definition}

\subsection{Quantum Circuits With Few Non-Clifford Gates}
To begin, we  show that the output state $\ket \psi$ of a $t$-doped Clifford circuit, where $t < n/2$, induces a distribution $q_{\psi}$ that is supported over a subspace of dimension at most $2n-2$.

\begin{lemma}
\label{lem:arbitrary-gate-dimension}
    Let $\ket\psi$ be the output state of a $t$-doped Clifford circuit. Then the stabilizer dimension of $\ket\psi$ is at least $n-2t$.
\end{lemma}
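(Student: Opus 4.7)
The plan is to induct on $t$. The base case $t=0$ is immediate: a $0$-doped Clifford circuit produces a stabilizer state $\ket{\psi}$, whose unsigned stabilizer group $\weyl(\ket{\psi})$ is Lagrangian and thus has dimension exactly $n$, matching the bound $n - 2 \cdot 0$.

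For the inductive step, I would decompose the circuit as $\ket{\psi} = C \, U \ket{\psi'}$, where $\ket{\psi'}$ is the output of a $(t-1)$-doped Clifford circuit, $U$ is a single-qubit non-Clifford gate acting (WLOG) on qubit $i$, and $C$ is a Clifford circuit applied afterward. First I would observe that Clifford circuits preserve stabilizer dimension: since $C W_x C^\dagger = \pm W_{C(x)}$ for some bijection of $\F_2^{2n}$, the map $x \mapsto C(x)$ sends $\weyl(\ket{\varphi})$ isomorphically onto $\weyl(C\ket{\varphi})$ for any $\ket{\varphi}$. So it suffices to show $\dim \weyl(U\ket{\psi'}) \geq \dim \weyl(\ket{\psi'}) - 2$.

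The core idea is that a single-qubit unitary on qubit $i$ commutes with every Weyl operator whose restriction to qubit $i$ is the identity. Concretely, let $T \coloneqq \weyl(\ket{\psi'})$ and define
\[
T' \coloneqq \{x \in T : x_i = x_{n+i} = 0\},
\]
which is the subspace of $T$ cut out by at most two linear constraints, hence $\dim T' \geq \dim T - 2$. For any $x \in T'$, $W_x$ acts trivially on qubit $i$, so $[W_x, U] = 0$; therefore $W_x (U\ket{\psi'}) = U W_x \ket{\psi'} = \pm U \ket{\psi'}$, which shows $T' \subseteq \weyl(U\ket{\psi'})$. Chaining the inequalities and invoking the induction hypothesis $\dim T \geq n - 2(t-1)$ yields $\dim \weyl(\ket{\psi}) \geq \dim T' \geq n - 2t$.

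I do not anticipate a serious obstacle: the only subtlety is verifying the Clifford invariance of stabilizer dimension (which is a direct consequence of the conjugation action on Weyl operators already recorded in \cref{subsec:bell-difference-sampling}) and noting that the ``identity on qubit $i$'' condition is indeed at most two independent $\F_2$-linear constraints on $T$, so the codimension drop is at most $2$. In the diagonal case (e.g., $T$-gates), $U^\dagger Z U = Z$ holds automatically, so only the constraint $x_i = 0$ is needed, which explains the improved bound of $n - t$ mentioned in the footnote.
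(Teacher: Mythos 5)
Your proof is correct and follows essentially the same strategy as the paper's: induction on $t$, Clifford invariance of stabilizer dimension, and the key observation that the subspace of $\weyl(\ket{\psi'})$ acting trivially on qubit $i$ (codimension at most $2$) is preserved in $\weyl(U\ket{\psi'})$. Your formulation is marginally cleaner, since you pass directly to the subspace $T'$ cut out by $x_i = x_{n+i} = 0$, which is manifestly a subspace, whereas the paper first introduces the set of all $x$ with $UW_xU^\dagger = W_x$ (a subspace, but this requires a small argument the paper omits) and then lower-bounds it by the same $T'$.
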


\begin{proof}
    We proceed by induction on $t$. 
    In the base case $t=0$, so $\ket{\psi}$ is a stabilizer state and has stabilizer dimension $n$.

    For the inductive step, let $t > 0$. Write $\ket{\psi} = CU\ket{\varphi}$, where $\ket{\varphi}$ is the output of a $(t-1)$-doped Clifford circuit, 
    $U$ is a single-qubit gate, and $C$ is a Clifford circuit. Because the stabilizer dimension is unchanged by Clifford gates, it suffices to show that the stabilizer dimension of $U\ket{\varphi}$ is at least $n - 2t$.

    Let $S = \weyl(\ket{\varphi})$, which by the induction assumption has dimension at least $n - 2(t-1)$. Observe that for any $x \in S$, if the Weyl operator $W_x$ commutes with $U$, then:
    \[
    \bra{\varphi}U^\dagger W_x U \ket{\varphi} = \bra{\varphi} W_x \ket{\varphi} = \pm 1.
    \]
    Hence, letting $T \coloneqq \{x \in S: UW_xU^\dagger = W_x \}$, we see that the stabilizer dimension of $U\ket{\varphi}$ is at least the dimension of $T$. But $|T| \ge |S| / 4$, because $T$ contains all elements $x$ of $S$ for which $W_x$ restricts to the identity on the qubit to which $U$ is applied. Thus, the stabilizer dimension of $U\ket{\varphi}$ is at least $n - 2t$, as desired.
\end{proof}

We remark that the stabilizer dimension lower bound in \cref{lem:arbitrary-gate-dimension} can be improved to $n - t$ in the case that all of the non-Clifford gates are diagonal (for example, if all of the non-Clifford gates are $T$-gates). This is because diagonal gates commute with both $I$ \textit{and} $Z$.

\begin{lemma}
\label{lem:p-support-sympcomp}
    The support of $p_\psi$ is contained in $\weyl(\ket\psi)^\sympcomp$. 
\end{lemma}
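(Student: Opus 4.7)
The plan is to prove the contrapositive: if $x \notin \weyl(\ket\psi)^\sympcomp$, then $p_\psi(x) = 0$, equivalently $\braket{\psi|W_x|\psi} = 0$. By the definition of the symplectic complement, $x \notin \weyl(\ket\psi)^\sympcomp$ means there exists some $y \in \weyl(\ket\psi)$ with $[x,y] = 1$, i.e., the Weyl operators $W_x$ and $W_y$ anticommute.

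I would then exploit this anticommutation together with the fact that $W_y\ket\psi = \epsilon\ket\psi$ for some $\epsilon \in \{\pm 1\}$ (by the definition of $\weyl(\ket\psi)$). Since Weyl operators are Hermitian and square to identity, $\bra\psi W_y = \epsilon \bra\psi$ as well. Inserting $W_y^2 = I$ and using $W_y W_x = -W_x W_y$, we get
\[
\braket{\psi|W_x|\psi} = \epsilon^2 \braket{\psi|W_x|\psi} = \braket{\psi|W_y W_x W_y|\psi} = -\braket{\psi|W_x W_y^2|\psi} = -\braket{\psi|W_x|\psi},
\]
which forces $\braket{\psi|W_x|\psi} = 0$, and hence $p_\psi(x) = 0$.

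As an alternative (or a sanity check), one could derive the lemma from \cref{thm:p-mass-identity-subgroups} applied to the subspace $T = \weyl(\ket\psi)$: for every $a \in \weyl(\ket\psi)$, $\braket{\psi|W_a|\psi} = \pm 1$, so $p_\psi(a) = 2^{-n}$ and the left-hand side equals $\abs{\weyl(\ket\psi)}/2^n$. The identity then gives $\sum_{x \in \weyl(\ket\psi)^\sympcomp} p_\psi(x) = 1$, and since $p_\psi$ is a probability distribution (see \cref{footnote:p-psi-dist}), its support must lie inside $\weyl(\ket\psi)^\sympcomp$.

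There is really no hard step here; the only thing to be careful about is the sign and normalization bookkeeping (verifying $W_y^2 = I$ from the $i^{a\cdot b}$ phase in the definition of Weyl operators, so that $W_y$ is a $\pm 1$-eigenvalued Hermitian involution on $\ket\psi$). I would present the direct commutation argument as the main proof since it is a two-line calculation.
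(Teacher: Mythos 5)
Your primary argument is correct, and it is genuinely different from the paper's proof. The paper derives the lemma top-down from the symplectic duality identity (\cref{thm:p-mass-identity-subgroups}): it computes $\sum_{x \in \weyl(\ket\psi)^\sympcomp} p_\psi(x) = \frac{|\weyl(\ket\psi)^\sympcomp|}{2^n}\sum_{a \in \weyl(\ket\psi)} p_\psi(a)$, plugs in $p_\psi(a) = 2^{-n}$ on the unsigned stabilizer group, and uses $|T||T^\sympcomp| = 4^n$ to conclude the total mass on $\weyl(\ket\psi)^\sympcomp$ is $1$. This is exactly your ``alternative'' paragraph. Your main argument instead works pointwise and bottom-up: pick $y \in \weyl(\ket\psi)$ anticommuting with $W_x$ and sandwich to force $\braket{\psi|W_x|\psi} = -\braket{\psi|W_x|\psi} = 0$. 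Both are valid, and your sign/Hermiticity bookkeeping checks out ($W_y$ is Hermitian and $W_y^2 = I$, so $\bra\psi W_y = \pm\bra\psi$). The anticommutation route is more elementary and self-contained (it does not invoke symplectic Fourier analysis), which is arguably cleaner for a reader encountering this lemma in isolation; the paper's route is shorter given that \cref{thm:p-mass-identity-subgroups} has already been established, and it fits the paper's theme of deriving everything from the $p_\psi$/$q_\psi$ duality identities.
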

\begin{proof}
    We show the mass of $p_{\psi}$ on $\weyl(\ket\psi)^\sympcomp$ is $1$. 
    \begin{align*}
    \sum_{x \in \weyl(\ket\psi)^\sympcomp} p_\psi(x) 
    &= \frac{\abs{\weyl(\ket\psi)^\sympcomp}}{2^n}\sum_{x \in \weyl(\ket\psi)} p_\psi(x) && \text{(\cref{thm:p-mass-identity-subgroups})}\\
    &= \frac{\abs{\weyl(\ket\psi)^\sympcomp}}{2^n} \frac{\abs{\weyl(\ket\psi)}}{2^n} && \text{(By definition of $\weyl(\ket\psi)$)}\\
    &= 1. && (\text{\cref{fact:symp-vector-space-facts}}) \qedhere
    \end{align*} 
\end{proof}

\begin{corollary}
\label{cor:q-support-sympcomp}
The support of $q_\psi(x)$ is $\weyl(\ket\psi)^\sympcomp$.
\end{corollary}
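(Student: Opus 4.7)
I plan to prove the set equality by two separate inclusion arguments. Let $W = \weyl(\ket\psi)$ and $V = W^\sympcomp$.

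For the forward inclusion $\supp(q_\psi) \subseteq V$, I apply \cref{thm:q-mass-identity-subgroups} with $T = V$. Since $W_x\ket\psi = \pm\ket\psi$ gives $p_\psi(x) = 2^{-n}$ for every $x \in W$, and $|V| \cdot |W| = 4^n$ by \cref{fact:symp-vector-space-facts}, we obtain
\[
\sum_{a \in V} q_\psi(a) \;=\; |V| \sum_{x \in W} p_\psi(x)^2 \;=\; |V|\cdot|W|\cdot 4^{-n} \;=\; 1.
\]
Since $q_\psi$ is a probability distribution, concentrating all of its mass on $V$ forces $\supp(q_\psi) \subseteq V$.

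For the reverse inclusion $V \subseteq \supp(q_\psi)$, I use that $q_\psi(x) = \sum_y p_\psi(y)\,p_\psi(x+y)$ immediately yields $\supp(q_\psi) = \supp(p_\psi) + \supp(p_\psi)$ as a sumset in $\F_2^{2n}$, so it suffices to show this sumset equals $V$. An uncertainty-style observation first yields $\spn(\supp(p_\psi)) = V$: if some $y$ satisfied $[y,x] = 0$ for every $x \in \supp(p_\psi)$, then expanding $\ketbra\psi\psi$ in the Weyl basis and conjugating by $W_y$ would produce $W_y\ketbra\psi\psi W_y^\dagger = \ketbra\psi\psi$, forcing $y \in W$. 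Hence $\supp(p_\psi)^\sympcomp \subseteq W$, and dualizing via \cref{fact:symp-vector-space-facts} gives $V \subseteq \spn(\supp(p_\psi))$, with the reverse containment provided by \cref{lem:p-support-sympcomp}.

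The main obstacle is upgrading this span equality to the sumset equality $\supp(p_\psi) + \supp(p_\psi) = V$, since over $\F_2$ a set containing $0$ whose span is $V$ need not have self-sumset equal to $V$. My plan is to pass to the quotient $V/W$, which inherits a nondegenerate symplectic form. Because $p_\psi(x+w) = p_\psi(x)$ for every $w \in W$ (since $W_{x+w}\ket\psi$ agrees with $W_x\ket\psi$ up to sign), the set $\supp(p_\psi)$ is a disjoint union of $W$-cosets, and its image $\tilde S \subseteq V/W$ still spans $V/W$ and contains the origin. Combining the lower bound $|\tilde S| \geq 2^n/|W|$ coming from $p_\psi(x) \leq 2^{-n}$ with the pure-state constraint $\rho^2 = \rho$---which imposes nontrivial quadratic relations on the Weyl coefficients of $\ket\psi$ beyond mere normalization---I would argue $\tilde S + \tilde S = V/W$. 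Lifting this back to $V$ completes the reverse inclusion.
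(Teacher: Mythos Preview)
Your forward-inclusion argument is correct and clean: applying \cref{thm:q-mass-identity-subgroups} with $T=V$ shows that all of the $q_\psi$-mass already sits on $V$, whence $\supp(q_\psi)\subseteq V$. The paper reaches the same conclusion by a different route, arguing directly from the convolution $q_\psi(x)=\sum_a p_\psi(a)p_\psi(x+a)$ together with \cref{lem:p-support-sympcomp}: if $x\notin V$ then for every $a\in V$ one has $x+a\notin V$, so every summand vanishes. Both proofs are short; yours trades the paper's pointwise reasoning for a single global mass computation.

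Your reverse-inclusion argument, however, has a genuine gap. The reduction to the quotient $V/W$, the observations that $p_\psi$ is constant on $W$-cosets, that $0\in\tilde S$, that $\spn(\tilde S)=V/W$, and the cardinality bound $|\tilde S|\ge 2^n/|W|$ are all fine. But the decisive step---deducing $\tilde S+\tilde S=V/W$ from that size bound together with unspecified ``quadratic relations'' coming from $\rho^2=\rho$---is asserted, not proved. The density $|\tilde S|/|V/W|\ge |W|/2^n$ can be arbitrarily small (take $W=\{0\}$), so cardinality alone cannot force the sumset to fill $V/W$, and you have not identified which consequence of purity would close the gap. As written this is a plan, not a proof.

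It is also worth noting that the paper itself does \emph{not} establish the reverse inclusion: despite the word ``is'' in the statement of \cref{cor:q-support-sympcomp}, the paper's proof only shows $\supp(q_\psi)\subseteq V$, and that containment is all that the downstream applications (\cref{cor:q-support-dimension} and the distinguisher in \cref{sec:pseudorandomness}) actually use. So you are attempting strictly more than the paper; if your goal is to match the paper, the forward inclusion alone suffices.
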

\begin{proof}
    Suppose $x \not \in \weyl(\ket\psi)^\sympcomp$.
    We want to show that $q_\psi(x) = 0$. %
    By the definition of $q_\psi$ and by \cref{lem:p-support-sympcomp},
    \[
    q_\psi(x) = \sum_{a \in \F_2^{2n}} p_\psi(a)p_\psi(x + a) = \sum_{a \in \weyl(\ket\psi)^\sympcomp} p_\psi(a) p_\psi(x+a),
    \]
because $p_\psi(a) = 0$ for $a \not\in \weyl(\ket\psi)^\sympcomp$. In the right-most sum, since $a \in \weyl(\ket\psi)^\sympcomp$, $x + a \notin \weyl(\ket\psi)^\sympcomp$ if and only if $x \not\in \weyl(\ket\psi)^\sympcomp$. So, applying \cref{lem:p-support-sympcomp} again, $p_\psi(x + a) = 0$ for each term in the sum, implying that the total sum is $0$. \qedhere
\end{proof}

\begin{corollary}
\label{cor:q-support-dimension}
    Let $\ket\psi$ be the output state of a $t$-doped Clifford circuit. Then the support of $q_{\psi}$ is a subspace of dimension at most $n+2t$.
\end{corollary}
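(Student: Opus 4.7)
The plan is to directly chain together the three facts established immediately above: the stabilizer dimension lower bound for $t$-doped circuits, the identification of the support of $q_\psi$ with $\weyl(\ket\psi)^\sympcomp$, and the dimension formula for symplectic complements.

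First, I would invoke \cref{lem:arbitrary-gate-dimension} to conclude that $\dim \weyl(\ket\psi) \geq n - 2t$. Next, I would apply \cref{cor:q-support-sympcomp} to note that the support of $q_\psi$ is exactly $\weyl(\ket\psi)^\sympcomp$, which in particular is a subspace of $\F_2^{2n}$ (so the ``subspace'' portion of the conclusion is automatic). Finally, I would apply the dimension identity $\dim T + \dim T^\sympcomp = 2n$ from \cref{fact:symp-vector-space-facts} with $T = \weyl(\ket\psi)$ to obtain
\[
\dim \weyl(\ket\psi)^\sympcomp = 2n - \dim \weyl(\ket\psi) \leq 2n - (n - 2t) = n + 2t,
\]
which is the desired bound.

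There is no real obstacle here; the corollary is a one-line consequence of work already done, and the only thing to check is that the inequalities combine in the right direction (a larger stabilizer dimension of $\ket\psi$ yields a smaller symplectic complement, which is consistent with more structure/less support when there are fewer non-Clifford gates). I would not bother with any additional calculation beyond the display above.
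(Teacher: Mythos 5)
Your proposal is correct and matches the paper's proof exactly: both apply \cref{lem:arbitrary-gate-dimension} to lower bound the stabilizer dimension, use \cref{cor:q-support-sympcomp} to identify the support of $q_\psi$ with $\weyl(\ket\psi)^\sympcomp$, and invoke the dimension identity from \cref{fact:symp-vector-space-facts}. You have merely written out the arithmetic that the paper leaves implicit.
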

\begin{proof}
     By \Cref{lem:arbitrary-gate-dimension}, the dimension of $\weyl(\ket\psi)$ is at least $n-2t$, implying the dimension of $\weyl(\ket\psi)^\sympcomp$ is at most $n+2t$. The result follows from \Cref{cor:q-support-sympcomp}. 
\end{proof}

\subsection{Anticoncentration of Haar-Random States}
Now we show that if $\ket\psi$ is Haar-random, then $q_\psi$ is well-supported over the entirety of $\F_2^{2n}$ in the sense that every proper subspace of $\F_2^{2n}$ contains a bounded fraction of the $q_{\psi}$ mass. This implies that sampling from $q_{\psi}$ gives $2n$ linearly independent elements of $\F_2^{2n}$ after a reasonable number of iterations.

We first require the following lemma, which shows that the Weyl measurements are concentrated around $0$. Proved in \cite{grewal_et_al:LIPIcs.ITCS.2023.64}, this is a consequence of L\'{e}vy's lemma.

\begin{lemma}[{\cite[Corollary 22]{grewal_et_al:LIPIcs.ITCS.2023.64}}]
\label{lem:all_weyl_concentration}
Let $\ket{\psi}$ be a Haar-random $n$-qubit state. Then
\[
\Pr\left[\exists x \neq 0 : \abs{\braket{\psi | W_x | \psi}} \ge \eps \right] \le 2^{2n + 1}\exp\left(-\frac{2^n \eps^2}{36 \pi^3} \right).
\]
\end{lemma}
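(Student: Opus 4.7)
The plan is to combine Lévy's concentration inequality for Lipschitz functions on the sphere with a union bound over all $4^n - 1$ nonzero Weyl operators. For each fixed $x \neq 0$, define the scalar function $f_x(\ket\psi) \coloneqq \braket{\psi|W_x|\psi}$. Since $W_x$ is a Hermitian unitary (the factor $i^{a\cdot b}$ in the definition exists precisely to make the Weyl operator Hermitian, as one checks by observing that $XZ$ is anti-Hermitian and its factor of $i$ corrects this on each qubit), $f_x$ is real-valued. Moreover $W_x$ is traceless for any $x \neq 0$, so integrating over the Haar measure on pure states gives $\E_{\ket\psi} f_x(\ket\psi) = 2^{-n}\tr(W_x) = 0$.

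Next I would establish the Lipschitz constant. For pure states $\ket\psi, \ket\phi$, writing $P = W_x$ for brevity,
\[
|f_x(\ket\psi) - f_x(\ket\phi)| = |\bra\psi P \ket\psi - \bra\phi P \ket\phi| \le |\bra{\psi-\phi}P\ket\psi| + |\bra\phi P \ket{\psi-\phi}| \le 2\,\|\ket\psi - \ket\phi\|,
\]
using $\|P\|_\mathrm{op} = 1$. Thus $f_x$ is $2$-Lipschitz on the unit sphere in $\C^{2^n}$, which as a real manifold is $S^{2\cdot 2^n - 1}$.

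I would then apply the standard form of Lévy's lemma: for an $L$-Lipschitz real-valued function on $S^{d-1}$ with $\ket\psi$ drawn from the uniform measure, $\Pr[|f(\ket\psi) - \E f| \ge \eps] \le 2\exp\bigl(-d\eps^2/(9\pi^3 L^2)\bigr)$ up to the usual constants. With $d = 2 \cdot 2^n$ and $L = 2$, this yields
\[
\Pr\bigl[|\braket{\psi|W_x|\psi}| \ge \eps\bigr] \le 2\exp\!\left(-\frac{2^n \eps^2}{36\pi^3}\right)
\]
for each fixed $x \neq 0$ (the constant $36$ absorbs the Lipschitz-squared factor of $4$ and the factor of $2$ from converting sphere dimension to $2^n$; any looser bookkeeping still gives the same order). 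A union bound over the $4^n - 1 < 2^{2n}$ nonzero $x \in \F_2^{2n}$ then produces the claimed $2^{2n+1}\exp(-2^n\eps^2/(36\pi^3))$.

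Since this statement is essentially a citation of a result already in the literature, I do not anticipate a genuine obstacle: the only thing to be careful about is matching the precise Lévy's lemma constants used in the reference (different sources normalize slightly differently), and verifying that the $i^{a \cdot b}$ phase in the Weyl operator convention of this paper indeed makes $W_x$ Hermitian so that Lévy's lemma applies to a real scalar function rather than separately to real and imaginary parts.
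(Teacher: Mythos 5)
Your approach is exactly what the paper attributes to the cited result: the paper does not reprove this lemma but notes it is ``a consequence of L\'evy's lemma,'' and you correctly reconstruct the argument via L\'evy's lemma plus a union bound over the $4^n - 1 < 2^{2n}$ nonzero Weyl operators. The Hermiticity of $W_x$ (so that the scalar is real), the vanishing of $\E_\psi \braket{\psi|W_x|\psi}$ via tracelessness, and the Lipschitz bound of $2$ are all right. One bookkeeping point is slightly off, though: with the L\'evy form you quote, namely $\Pr[|f - \E f| \ge \eps] \le 2\exp(-d\eps^2/(9\pi^3 L^2))$ with $d = 2\cdot 2^n$ and $L = 2$, you would get $2\exp(-2\cdot 2^n\eps^2/(36\pi^3)) = 2\exp(-2^n\eps^2/(18\pi^3))$, i.e.\ a denominator of $18\pi^3$, not $36\pi^3$. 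Your remark that the factor of $2$ from the real sphere dimension is ``absorbed'' into the $36$ is not right --- that factor sits in the numerator and would \emph{improve} the bound. This is harmless (a $18\pi^3$ bound is stronger than the stated $36\pi^3$ one), and the extra slack in the paper's constant presumably comes from the precise normalization of L\'evy's lemma used in the cited reference, but you should be aware the arithmetic in your proposal does not actually reproduce the stated constant as claimed.
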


Combining with the fact (\cref{thm:q-mass-identity-subgroups}) that the $q_\psi$ mass on a subspace is proportional to its $p_\psi^2$ mass on the symplectic complement, we obtain the following.

\begin{lemma}
    \label{lem:proper-subspace-concentration}
    Let $\ket{\psi}$ be a Haar-random $n$-qubit state. Then all subspaces $T \subseteq \F_2^{2n}$ of dimension $2n-1$ simultaneously satisfy
    \[
    \sum_{x \in T} q_{\psi}(x) \leq \frac{2}{3},
    \]
    except with probability at most
    \[
    2^{2n + 1} \exp \left(-\frac{2^n}{36 \sqrt{3} \pi^3} \right).
    \]
\end{lemma}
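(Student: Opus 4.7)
The plan is to combine the identity from \cref{thm:q-mass-identity-subgroups} with the concentration bound of \cref{lem:all_weyl_concentration}. The key observation is that a subspace $T \subseteq \F_2^{2n}$ of dimension $2n-1$ has symplectic complement $T^\sympcomp$ of dimension exactly $1$, so $T^\sympcomp = \{0, y\}$ for some nonzero $y \in \F_2^{2n}$. This makes the right-hand side of \cref{thm:q-mass-identity-subgroups} very explicit.

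First, I would apply \cref{thm:q-mass-identity-subgroups} to write
\[
\sum_{x \in T} q_\psi(x) \;=\; \abs{T}\sum_{x \in T^{\sympcomp}} p_\psi(x)^2 \;=\; 2^{2n-1}\Bigl(p_\psi(0)^2 + p_\psi(y)^2\Bigr).
\]
Since $W_0 = I$, we have $p_\psi(0) = 2^{-n}$, and by definition $p_\psi(y)^2 = 2^{-2n}\braket{\psi|W_y|\psi}^4$. Plugging in gives the clean expression
\[
\sum_{x \in T} q_\psi(x) \;=\; \tfrac{1}{2} \;+\; \tfrac{1}{2}\braket{\psi|W_y|\psi}^4.
\]
Thus the desired bound $\sum_{x\in T} q_\psi(x) \le \tfrac{2}{3}$ is equivalent to $\abs{\braket{\psi|W_y|\psi}} \le 3^{-1/4}$.

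Next, I would take a union bound over the nonzero element of $T^\sympcomp$ as $T$ ranges over all $(2n{-}1)$-dimensional subspaces; equivalently, I just need $\abs{\braket{\psi|W_y|\psi}} \le 3^{-1/4}$ simultaneously for \emph{every} nonzero $y \in \F_2^{2n}$. Applying \cref{lem:all_weyl_concentration} with $\eps = 3^{-1/4}$, this fails with probability at most
\[
2^{2n+1}\exp\!\left(-\frac{2^n \cdot 3^{-1/2}}{36\pi^3}\right) \;=\; 2^{2n+1}\exp\!\left(-\frac{2^n}{36\sqrt{3}\pi^3}\right),
\]
matching the claimed probability exactly.

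There is no real obstacle here; the argument is essentially mechanical once one notices that $T^\sympcomp$ being $1$-dimensional collapses the right-hand side of \cref{thm:q-mass-identity-subgroups} to a single nontrivial Weyl expectation. The only mild subtlety is that different $T$'s can share the same $y$ (so the union bound over $y$ is actually tighter than a union bound over $T$), but this only improves the bound, and the stated probability follows directly from the Haar concentration lemma.
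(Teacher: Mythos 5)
Your proof is correct and follows exactly the same route as the paper: apply \cref{thm:q-mass-identity-subgroups} to reduce the $q_\psi$-mass on a hyperplane $T$ to $\tfrac{1}{2}(1 + \braket{\psi|W_y|\psi}^4)$ for the single nonzero $y \in T^\sympcomp$, then invoke \cref{lem:all_weyl_concentration} with $\eps = 3^{-1/4}$. No substantive difference.
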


\begin{proof}
    Let $T$ be any subspace of dimension $2n-1$. Then the symplectic complement $T^\perp$ has dimension $1$, so it is the span of a single nonzero $x \in \F_2^{2n}$. By \cref{thm:q-mass-identity-subgroups},
    \[
    \sum_{a \in T} q_\psi(a) = 2^{2n-1}\sum_{a \in T^\sympcomp} p_\psi(a)^2 = \frac{1 + \braket{\psi|W_x|\psi}^4}{2}.
    \]
    Hence, the probability that there exists a $T$ for which $\sum_{x \in T} q_\psi(x)$ exceeds $\frac{2}{3}$ is at most the probability that there exists a nonzero $x$ for which $\abs{\braket{\psi|W_x|\psi}} \ge \sqrt[4]{\frac{1}{3}}$. By \cref{lem:all_weyl_concentration}, this probability is at most $2^{2n + 1} \exp \left(-\frac{2^n}{36 \sqrt{3} \pi^3} \right)$.
\end{proof}

\subsection{Distinguishing From Haar-Random}

We are now ready to state and analyze our algorithm that, given copies of $\ket\psi$, efficiently distinguishes whether $\ket\psi$ is (i) Haar-random or (ii) a state prepared by a $(n/2-1)$-doped Clifford circuit, promised that one of these is the case. 

While the analysis is not so trivial, the algorithm itself is straightforward: Bell difference sample $O(n)$ times, and, with high probability, we will have a set of Weyl operators that span $\F_2^{2n}$ when $\ket\psi$ is Haar-random. 
On the other hand, if $\ket\psi$ is the output of an $t$-doped Clifford circuit, for $t < n/2$, this can never happen because $q_\psi$ is supported on a subspace of dimension at most $n + 2t$ (which we proved in \cref{cor:q-support-dimension}). 

\begin{algorithm}[ht]
\SetKwInOut{Promise}{Promise}
\caption{Distinguishing output of an $(n/2-1)$-doped Clifford circuit from Haar-Random}
\label{alg:distinguisher}
\KwInput{$24n + 18 \log(2/\delta)$ copies of $\ket\psi$}
\Promise{$\ket\psi$ is Haar-random or the output of an $(n/2-1)$-doped Clifford circuit}
\KwOutput{$0$ if $\ket\psi$ is Haar-random and $1$ otherwise, with probability at least $1-\delta$}
Let $m = 6n + \frac{9}{2}\log(2/\delta)$

Let $T = \{\}$

\RepTimes{$m$}{
Perform Bell difference sampling to obtain $x \in \F_2^{2n}$

Add $x$ to $T$
}

Compute the dimension $d$ of the span of $T$ using Gaussian elimination. 

\Return{$0$ if $d = 2n$ and $1$ otherwise.}
\end{algorithm}

To prove the correctness of \Cref{alg:distinguisher}, we need the following lemma.

\begin{lemma}\label{lem:haar_sampling}
Let $\ket\psi$ be an $n$-qubit Haar-random quantum state and fix $\delta > 0$. Taking $6n + \frac{9}{2}\log(2/\delta)$ samples from $q_\psi$ suffices to sample $2n$ linearly independent elements of $\F_2^{2n}$ with probability at least $1-\delta$ over both the Haar measure and the sampling process. 
\end{lemma}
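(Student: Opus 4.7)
My plan is to combine \cref{lem:proper-subspace-concentration} with a union bound over the hyperplanes of $\F_2^{2n}$. First, I would apply \cref{lem:proper-subspace-concentration} to obtain a ``good Haar event'': except with probability $2^{2n+1}\exp\!\left(-2^n/(36\sqrt{3}\pi^3)\right)$ over the draw of $\ket\psi$, every $(2n-1)$-dimensional subspace $T \subseteq \F_2^{2n}$ captures at most $2/3$ of the $q_\psi$-mass. This Haar failure probability decays doubly exponentially in $n$ and is dominated by $\delta$ once $n$ exceeds a small constant, so I would absorb it into the overall failure budget.

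Next, conditioned on this good Haar event, failing to draw $2n$ linearly independent elements from $q_\psi$ after $m$ samples is exactly the event that all $m$ samples lie inside a single proper subspace, which must in turn be contained in some hyperplane (a $(2n-1)$-dimensional subspace). For any fixed hyperplane $T$, the independence of the Bell difference samples combined with the good Haar event gives $\Pr[\text{all } m \text{ samples} \in T] \leq (2/3)^m$. Union-bounding over the $2^{2n}-1$ hyperplanes of $\F_2^{2n}$ (which are in bijection with the nonzero elements of $\F_2^{2n}$ via symplectic duality) then yields
\[
\Pr[\text{sampling fails} \mid \text{good Haar}] \leq (2^{2n}-1)\paren{\tfrac{2}{3}}^m.
\]

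Finally, I would verify that $m = 6n + \tfrac{9}{2}\log(2/\delta)$ makes the right-hand side at most $\delta/2$. Taking natural logarithms, this reduces to checking $m \log(3/2) \geq 2n\log 2 + \log(2/\delta)$, which splits into $6n \log(3/2) \geq 2n \log 2$ (equivalent to $(3/2)^3 \geq 2$, which is $27/8 \geq 2$) and $\tfrac{9}{2}\log(3/2) \geq 1$ (equivalent to $\log(3/2) \geq 2/9$, which holds since $\log(3/2) \approx 0.405$). Both inequalities are elementary, and combining with the Haar failure gives total failure probability at most $\delta$.

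The main delicate point is the interaction between the two independent sources of randomness (Haar draw and sampling). However, since the good Haar event is a property of $\ket\psi$ alone and the bound on $\Pr[\text{sampling fails}\mid \ket\psi]$ only uses anticoncentration of $q_\psi$, both failure modes can be cleanly additively combined. For very small $n$ where the Haar failure term is not yet negligible, the lemma can be obtained by slightly inflating $m$ or by restricting to $n$ above a small absolute constant; neither adjustment affects the asymptotic statement.
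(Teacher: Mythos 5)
Your proof is correct and takes a genuinely different route from the paper. The paper's proof is a martingale argument: it introduces indicator variables $X_i$ that mark whether the $i$-th Bell difference sample is linearly independent of the first $i-1$ (or whether the span is already all of $\F_2^{2n}$), uses \cref{lem:proper-subspace-concentration} to show $\E[X_i \mid X_1,\dots,X_{i-1}] \geq 1/3$, and then applies Hoeffding's inequality to concentrate $\sum_i X_i$ above $2n$. You instead observe that failure to span $\F_2^{2n}$ means all $m$ samples lie in some common hyperplane, bound the per-hyperplane probability by $(2/3)^m$ using independence of the samples conditioned on $\ket\psi$, and union-bound over the $2^{2n}-1$ hyperplanes. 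Your route is more elementary (no martingale machinery) and in fact yields a slightly better constant: $m\ln(3/2) \geq 2n\ln 2 + \ln(2/\delta)$, i.e.\ roughly $m \geq 3.42n + 2.47\ln(2/\delta)$ suffices, so the lemma's stated $m = 6n + \tfrac{9}{2}\log(2/\delta)$ has comfortable slack. The paper's martingale approach has the virtue of transferring more directly to the stabilizer-fidelity setting (\cref{lem:fidelity_finds_samples}), where one needs to accumulate generators of a target subspace rather than all of $\F_2^{2n}$, but a union bound over maximal proper subspaces of $S^*$ would also work there. One small inaccuracy in your closing remark: inflating $m$ does \emph{not} help with the Haar-measure failure term, since that term depends only on $\ket\psi$ and not on the number of samples; the only fix is restricting to $n$ above a small constant or disallowing doubly-exponentially small $\delta$, which is the same caveat the paper relegates to a footnote.
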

\begin{proof}
    For samples $x_1, \cdots x_m \in \F_2^{2n}$, let $T_i = \langle x_1, \dots , x_{i-1} \rangle$ be the subspace spanned by the first $i-1$ samples for arbitrary $1 \leq i \leq m$.
    Define the $\{0, 1\}$ indicator random variable 
    \[
    X_i  \coloneqq \begin{cases}
        1 & \text{$x_i \not \in T_i$ or $T_i = \F_2^{2n}$}\\
        0 & \text{otherwise}
    \end{cases}
    \]
    such that we have achieved our goal if and only if $\sum_{i=1}^m X_i \geq 2n$.
    We see that if $T_i = \F_2^{2n}$ then $\E[X_i] = 1 \geq \frac{1}{3}$.
    Otherwise, the probability that $x \not\in T_i$ is $\sum_{x \in \F_2^{2n} \setminus T_i} q_\psi(x)$.
    Let $T_i'\supseteq T_i$ be some arbitrary $2n-1$ dimensional extension of $T_i$, such that \[\sum_{x \in T_i} q_\psi(x) \leq \sum_{x \in T_i'} q_\psi(x).\]
    By \cref{lem:proper-subspace-concentration}, $\sum_{x \in T_i'} q_\psi(x) \leq \frac{2}{3}$ for all $T'$, with overwhelmingly high probability over the Haar measure.
    Let us assume that this has happened.
    Since $\sum_{x \in \F_2^{2n}} q_\psi(x) = 1$, we know that in the scenario where $T_i \neq \F_2^{2n}$ that $\sum_{x \in \F_2^{2n} \setminus T_i} q_\psi(x) \geq \frac{1}{3}$.
    Since both scenarios give expectation over $\frac{1}{3}$, this gives us $\E[X_i \mid X_1, \dots, X_{i-1}] \geq \frac{1}{3}$ for any assignment of $X_1,\dots,X_{i-1}$.
    By Hoeffding's inequality (\cref{fact:hoeffding}), 
    \[
        \Pr\left[\sum_{i=1}^m X_i \leq 2n\right] \leq \Pr\left[\sum_{i=1}^m X_i \leq \E\left[\sum_{i=1}^m X_i\right]  - \frac{m}{3} + 2n\right] \leq e^{-\frac{2}{m}\left(\frac{m}{3} - 2n\right)^2}.\footnote{A cautious reader may object that the random variables here are not independent, which is needed for Hoeffding's inequality. However, because the conditional expectations are at least $1/3$, the sum (first-order) stochastically dominates a sum of independent Bernoulli random variables with means $1/3$. Hence, the left tail probability of the former can only be smaller than the latter. }
    \]
    Writing $m = (6+\gamma)n$, this bound becomes
    \[
    \exp\left(-\frac{2\gamma^2n}{9(6+\gamma)}\right) \leq \exp\left(-\frac{2\gamma n}{9}\right)
    \]
    
    Thus taking $\gamma = \frac{9}{2n}\log(2/\delta)$, we see that after $m = 6n + \frac{9}{2}\log(2/\delta)$ samples, this probability is at most $\delta/2$. By the union bound, the total failure probability over both the Haar measure and the samples is at most 
    \[
    \frac{\delta}{2} + 2^{2n+1}\exp \left(- \dfrac{2^{n}}{36 \sqrt{3}\pi^3}\right)
    \]
    which in turn is at most $\delta$, for reasonable choices of $\delta$.\footnote{Of course, the union bound fails when $\delta$ is doubly exponentially small, as our bound for the error over the Haar measure is $2^{-2^{O(n)}}$. However, in this setting, it is information-theoretically impossible to distinguish a state from Haar-random.} 
\end{proof}

\begin{theorem}
    \Cref{alg:distinguisher} succeeds with probability at least $1-\delta$, and it uses $O(n+\log(1/\delta))$ copies of the input state and $O(n^3+n^2\log(1/\delta))$ time. 
    \label{thm:prs_distinguisher}
\end{theorem}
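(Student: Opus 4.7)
The plan is to handle the two cases of the promise separately and then verify the resource bounds. In the Haar-random case, correctness follows essentially immediately from \cref{lem:haar_sampling}: with $m = 6n + \tfrac{9}{2}\log(2/\delta)$ Bell difference samples, the samples span $\F_2^{2n}$ (i.e., have dimension $d = 2n$) with probability at least $1-\delta$, so the algorithm outputs $0$ with probability at least $1-\delta$.

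In the doped-Clifford case with $t = n/2 - 1$, I would argue that the algorithm outputs $1$ with probability $1$ (no sampling error even enters). By \cref{cor:q-support-dimension}, the support of $q_\psi$ is contained in a subspace of $\F_2^{2n}$ of dimension at most $n + 2t = 2n - 2$. Hence every sampled vector lies in a fixed subspace of dimension at most $2n - 2$, and so the computed dimension $d$ is at most $2n - 2 < 2n$, forcing the algorithm to return $1$. Combining the two cases gives total failure probability at most $\delta$.

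For the sample and time bounds: each Bell difference sample consumes four copies of $\ket\psi$, so the algorithm uses $4m = 24n + 18\log(2/\delta) = O(n + \log(1/\delta))$ copies, matching the stated input size. A single Bell difference sample requires $O(n)$ Bell-basis measurements and classical postprocessing, contributing $O(nm)$ time overall. The Gaussian elimination step reduces an $m \times 2n$ matrix over $\F_2$ and therefore runs in $O(m \cdot n^2)$ time, which dominates. With $m = O(n + \log(1/\delta))$, this is $O(n^3 + n^2\log(1/\delta))$ as claimed.

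There is no real obstacle here; the work has already been done by \cref{cor:q-support-dimension} (which rules out full span in the doped case deterministically) and \cref{lem:haar_sampling} (which establishes full span in the Haar case with high probability). The only thing to be slightly careful about is to note that in the doped case the soundness is unconditional, so no additional union bound against the Haar-measure failure event is needed, and the overall $\delta$ failure probability comes entirely from the Haar case analyzed in \cref{lem:haar_sampling}.
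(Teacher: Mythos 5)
Your proof is correct and follows essentially the same approach as the paper's: soundness in the doped-Clifford case is deterministic via \cref{cor:q-support-dimension}, completeness in the Haar-random case follows from \cref{lem:haar_sampling}, and the resource bounds come from counting Bell difference samples and the cost of Gaussian elimination on an $m \times 2n$ matrix.
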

\begin{proof}
    In the case that $\ket\psi$ is the output of a $t$-doped Clifford circuit, $q_\psi$ is supported on a subspace of dimension at most $n+2t$ by \Cref{cor:q-support-dimension}, so the algorithm always outputs $1$. 
    Therefore, we need only argue that in the case that $\ket \psi$ is Haar-random we see at least $2n$ linearly independent elements of $\F_2^{2n}$. By \cref{lem:haar_sampling}, this happens with probability at least $1-\delta$ over the Haar measure and the sampling process.

    It is clear that the sample complexity is $O(m) = O(n + \log(1/\delta))$. To analyze the time complexity, we note that each sample takes $O(n)$ time, so sampling takes $O(mn)$ time. Gaussian elimination takes $O(mn^2) = O(n^3+n^2\log(1/\delta))$ time and is the dominating term.
\end{proof}

Our distinguishing algorithm immediately implies a lower bound on the number of non-Clifford gates needed to prepare computationally pseudorandom quantum states. 

\begin{corollary}
    \label{cor:prs-linear-lower-bound}
    Any family of $t$-doped Clifford circuits that produces an ensemble of $n$-qubit computationally pseudorandom quantum states must satisfy $t \ge n/2$.
\end{corollary}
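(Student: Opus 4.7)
The plan is to prove \Cref{cor:prs-linear-lower-bound} by a direct contrapositive argument, using the distinguishing algorithm (\Cref{alg:distinguisher}) as the attacker that breaks pseudorandomness. Specifically, suppose for contradiction that there is a family of $t$-doped Clifford circuits with $t < n/2$ whose output states form a computationally pseudorandom ensemble. I will show that \Cref{alg:distinguisher}, which runs in time $\poly(n)$ and consumes only $O(n)$ copies, distinguishes this ensemble from Haar-random with constant advantage, contradicting the definition of pseudorandomness.

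The key observation is that the two cases of the distinguisher have completely disjoint behavior in one coordinate. If $\ket\psi$ is the output of a $t$-doped Clifford circuit with $t \le n/2 - 1$, then by \Cref{cor:q-support-dimension} the support of $q_\psi$ lies in a subspace of dimension at most $n + 2t \le 2n - 2$, so any $m$ samples drawn via Bell difference sampling span a subspace of dimension strictly less than $2n$, and the algorithm outputs $1$ with probability $1$. On the other hand, if $\ket\psi$ is Haar-random, then by \Cref{lem:haar_sampling}, with $m = 6n + \frac{9}{2}\log(2/\delta)$ samples the algorithm obtains $2n$ linearly independent elements of $\F_2^{2n}$ and outputs $0$ with probability at least $1 - \delta$. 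Setting $\delta$ to be a small constant (say $1/3$) gives a constant distinguishing advantage.

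Finally, I will verify the resource bounds: the distinguisher uses $O(n)$ copies of $\ket\psi$ and runs in $O(n^3)$ classical time, which is polynomial in $n$. Hence the distinguisher constitutes an efficient quantum algorithm that separates the ensemble from the Haar measure with advantage $\Omega(1)$, which is non-negligible in $n$. This contradicts the assumption that the ensemble is computationally pseudorandom, so we must have $t \ge n/2$.

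The main obstacle is really not an obstacle at all, since the heavy lifting has already been carried out in the preceding subsections: \Cref{lem:arbitrary-gate-dimension} establishes the stabilizer dimension lower bound for doped circuits, and \Cref{lem:proper-subspace-concentration} together with \Cref{lem:haar_sampling} give the Haar-random anticoncentration needed to guarantee that Bell difference sampling produces a full-rank set. The proof of the corollary is thus a short invocation of these results and the definition of pseudorandom states.
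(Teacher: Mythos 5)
Your proposal is correct and matches the paper's intent exactly: the paper states this corollary as an immediate consequence of the correctness of \Cref{alg:distinguisher} (established just above it) and does not spell out the contrapositive argument, which you have correctly filled in. The only nitpick is that $n+2t \le 2n-2$ requires $n$ even; for odd $n$ the best integer $t < n/2$ gives $n+2t = 2n-1$, but either way the dimension is strictly below $2n$, which is all that is needed for the algorithm to always output $1$ in that case.
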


\begin{proof}
    Let $\ket{\varphi_k}_{k \in \{0,1\}^\kappa}$ be an ensemble of $n$-qubit pseudorandom states, as in \cref{def:prs}. If each $\ket{\varphi_k}$ were computable by a $t$-doped Clifford circuit for some $t < n/2$, then \cref{thm:prs_distinguisher} implies that \cref{alg:distinguisher} with (say) $\delta = 1/3$ would violate the computational indistinguishability criterion in \cref{def:prs}, because it runs in $\poly(n) = \poly(\kappa)$ time. Hence, we must have $t \ge n/2$.
\end{proof}

Note that this lower bound can be improved by a factor of $2$ in the special case that all of the non-Clifford gates are diagonal (e.g.\ $T$-gates), because of the improved lower bound on stabilizer dimension in \cref{lem:arbitrary-gate-dimension} for this case.

\begin{corollary}
    \label{cor:prs-linear-lower-bound-t-gate}
    Any family of Clifford+$T$ circuits that produces an ensemble of $n$-qubit computationally pseudorandom quantum states must use at least $n$ $T$-gates.
\end{corollary}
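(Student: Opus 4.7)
The plan is to follow the proof of \cref{cor:prs-linear-lower-bound} essentially line by line, with the single modification of replacing \cref{lem:arbitrary-gate-dimension} by a sharper version that exploits the fact that the $T$-gate is diagonal. Everything downstream (the support lemma \cref{lem:p-support-sympcomp}, the support corollary \cref{cor:q-support-sympcomp}, and the distinguishing algorithm \cref{alg:distinguisher}) is gate-agnostic and so will carry over unchanged.

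The key step is to prove that if $\ket\psi$ is the output of a $t$-doped Clifford circuit in which all non-Clifford gates are diagonal (in particular, $T$-gates), then the stabilizer dimension of $\ket\psi$ is at least $n-t$ rather than $n-2t$. This is exactly the bound that the authors remarked on right after \cref{lem:arbitrary-gate-dimension}. I would prove it by repeating the induction in the proof of \cref{lem:arbitrary-gate-dimension} verbatim, except for the estimate of $|T|/|S|$ in the inductive step. When $U$ is diagonal, $U$ commutes not only with $I$ but also with $Z$ on its qubit, so the set
\[
T \coloneqq \{x \in S : U W_x U^\dagger = W_x\}
\]
contains every $x \in S$ for which $W_x$ restricts to either $I$ or $Z$ on the qubit where $U$ acts. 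This yields $|T| \ge |S|/2$ instead of $|S|/4$, so each diagonal non-Clifford gate shrinks the stabilizer dimension by at most $1$, not $2$.

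With this sharpened lemma in hand, I would re-run the pipeline leading up to \cref{cor:q-support-dimension} and conclude that for a $t$-doped Clifford+$T$ circuit, $q_\psi$ is supported on $\weyl(\ket\psi)^{\sympcomp}$, which by \cref{fact:symp-vector-space-facts} has dimension at most $2n - (n-t) = n + t$. Now if $t < n$, then $n + t < 2n$, so $q_\psi$ is supported on a proper subspace of $\F_2^{2n}$, and hence Bell difference samples from $\ket\psi$ can never span $\F_2^{2n}$. Consequently \cref{alg:distinguisher} deterministically outputs $1$ on such a state, while \cref{lem:haar_sampling} still guarantees it outputs $0$ with high probability on a Haar-random state. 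Contrapositively, any pseudorandom family of Clifford+$T$ circuits must use at least $n$ $T$-gates, establishing \cref{cor:prs-linear-lower-bound-t-gate}.

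There is really no hard step here; the whole argument is a bookkeeping adjustment on top of \cref{cor:prs-linear-lower-bound}. The only place to be careful is the $|T| \ge |S|/2$ count, where one must ensure that the enlarged ``commuting'' set $\{I,Z\}$ on the active qubit still yields a subspace of $S$ of the claimed codimension; this is immediate because restricting a Weyl operator to a single qubit is a linear map $S \to \F_2^2$, and the preimage of the $1$-dimensional subspace $\{I,Z\}$ has codimension at most $1$ in $S$.
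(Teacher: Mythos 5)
Your proposal is correct and matches the paper's intended argument exactly: the paper's own justification is precisely the remark after \cref{lem:arbitrary-gate-dimension} that diagonal gates commute with both $I$ and $Z$, so the codimension loss per gate is $1$ rather than $2$, which pushes the support-dimension bound in \cref{cor:q-support-dimension} down to $n+t$ and hence requires $t \ge n$ for \cref{alg:distinguisher} to fail. Your careful check that the preimage of $\{I,Z\}$ under the single-qubit restriction map is a codimension-$\le 1$ subspace of $S$ is a nice way to make the ``$|T|\ge|S|/2$'' step rigorous.
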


\section{Stabilizer State Approximations}
\label{sec:stabilizer-approximations}
We state and analyze our algorithm that, given copies of an $n$-qubit quantum pure state $\ket\psi$ that has stabilizer fidelity at least $\tau$, outputs a succinct description of a stabilizer state that witnesses fidelity at least $\tau - \eps$. 

Our presentation is split into two parts. 
First, in \cref{subsec:makes-progress}, we prove a useful lemma regarding $q_{\psi}$ on $S^* = \weyl(\ket\phi)$, where $\ket{\phi}$ is the stabilizer state that maximizes stabilizer fidelity with $\ket\psi$. 
At a high level, we argue that any sample from $q_{\psi}$ has a good enough chance of \say{making progress} towards learning a complete set of generators for $S^*$. Formally, we prove that the $q_{\psi}$-mass on $S^*$ is not heavily concentrated on proper subspaces of $S^*$, so that when we sample an element of $S^*$, we obtain an element of $S^*$  that is linearly independent of the previous samples with a reasonable probability.
Second, in \cref{subsec:stab-fidelity-algo}, we state our algorithm, prove its correctness, and analyze its sample and time complexities. 

\subsection{Bell Difference Sampling Makes Progress}
\label{subsec:makes-progress} 

The next lemma gives a way to argue that in many of our proofs, we can suppose without loss of generality that $\ket{0^n}$ maximizes stabilizer fidelity.

\begin{lemma}
\label{lem:clifford_invariance}
Given an $n$-qubit stabilizer state $\ket{\psi}$, let $S = \weyl(\ket\psi)$ be its unsigned stabilizer group, and let $T \subseteq S$ be a subspace of dimension $n - t$. Then there exists a Clifford circuit $C$ such that $C\ket{\psi} = \ket{0^n}$, $C(S) = 0^n \times \F_2^n$, and $C(T) = 0^{n+t} \times \F_2^{n-t}$.
\end{lemma}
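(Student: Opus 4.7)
The plan is to build $C$ in two stages: first, a Clifford $C_1$ that maps $\ket{\psi}$ to $\ket{0^n}$ (and automatically carries $S$ to the canonical unsigned stabilizer group of $\ket{0^n}$); second, a $\mathrm{CNOT}$ circuit $C_2$ that fixes $\ket{0^n}$ and rotates the image of $T$ into the prescribed coordinate-aligned subspace. Setting $C \coloneqq C_2 C_1$ will then satisfy all three required properties simultaneously.

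For the first stage, I would invoke the standard fact that the Clifford group acts transitively on stabilizer states, so there exists a Clifford $C_1$ with $C_1 \ket{\psi} = \ket{0^n}$. A direct computation from the definition $W_{(a,b)} = i^{a \cdot b} X^a Z^b$ shows that $\weyl(\ket{0^n}) = 0^n \times \F_2^n$: any nonzero $a$ takes $\ket{0^n}$ out of the computational basis, while $Z^b \ket{0^n} = \ket{0^n}$ for every $b$. Because Clifford conjugation permutes Weyl operators up to sign, $C_1$ necessarily sends $S = \weyl(\ket{\psi})$ to $\weyl(\ket{0^n}) = 0^n \times \F_2^n$. Hence $T' \coloneqq C_1(T)$ sits inside $0^n \times \F_2^n$ as an $(n-t)$-dimensional subspace.

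For the second stage, I would build $C_2$ using $\mathrm{CNOT}$ gates. A quick check of the conjugation rules $\mathrm{CNOT}_{i,j} Z_i \mathrm{CNOT}_{i,j} = Z_i$ and $\mathrm{CNOT}_{i,j} Z_j \mathrm{CNOT}_{i,j} = Z_i Z_j$ shows that $\mathrm{CNOT}$ gates implement elementary row operations on the $b$-block of the Weyl index. Since elementary row operations generate $GL_n(\F_2)$, $\mathrm{CNOT}$ circuits realize the entire $GL_n(\F_2)$ action on $0^n \times \F_2^n$. Viewing $T'$ and $0^{n+t} \times \F_2^{n-t}$ as $(n-t)$-dimensional subspaces of $\F_2^n$ (in the $b$-block), a routine change-of-basis yields some $M \in GL_n(\F_2)$ that maps the former to the latter, and I realize $M$ by a corresponding $\mathrm{CNOT}$ circuit $C_2$. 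Crucially, every $\mathrm{CNOT}$ circuit fixes $\ket{0^n}$ on the nose (no global phase arises), so $C = C_2 C_1$ inherits $C \ket{\psi} = \ket{0^n}$ from $C_1$, $C(S) = 0^n \times \F_2^n$ since both $C_1$ and $C_2$ preserve this subspace, and $C(T) = C_2(T') = 0^{n+t} \times \F_2^{n-t}$ by construction.

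The argument is essentially a stitching-together of well-known facts about the Clifford group, and I do not foresee a substantive obstacle. The only place that requires mild care is phase tracking, but this is painless because $C_2$ is a $\mathrm{CNOT}$ circuit acting trivially on $\ket{0^n}$, so any global phase introduced by $C_1$ can be absorbed into $C_1$ itself without affecting the action on Weyl operators.
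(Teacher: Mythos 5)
Your proposal is correct and follows essentially the same route as the paper's proof: first use transitivity of the Clifford action on stabilizer states to send $\ket{\psi}$ to $\ket{0^n}$ (which automatically carries $S$ to $0^n \times \F_2^n$), then compose with a $\mathrm{CNOT}$ circuit realizing an arbitrary $\mathrm{GL}_n(\F_2)$ action on the $Z$-block to align $T$, noting that such circuits fix $\ket{0^n}$ and preserve $0^n \times \F_2^n$. Your version merely makes the two-stage factorization $C = C_2 C_1$ and the $\mathrm{CNOT}$ conjugation rules explicit, where the paper folds both steps into a single choice of $C$.
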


\begin{proof}
    Because the Clifford group acts transitively on stabilizer states, there exists a Clifford circuit $C$ such that $C\ket{\psi} = \ket{0^n}$. Because $S = \{x \in \F_2^{2n} : \bra{\psi}W_x \ket{\psi} = \pm 1\}$, this $C$ necessarily maps $S$ to $C(S) = \{x \in \F_2^{2n} : \bra{\psi}C^\dagger W_x C\ket{\psi} = \pm 1\} = 0^n \times \F_2^n$. So, it only remains to show that $C$ can be chosen so as to map $T$ to $0^{n+t} \times \F_2^{n-t}$ while preserving these properties. This holds because a CNOT gate between qubits $i$ and $j$ in its action on $\F_2^{2n}$ maps $(0^n, x) \in \F_2^{2n}$ to $(0^n, Mx)$ where $M \in \mathrm{GL}_n(\F_2)$ is an elementary matrix (in particular, a matrix equal to the identity except with the $(i, j)$ entry equal to $1$). Hence, CNOT gates between arbitrary qubits generate all of $\mathrm{GL}_n(\F_2)$. So, we can choose CNOT gates so as to map $T$ to an arbitrary subspace of $0^n \times \F_2^n$ of the same dimension, while preserving $\ket{0^n}$ and $0^n \times \F_2^n$.
\end{proof}

Now, we show that the $p_\psi$-mass on $S^*$ is bounded below by the squared stabilizer fidelity of $\ket\psi$.

\begin{lemma}\label{lem:f_lowerbound}
Given an $n$-qubit state $\ket{\psi}$, let $\ket{\phi}$ be a stabilizer state that maximizes the stabilizer fidelity, and let $S^* = \weyl(\ket{\phi})$. Then
\[\sum_{x \in S^*} p_\psi(x) \geq F_\stabset(\ket{\psi})^2.\]
\end{lemma}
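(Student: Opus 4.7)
The plan is to exploit the explicit form of the Weyl expansion of a stabilizer state: since $\ket\phi$ is stabilized (up to sign) by exactly the $2^n$ Weyl operators indexed by $S^* = \weyl(\ket\phi)$, its density matrix admits the clean expansion
\[
\ket\phi\!\!\bra\phi = \frac{1}{2^n}\sum_{x \in S^*} \chi_\phi(x)\, W_x,
\]
where $\chi_\phi(x) \coloneqq \braket{\phi | W_x | \phi} \in \{\pm 1\}$ for $x \in S^*$, and where all other Weyl coefficients vanish because $\braket{\phi|W_x|\phi} = 0$ whenever $W_x$ anticommutes with some element of the stabilizer of $\ket\phi$.

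First I would take the inner product of this expansion against $\ket\psi\!\!\bra\psi$ to rewrite the fidelity as a sum over $S^*$:
\[
F_{\stabset}(\ket\psi) = \abs{\braket{\phi|\psi}}^2 = \tr\!\paren{\ket\phi\!\!\bra\phi \cdot \ket\psi\!\!\bra\psi} = \frac{1}{2^n}\sum_{x \in S^*} \chi_\phi(x)\, \braket{\psi|W_x|\psi}.
\]
Then I would square both sides and apply the Cauchy--Schwarz inequality on $S^*$, using $\chi_\phi(x)^2 = 1$ and $\abs{S^*} = 2^n$ (since $S^*$ is Lagrangian), to obtain
\[
F_{\stabset}(\ket\psi)^2 \;\leq\; \frac{1}{4^n}\cdot \abs{S^*} \cdot \sum_{x \in S^*} \braket{\psi|W_x|\psi}^2 \;=\; \frac{1}{2^n}\sum_{x \in S^*} \braket{\psi|W_x|\psi}^2.
\]
Finally I would recognize the summand as exactly $p_\psi(x)$ by its definition $p_\psi(x) = 2^{-n}\braket{\psi|W_x|\psi}^2$, yielding the desired inequality.

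There is no real obstacle here: the whole argument is a one-line Cauchy--Schwarz once the stabilizer state's Weyl expansion is in hand. The only point requiring a moment of care is verifying that the Weyl coefficients of $\ket\phi\!\!\bra\phi$ are supported exactly on $S^*$ with $\pm 1$ signs, which is a standard consequence of the fact that the $2^n$ signed stabilizers of $\ket\phi$ form an abelian group whose uniform average is the projector onto $\ket\phi$.
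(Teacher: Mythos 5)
Your proof is correct and uses essentially the same idea as the paper: a single Cauchy--Schwarz inequality applied to the Weyl expansion of the stabilizer projector $\ket\phi\!\!\bra\phi$. The only cosmetic difference is that the paper first conjugates by a Clifford so that $\ket\phi$ becomes $\ket{0^n}$ (making all the Weyl coefficients $+1$), whereas you keep a general $\ket\phi$ and let Cauchy--Schwarz absorb the $\chi_\phi(x) = \pm 1$ signs, which avoids invoking the Clifford-invariance lemma.
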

\begin{proof}
    Since $\ket{\phi}$ maximizes the stabilizer fidelity, we can write $F_\stabset(\ket{\psi}) = \abs{\braket{\phi | \psi}}^2$.
    Let $C$ be a Clifford circuit from \cref{lem:clifford_invariance} such that $C \ket{\phi} = \ket{0^n}$ and $C(S^*) = 0^n \times \F_2^n$ (the choice of $T$ is irrelevant).
    Now let $\ket{\psi'} = C \ket{\psi}$.
    Based on this mapping,
    \[
        \sum_{x \in S^*} p_\psi(x) = \sum_{x \in C(S^*)} p_{\psi'}(x) = \sum_{x \in \F_2^{n}} p_{\psi'}(0^n, x).
    \]

    It remains to lower bound $\sum_{x \in \F_2^{n}} p_{\psi'}(0^n, x)$.
    
    \begin{align*}
        \sum_{x \in \F_2^{n}} p_{\psi'}(0^n, x) &= \frac{1}{2^n} \sum_{x \in \F_2^n} \braket{\psi' | W_{0,x} | \psi'}^2\\
        &\ge \frac{1}{4^n} \left(\sum_{x \in \F_2^n} \braket{\psi' | W_{0,x} | \psi'}\right)^2\\
        &= \frac{1}{4^n} \left(2^n \abs{\braket{\psi' | 0^n}}^2\right)^2\\
        &= \abs{\braket{\psi| C^\dagger| 0^n}}^4\\
        &= \abs{\braket{\psi | \phi}}^4 = F_\stabset(\ket{\psi})^2.
    \end{align*}
    Since we know that $\sum_{x \in \F_2^{n}} p_{\psi'}(0^n, x) \geq F_\stabset(\ket{\psi})^2$, this tells us that $\sum_{x \in S^*} p_\psi(x) \geq F_\stabset(\ket{\psi})^2$ as well.
\end{proof}
We can generalize this result to arbitrary subspaces of $S^*$.
\begin{corollary}\label{cor:f_lowerbound_subgroup}
    Given an $n$-qubit state $\ket{\psi}$, let $\ket{\phi}$ be a stabilizer state that maximizes the stabilizer fidelity, and let $S^* = \weyl(\ket{\phi})$. Let $T \subseteq S^*$ be a subspace of $S^*$. Then \[
        \sum_{x \in T} p_\psi(x) \geq \frac{\abs{T}}{2^n} F_\stabset(\ket{\psi})^2.
    \]
\end{corollary}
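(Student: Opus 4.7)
The plan is to reduce the statement to \cref{lem:f_lowerbound} by applying the duality identity \cref{thm:p-mass-identity-subgroups} to the subspace $T$, and then exploiting the fact that $S^*$ is Lagrangian.

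First, I would invoke \cref{thm:p-mass-identity-subgroups} with the subspace $T$ to rewrite
\[
\sum_{a \in T} p_\psi(a) = \frac{\abs{T}}{2^n} \sum_{x \in T^\sympcomp} p_\psi(x).
\]
Next, since $\ket{\phi}$ is a stabilizer state, $S^* = \weyl(\ket{\phi})$ is a Lagrangian subspace of $\F_2^{2n}$, meaning $(S^*)^\sympcomp = S^*$. Combining with the hypothesis $T \subseteq S^*$ and the order-reversing property from \cref{fact:symp-vector-space-facts}, we get $S^* = (S^*)^\sympcomp \subseteq T^\sympcomp$.

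Because $p_\psi$ is nonnegative, the inclusion $S^* \subseteq T^\sympcomp$ gives
\[
\sum_{x \in T^\sympcomp} p_\psi(x) \geq \sum_{x \in S^*} p_\psi(x) \geq F_\stabset(\ket\psi)^2,
\]
where the last inequality is exactly \cref{lem:f_lowerbound}. Substituting this bound into the identity above yields the desired inequality. There is no real obstacle here; the lemma is essentially a one-line corollary once one notices that the Lagrangian property of $S^*$ makes the duality in \cref{thm:p-mass-identity-subgroups} pass the mass bound from $S^*$ down to an arbitrary subspace $T \subseteq S^*$, at the cost of the expected $\abs{T}/2^n$ factor.
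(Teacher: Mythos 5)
Your proof is correct and follows exactly the same route as the paper's: apply \cref{thm:p-mass-identity-subgroups} to $T$, observe that $S^* \subseteq T^\sympcomp$ (because $T \subseteq S^*$ and $S^*$ is Lagrangian), and finish with \cref{lem:f_lowerbound}. The only difference is that you spell out the justification for $S^* \subseteq T^\sympcomp$, which the paper states without elaboration.
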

\begin{proof}
    \begin{align*}
        \sum_{x \in T} p_\psi(x) &= \frac{\abs{T}}{2^n} \sum_{x \in T^\sympcomp} p_\psi(x) & \text{(\Cref{thm:p-mass-identity-subgroups})} \\
        &\geq \frac{\abs{T}}{2^n} \sum_{x \in S^*} p_\psi(x) &  \\
        &= \frac{\abs{T}}{2^n} F_\stabset(\ket{\psi})^2,  & \text{(\Cref{lem:f_lowerbound})}
    \end{align*}
    where we have used the fact that  $S^* \subseteq T^\perp$.
\end{proof}

Now we show a series of anticoncentration lemmas on proper subspaces of $S^*$. For these next lemmas, we will find it more convenient to assume without loss of generality (because of \cref{lem:clifford_invariance}) that the state maximizing fidelity is $\ket{0^n}$, which conceptually simplifies the computations.

\begin{figure}
    \centering
    \includegraphics[width=3in]{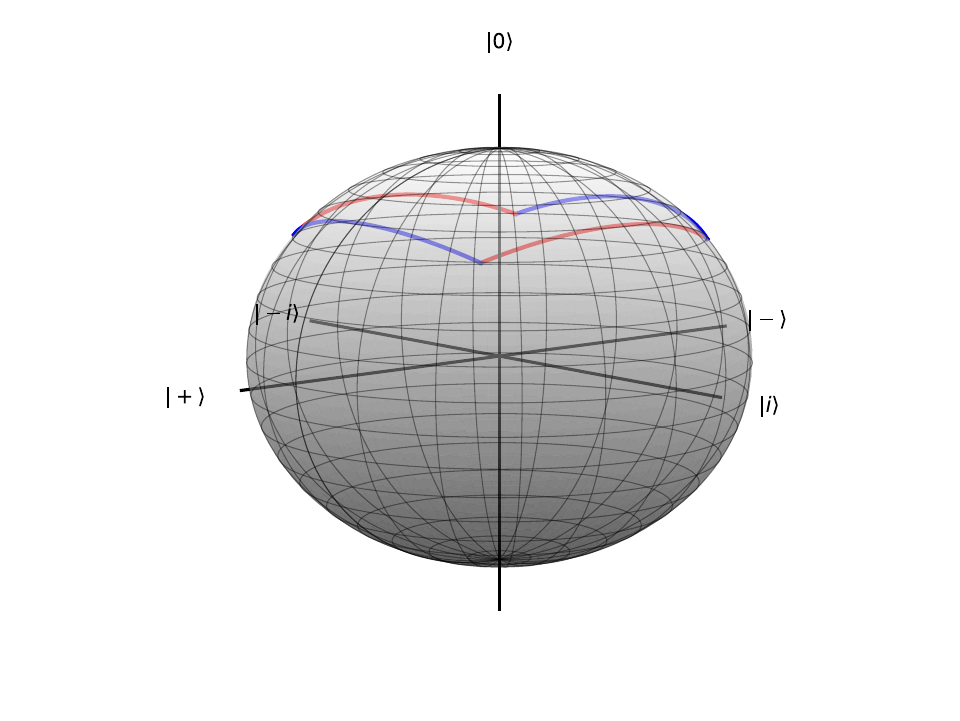}
    \includegraphics[width=3in]{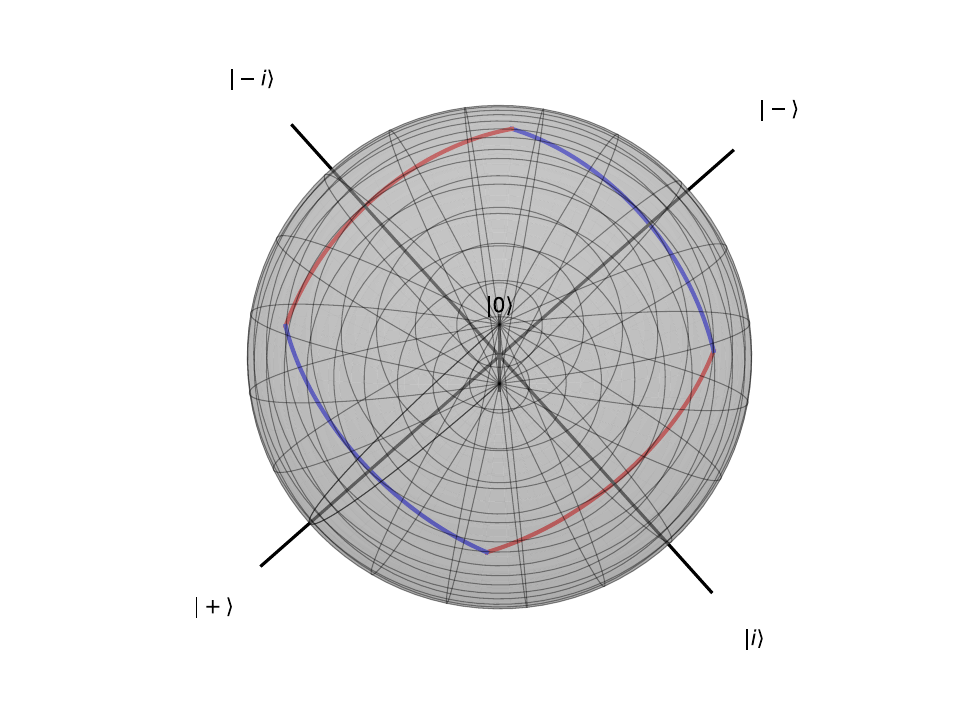}
    \caption{
    An illustration of the argument in the proof of \Cref{lem:c_mass-S-minus-T}.
    Consider the (possibly unnormalized) state $\alpha_0\ket{0^n}+\alpha_1\ket{10^{n-1}}$. We can visualize the first qubit of this state on the Bloch sphere.
    The surface enclosed by the red and blue curve is exactly the set of points on the sphere for which $\ket{0}$ is the closest stabilizer state. By our assumption that the stabilizer fidelity of $\ket\psi$ is maximized by $\ket{0^n}$, $\alpha_0 \ket{0} + \alpha_1 \ket{1}$ must lie on this surface, up to normalization. The corners of this surface (the intersection of a blue curve with a red curve) represent a choice of $\alpha_0$ and $\alpha_1$ that minimizes $\alpha_0$.}
    \label{fig:fidelity-proof-diagram}
\end{figure}

\begin{lemma}\label{lem:c_mass-S-minus-T}
Let $\ket{\psi}$ be an $n$-qubit state. Suppose the fidelity $\abs{\braket{\psi|\phi}}^2$ is maximized by $\ket{\phi} = \ket{0^n}$ over stabilizer states $\ket{\phi}$. Let $S^* = 0^n \times \F_2^n = \weyl(\ket{0^n})$, and let $T = 0^{n + 1} \times \F_2^{n - 1}$ be a maximal subspace of $S^*$.
Then

    \[\sum_{x \in S^* \setminus T} c_\psi(x) \geq 2^{\frac{n}{2}-1} \left(\sqrt{3}-1\right) F_\stabset(\ket{\psi}).
    \]
\end{lemma}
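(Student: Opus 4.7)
The plan is to reduce the inequality to a claim about the amplitudes of $\ket\psi$ in the computational basis, then use the maximality of $\ket{0^n}$ to bound these amplitudes through only single-qubit considerations. Write $\ket\psi = \sum_{y \in \F_2^n} \alpha_y \ket y$ and $p(y) = |\alpha_y|^2$.

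First I would directly expand the sum. Every $x = (0^n, b) \in S^*$ gives a diagonal Pauli, so $\braket{\psi|W_x|\psi} = \sum_y p(y) (-1)^{b\cdot y}$. The elements of $S^* \setminus T$ correspond exactly to $b \in \F_2^n$ with $b_1 = 1$, and the character sum $\sum_{b : b_1 = 1}(-1)^{b\cdot y}$ equals $2^{n-1}(-1)^{y_1}\indic{y_2 = \cdots = y_n = 0}$. Substituting yields
\[
\sum_{x \in S^* \setminus T} c_\psi(x) \;=\; 2^{n/2 - 1}\bigl(p(0^n) - p(10^{n-1})\bigr).
\]
Since $\ket{0^n}$ maximizes stabilizer fidelity, $F_\stabset(\ket\psi) = p(0^n)$, and the desired inequality collapses to showing $p(10^{n-1}) \le (2 - \sqrt{3})\, p(0^n)$.

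Next I would exploit maximality through single-qubit stabilizer comparisons. For any single-qubit stabilizer state $\ket\chi$, the state $\ket\chi \otimes \ket{0^{n-1}}$ is a stabilizer state, so $|\braket{\chi \otimes 0^{n-1}|\psi}|^2 \le p(0^n)$. This overlap depends only on $\alpha_0 := \alpha_{0^n}$ and $\alpha_1 := \alpha_{10^{n-1}}$. Taking $\ket\chi \in \{\ket\pm, \ket{\pm i}\}$ gives the four inequalities $|\alpha_0 \pm \alpha_1|^2 \le 2|\alpha_0|^2$ and $|\alpha_0 \pm i \alpha_1|^2 \le 2|\alpha_0|^2$. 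Absorbing the global phase so that $\alpha_0 = r_0 \ge 0$ is real, and writing $\alpha_1 = r_1 e^{i\phi}$, these reduce to the pair of constraints
\[
r_1^2 + 2 r_0 r_1 |\cos\phi| \;\le\; r_0^2, \qquad r_1^2 + 2 r_0 r_1 |\sin\phi| \;\le\; r_0^2.
\]

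Finally I would maximize $r_1^2/r_0^2$ subject to these constraints. Both must hold simultaneously, so the worst case occurs when $\max\{|\cos\phi|, |\sin\phi|\}$ is as small as possible, namely $|\cos\phi| = |\sin\phi| = 1/\sqrt{2}$ (the "corners" illustrated in \cref{fig:fidelity-proof-diagram}). The binding quadratic $r_1^2 + \sqrt{2}\, r_0 r_1 - r_0^2 \le 0$ then yields $r_1/r_0 \le (\sqrt{6} - \sqrt{2})/2$, so
\[
\frac{p(10^{n-1})}{p(0^n)} \;=\; \frac{r_1^2}{r_0^2} \;\le\; \frac{(\sqrt{6} - \sqrt{2})^2}{4} \;=\; 2 - \sqrt{3},
\]
which is what was needed. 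The main obstacle is the last step: justifying rigorously that the worst case over $\phi$ really is $\phi = \pi/4$ (equivalently, that one cannot slacken both constraints simultaneously to admit a larger $r_1$). This is a short one-variable calculation, but it is the only place where the constant $2 - \sqrt{3}$ (and hence $\sqrt{3} - 1$ in the statement) actually enters.
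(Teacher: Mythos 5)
Your proof is correct and essentially identical to the paper's: same expansion of the sum to $2^{n/2-1}\bigl(p(0^n) - p(10^{n-1})\bigr)$, same four single-qubit stabilizer comparisons against $\ket{\pm}\otimes\ket{0^{n-1}}$ and $\ket{\pm i}\otimes\ket{0^{n-1}}$, and same reduction to minimizing $\max(|\cos\phi|, |\sin\phi|)$. The step you flag as the main obstacle is not actually a gap: since $\cos^2\phi + \sin^2\phi = 1$ forces $\max(|\cos\phi|,|\sin\phi|) \ge 1/\sqrt{2}$ for every $\phi$, the binding constraint $r_1^2 + 2 r_0 r_1 \max(|\cos\phi|,|\sin\phi|) \le r_0^2$ already implies $r_1^2 + \sqrt{2}\, r_0 r_1 - r_0^2 \le 0$ unconditionally, with no further optimization over $\phi$ needed.
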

\begin{proof}
We can express the sum as
\begin{align*}
    \sum_{x \in S^* \setminus T} c_\psi(x)
    &= \frac{1}{\sqrt{2^n}} \sum_{x \in 1 \times \F_2^{n-1} } \Tr\left[\ketbra{\psi}{\psi} W_{0^n, x}\right]\\
    &= 2^{\frac{n}{2}-1} \Tr\left[\ketbra{\psi}{\psi} \left(Z \otimes \ketbra{0}{0}^{n-1} \right) \right]\\
    &= 2^{\frac{n}{2}-1} \left(\abs{\alpha_0}^2 - \abs{\alpha_1}^2\right),
\end{align*}
where $\alpha_0$ is the amplitude of $\ket{\psi}$ on $\ket{0^n}$ and $\alpha_1$ is its amplitude on $\ket{10^{n-1}}$. Note that $|\alpha_0|^2 = F_\stabset(\ket{\psi})$, by assumption. Thus, we need to show that $\abs{\alpha_1}$ cannot be too big compared to $\abs{\alpha_0}$, or else it would contradict the maximality of $\abs{\braket{\psi|\phi}}^2$ at $\ket{\phi} = \ket{0^n}$. We give a visual proof of this fact in \cref{fig:fidelity-proof-diagram}, along with an algebraic proof below.

Choose the global phase on $\ket{\psi}$ to assume without loss of generality that $\alpha_0$ is positive-real and $\alpha_1 = \abs{\alpha_1}e^{i \theta}$. We may write:
\begin{align*}
    \abs{\left(\bra{+} \otimes \bra{0^{n-1}} \right)\ket{\psi}}^2 = \frac{1}{2} \abs{\alpha_0 + \alpha_1}^2 &= \frac{1}{2}\left(\left( \alpha_0 + \abs{\alpha_1} \cos \theta\right)^2 + \abs{\alpha_1}^2 \sin^2 \theta \right)\\
    \abs{\left(\bra{-} \otimes \bra{0^{n-1}}\right)\ket{\psi}}^2 = \frac{1}{2} \abs{\alpha_0 - \alpha_1}^2 &= \frac{1}{2}\left(\left( \alpha_0 - \abs{\alpha_1} \cos \theta\right)^2 + \abs{\alpha_1}^2 \sin^2 \theta \right)\\
    \abs{\left(\bra{i} \otimes \bra{0^{n-1}}\right) \ket{\psi}}^2 = \frac{1}{2} \abs{\alpha_0 + i\alpha_1}^2 &=\frac{1}{2}\left(\left( \alpha_0 - \abs{\alpha_1} \sin \theta\right)^2 + \abs{\alpha_1}^2 \cos^2 \theta \right)\\
    \abs{\left(\bra{-i} \otimes \bra{0^{n-1}}\right)\ket{\psi}}^2 = \frac{1}{2} \abs{\alpha_0 - i\alpha_1}^2 &=\frac{1}{2}\left(\left( \alpha_0 + \abs{\alpha_1} \sin \theta\right)^2 + \abs{\alpha_1}^2 \cos^2 \theta \right).
\end{align*}
All of these values need to be less than $\abs{\alpha_0}^2$, as otherwise $\ket{\psi}$ would have larger fidelity with one of the above states.
Due to symmetry of both $\sin$ and $\cos$, we will only consider $\theta \in [0, \frac{\pi}{2}]$ such that the only relevant equations to consider are the first and last.
This allows us to write the largest of the above inner products as
\[
    \frac{1}{2}\left(\alpha_0^2 + \abs{\alpha_1}^2 + 2\alpha_0 \abs{\alpha_1} \cdot \max\left(\cos \theta, \sin \theta\right)\right),
\]
which is minimized for $\theta = \pi/4$.
Plugging that back in and comparing to $\alpha_0^2$ leads to
\[
    \alpha_0^2 \geq \frac{1}{2}\left(\alpha_0^2 + \abs{\alpha_1}^2 + \sqrt{2}\alpha_0 \abs{\alpha_1} \right),
\]
and solving for the maximum $\abs{\alpha_1}$ gives $\abs{\alpha_1} \leq \left(2-\sqrt{3}\right)^{1/2}\abs{\alpha_0}$. Hence, $\abs{\alpha_0}^2 - \abs{\alpha_1}^2 \geq 1-(2-\sqrt{3}) \alpha_0^2 = \left(\sqrt{3} - 1\right)F_\stabset(\ket{\psi})$.
Therefore,
\[
\sum_{x \in S^* \setminus T} c_\psi(x) = 2^{\frac{n}{2}-1} \left(\abs{\alpha_0}^2 - \abs{\alpha_1}^2\right) \geq 2^{\frac{n}{2}-1} \left(\sqrt{3}-1\right) F_\stabset(\ket{\psi}). \qedhere
\]

\end{proof}

\begin{lemma}\label{lem:q-S-minus-t-zeros}
Let $\ket{\psi}$ be an $n$-qubit state. Suppose the fidelity $\abs{\braket{\psi|\phi}}^2$ is maximized by $\ket{\phi} = \ket{0^n}$ over stabilizer states $\ket{\phi}$. Let $S^* = 0^n \times \F_2^n = \weyl(\ket{0^n})$, and let $T = 0^{n + 1} \times \F_2^{n - 1}$ be a maximal subspace. Then
    \[\sum_{x \in S^* \setminus T}q_\psi(x) \geq \frac{2-\sqrt{3}}{2} F_\stabset(\ket{\psi})^4.\]
\end{lemma}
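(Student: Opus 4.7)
The plan is to use the convolution formula $q_\psi(x) = \sum_y p_\psi(y)\,p_\psi(x+y)$ together with the coset structure of $T$ inside $S^*$. Since every term of the convolution is nonnegative, I obtain a valid lower bound by restricting the inner sum to any subset of $\F_2^{2n}$; the right choice will be $y \in T$.

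Restricting to $y \in T$ is useful because $T$ is a codimension-$1$ subspace of $S^*$, so $S^*\setminus T$ is precisely its nontrivial coset. This means that for every $x \in S^*\setminus T$ and $y \in T$, the sum $x+y$ remains in $S^*\setminus T$, and moreover as $x$ varies over $S^*\setminus T$ with $y$ fixed, $x+y$ again ranges over $S^*\setminus T$. Swapping the order of summation then yields
\[
\sum_{x \in S^*\setminus T} q_\psi(x) \;\geq\; \sum_{y \in T} p_\psi(y) \sum_{x \in S^*\setminus T} p_\psi(x+y) \;=\; \left(\sum_{y \in T} p_\psi(y)\right)\!\left(\sum_{z \in S^*\setminus T} p_\psi(z)\right).
\]

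The remaining task is to lower bound each factor. For the first, \cref{cor:f_lowerbound_subgroup} applied to $T$ gives $\sum_{y \in T} p_\psi(y) \geq \tfrac{|T|}{2^n} F_\stabset(\ket\psi)^2 = \tfrac{1}{2} F_\stabset(\ket\psi)^2$. For the second, I will square the conclusion of \cref{lem:c_mass-S-minus-T} via Cauchy--Schwarz: since $p_\psi = c_\psi^2$ pointwise and $|S^*\setminus T| = 2^{n-1}$,
\[
\sum_{z \in S^*\setminus T} p_\psi(z) \;\geq\; \frac{\left(\sum_{z \in S^*\setminus T} c_\psi(z)\right)^2}{2^{n-1}} \;\geq\; \frac{\bigl(2^{n/2-1}(\sqrt{3}-1)\bigr)^2}{2^{n-1}}\,F_\stabset(\ket\psi)^2 \;=\; (2-\sqrt{3})\,F_\stabset(\ket\psi)^2.
\]
Multiplying the two bounds produces exactly $\tfrac{2-\sqrt{3}}{2}\,F_\stabset(\ket\psi)^4$, as claimed.

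The only conceptual step is the choice to restrict the convolution to $y \in T$: the coset identity forces $x+y \in S^*\setminus T$, cleanly decoupling the double sum into a product of two $p_\psi$-masses that the previously established lemmas already control. After that observation, the rest of the proof is just Cauchy--Schwarz and invocations of \cref{cor:f_lowerbound_subgroup} and \cref{lem:c_mass-S-minus-T}. The standing assumption that $\ket{0^n}$ maximizes stabilizer fidelity plays no explicit role here, because it is already built into both prior lemmas.
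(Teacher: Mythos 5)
Your proposal is correct and follows essentially the same route as the paper: restrict the convolution to $y \in T$, exploit that $T$-translation permutes $S^*\setminus T$ to factor the double sum, then invoke \cref{cor:f_lowerbound_subgroup} for the $T$-mass and Cauchy--Schwarz with \cref{lem:c_mass-S-minus-T} for the $S^*\setminus T$-mass. The only difference is presentational — you evaluate the two factors separately and apply Cauchy--Schwarz directly to $\sum c_\psi$ rather than passing through $\sum |c_\psi|$ first — but the inequalities and constants are identical.
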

\begin{proof}
We can write
\begin{align*}
    \sum_{x \in S^* \setminus T}q_\psi(x) &= \sum_{x \in S^* \setminus T}\sum_{t \in \F_2^{2n}} p_\psi(t) p_\psi(x+t) \\ 
    &\geq \sum_{t \in T} p_\psi(t)  \sum_{x \in S^* \setminus T} p_\psi(x+t) \\
    &= \left(\sum_{t \in T} p_\psi(t)\right) \left(\sum_{x' \in S^* \setminus T} p_\psi(x')\right) && (t + S^* \setminus T = S^* \setminus T)\\
    & \geq \frac{1}{2^{n-1}} \left(\sum_{t \in T} p_\psi(t)\right) \left(\sum_{x' \in S^* \setminus T} \abs{c_\psi(x')}\right)^2 && \textrm{(Cauchy-Schwarz)}\\
    &\geq \frac{1}{2^{n-1}}\left(\sum_{t \in T} p_\psi(t)\right) \left(\sum_{x' \in S^* \setminus T} c_\psi(x')\right)^2.
\end{align*}
Now apply \cref{cor:f_lowerbound_subgroup} and \cref{lem:c_mass-S-minus-T} respectively and we get
\[
    \sum_{x \in S^* \setminus T}q_\psi(x) \geq \frac{1}{2^{n-1}} \cdot \frac{\abs{T}}{2^n} F_\stabset(\ket{\psi})^2 \cdot \left(2^{\frac{n}{2}-1} \left(\sqrt{3}-1\right) F_\stabset(\ket{\psi})\right)^2 = \frac{2-\sqrt{3}}{2} F_\stabset(\ket{\psi})^4
\]
as desired.
\end{proof}

\begin{lemma}\label{lem:q-S-minus-t}
Given an $n$-qubit state $\ket{\psi}$, let $\ket{\phi}$ be a stabilizer state that maximizes the stabilizer fidelity, and let $S^* = \weyl(\ket{\phi})$. Let $T \subset S^*$ be a proper subspace of $S^*$. Then
    \[\sum_{x \in S^* \setminus T}q_\psi(x) \geq \frac{2-\sqrt{3}}{2} F_\stabset(\ket{\psi})^4.\]
\end{lemma}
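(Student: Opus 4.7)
The plan is to reduce the general statement to the special case already handled by \cref{lem:q-S-minus-t-zeros}, via two reductions: enlarging $T$ to a codimension-$1$ subspace of $S^*$, and then performing a Clifford change of basis.

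First, since $T$ is a proper subspace of the $n$-dimensional space $S^*$, extend $T$ to a subspace $T' \subseteq S^*$ with $\dim T' = n - 1$. Because $T \subseteq T'$, the set containment $S^* \setminus T' \subseteq S^* \setminus T$ yields
\[
\sum_{x \in S^* \setminus T} q_\psi(x) \;\geq\; \sum_{x \in S^* \setminus T'} q_\psi(x),
\]
so it suffices to lower-bound the latter. Now invoke \cref{lem:clifford_invariance} with the pair $(S^*, T')$: there is a Clifford circuit $C$ with $C\ket\phi = \ket{0^n}$, $C(S^*) = 0^n \times \F_2^n$, and $C(T') = 0^{n+1} \times \F_2^{n-1}$.

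Let $\ket{\psi'} \coloneqq C\ket\psi$. Two invariances are needed. First, since the Clifford group permutes stabilizer states bijectively, $F_\stabset(\ket{\psi'}) = F_\stabset(\ket\psi)$, and in particular $\ket{0^n} = C\ket\phi$ is a stabilizer state maximizing the fidelity with $\ket{\psi'}$. Second, Clifford conjugation induces the bijection $y = C(x)$ on $\F_2^{2n}$ with $CW_xC^\dagger = \pm W_{C(x)}$, so $p_{\psi'}(C(x)) = p_\psi(x)$ for all $x$, and hence (by \cref{lem:bell_diff_sampling}, since $q$ is defined from $p$ via convolution, which respects the linear change of variables) $q_{\psi'}(C(x)) = q_\psi(x)$. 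Therefore
\[
\sum_{x \in S^* \setminus T'} q_\psi(x) \;=\; \sum_{y \in (0^n \times \F_2^n) \setminus (0^{n+1} \times \F_2^{n-1})} q_{\psi'}(y).
\]

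Finally, \cref{lem:q-S-minus-t-zeros} applied to $\ket{\psi'}$ (whose optimal stabilizer state is $\ket{0^n}$, by construction) yields the bound $\frac{2-\sqrt 3}{2}F_\stabset(\ket{\psi'})^4 = \frac{2-\sqrt 3}{2}F_\stabset(\ket\psi)^4$ for the right-hand side, completing the proof. The main conceptual content is already in the prior two lemmas; the only things to be careful about here are verifying that Clifford conjugation carries $q_\psi$ and $F_\stabset$ along in the expected way and that the subspace-extension step is free (which it is, by monotonicity of the set difference). Neither of these is a real obstacle, but checking that $C(T') = 0^{n+1}\times\F_2^{n-1}$ can be realized \emph{simultaneously} with $C(S^*) = 0^n\times\F_2^n$ and $C\ket\phi = \ket{0^n}$ is exactly what \cref{lem:clifford_invariance} was set up to guarantee, so the argument goes through cleanly.
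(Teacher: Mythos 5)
Your proof is correct and follows essentially the same route as the paper's: apply \cref{lem:clifford_invariance} to move to the $\ket{0^n}$ case, use Clifford invariance of $p_\psi$, $q_\psi$, and $F_\stabset$, and invoke \cref{lem:q-S-minus-t-zeros}. The only cosmetic difference is that you extend $T$ to a codimension-$1$ subspace $T'$ \emph{before} applying the Clifford lemma, while the paper applies the Clifford lemma directly to $T$ (obtaining $C(T) \subseteq 0^{n+1} \times \F_2^{n-1}$) and then uses the resulting set containment; the two orderings are interchangeable and yield identical bounds.
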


\begin{proof}
    Use \cref{lem:clifford_invariance} to choose a Clifford circuit such that $C\ket{\phi} = \ket{0^n}$, $C(S^*) = 0^n \times \F_2^n$, and $C(T) \subseteq 0^{n+1} \times \F_2^{n-1}$. Let $\ket{\psi'} = C\ket{\psi}$.
    Then:
    \[
    \sum_{x \in S^* \setminus T} q_\psi(x) = \sum_{x \in C(S^* \setminus T)} q_{\psi'}(x) \ge \sum_{x \in 0^n \times 1 \times \F_2^{n-1}} q_{\psi'}(x).
    \]
    By \cref{lem:q-S-minus-t-zeros}, this sum is lower bounded by
    \[
    \sum_{x \in 0^n \times 1 \times \F_2^{n-1}} q_{\psi'}(x) \ge \frac{2 - \sqrt{3}}{2} F_\stabset(\ket{\psi'})^4 = \frac{2 - \sqrt{3}}{2} F_\stabset(\ket{\psi})^4.\qedhere
    \]
\end{proof}

\subsection{The Algorithm}\label{subsec:stab-fidelity-algo}
Our algorithm for stabilizer state approximations uses the powerful classical shadows framework \cite{HKP20-classical-shadows} to improve its sample complexity.

\begin{theorem}[Classical shadows algorithm \cite{HKP20-classical-shadows}]
\label{thm:classical_shadows}
Let $\rho$ be an unknown $n$-qubit mixed state. Then there exists a quantum algorithm that first performs $m_{\mathsf{shadow}} = O(\log(K / \delta) / \eps^2)$ random Clifford measurements on independent copies of $\rho$. Then, later given $K$ different observables $O_1, O_2, \ldots, O_K$ in an online fashion, where each $O_i$ is a rank-$1$ projector, the algorithm uses the measurement results to output estimates $\hat{o}_1, \ldots, \hat{o}_K$, such that with probability at least $1 - \delta$, for every $i \in [K]$, $|\hat{o}_i - \tr(O_i \rho)| \le \eps$. Moreover, if $O_i$ is a projector onto a stabilizer state, then each $\hat{o}_i$ can be computed from the measurement results by a classical algorithm that takes time $O(n^2 m_{\mathsf{shadow}})$.
\end{theorem}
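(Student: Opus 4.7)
I would reproduce the HKP20 classical shadows construction. For each of the $m_{\mathsf{shadow}}$ copies of $\rho$, I sample a uniformly random Clifford $U$, measure $U\rho U^\dagger$ in the computational basis to obtain $b \in \{0,1\}^n$, and store the classical description $(U,b)$. From this data I define the single-shot shadow
\[
\hat\rho \coloneqq (2^n+1)\,U^\dagger \ketbra{b}{b} U - I .
\]
The first step is to verify that $\mathbb{E}[\hat\rho] = \rho$. This follows from the fact that the Clifford group is a unitary $2$-design, so the measurement channel $\mathcal{M}(\rho) = \mathbb{E}_{U,b}[U^\dagger \ketbra{b}{b} U]$ equals the depolarizing channel $(\rho + I)/(2^n+1)$, whose inverse is exactly the affine map above. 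For any observable $O$, the random variable $\hat o \coloneqq \tr(O\hat\rho)$ is therefore an unbiased estimator of $\tr(O\rho)$.

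Next I would bound $\mathrm{Var}(\hat o)$ via a unitary $3$-design average (the Clifford group is a $3$-design for qubits). A calculation with the second moment of $U^\dagger \ketbra{b}{b} U$ shows that
\[
\mathrm{Var}(\hat o) \;\le\; 3\,\tr(O^2) + \text{lower order},
\]
independent of $n$, provided $O$ is a traceless rescaling of a rank-$1$ projector (the general projector case follows by writing $O = (O - \tfrac{\tr(O)}{2^n} I) + \tfrac{\tr(O)}{2^n} I$ and noting that the constant part has zero variance). In particular, for a rank-$1$ projector $O_i$ one has $\tr(O_i^2) = 1$, so the variance is $O(1)$.

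With the variance bound in hand, I convert it into the stated high-probability guarantee via median-of-means. Split the $m_{\mathsf{shadow}}$ shadows into $G = \Theta(\log(K/\delta))$ groups of size $s = \Theta(1/\eps^2)$, compute the empirical mean of $\tr(O_i \hat\rho)$ on each group, and let $\hat o_i$ be the median across groups. Chebyshev gives per-group failure probability below $1/3$, and the standard Chernoff bound on a Bernoulli majority yields per-observable failure $\le \delta/K$. A union bound over the $K$ queries completes the error analysis, and because the measurement data is produced and stored \emph{before} the $O_i$ are revealed, the scheme is genuinely online. Altogether we need $m_{\mathsf{shadow}} = O(\log(K/\delta)/\eps^2)$ copies.

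Finally, for the time bound when $O_i = \ketbra{\phi_i}{\phi_i}$ is a stabilizer projector, observe
\[
\tr(O_i \hat\rho) \;=\; (2^n+1)\,\bigl|\braket{\phi_i | U^\dagger | b}\bigr|^2 \;-\; 1,
\]
so each group-mean reduces to computing overlaps between two stabilizer states, which takes $O(n^2)$ time using the Aaronson--Gottesman tableau algorithm. Summed over the $m_{\mathsf{shadow}}$ shadows, estimating each $\hat o_i$ costs $O(n^2 m_{\mathsf{shadow}})$. \textbf{The main obstacle} is the variance calculation: carrying out the $3$-design moment properly requires a careful Weingarten-style bookkeeping of several contracted terms, and ensuring that the bound does not pick up factors of $2^n$ when $O$ has $O(1)$ operator norm. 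Once that estimate is locked in, the remaining pieces (unbiasedness, median-of-means, union bound, Clifford simulation) are standard and assemble directly into the theorem as stated.
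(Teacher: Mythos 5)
This is a cited external result: the paper does not give its own proof of \cref{thm:classical_shadows}, it simply states the theorem from \cite{HKP20-classical-shadows} and points to the remarks on page~1053 of that paper for the ``moreover'' clause. So there is no in-paper proof to compare against, and your proposal is best read as a reconstruction of the HKP20 argument itself. As such, it is essentially correct and follows the standard route: inverse depolarizing-channel shadow $\hat\rho = (2^n+1)U^\dagger\ketbra{b}{b}U - I$ from the $2$-design property, unbiasedness, a $3$-design variance bound of $O(\tr(O_0^2))$ with $O_0$ the traceless part, median-of-means to convert variance into an exponential tail, and a union bound over the $K$ observables. The observation that adding a multiple of the identity contributes no variance (since $\tr(\hat\rho) = 1$ deterministically) is the right way to pass from traceless $O_0$ to arbitrary rank-one projectors.

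One point to be careful about: you attribute the $O(n^2)$ per-shadow cost to computing $|\braket{\phi_i|U^\dagger|b}|^2$ ``using the Aaronson--Gottesman tableau algorithm.'' The generic stabilizer inner-product computation is usually quoted as $O(n^3)$ (it requires a Gaussian-elimination / canonical-form step), so a naive invocation of Aaronson--Gottesman does not immediately give $O(n^2)$. HKP20 do claim $O(n^2)$ in the remarks the paper points to, but the claim rests on additional structure of the shadow estimator rather than a black-box stabilizer overlap routine. If you want your reconstruction to be self-contained you should either cite the specific remark in HKP20 for this step or spell out the optimization that brings the overlap computation down to $O(n^2)$; as written, that sentence asserts a bound that the named primitive does not obviously deliver.
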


For the ``moreover'' part of \cref{thm:classical_shadows}, see the remarks on Page 1053 of \cite{HKP20-classical-shadows}.\footnote{This is page 1053 of \emph{Nature Physics} Volume 16. Alternatively, see page 5 of the arXiv version.} 

We also require an algorithm, due to \cite{tomita2006worst}, for computing all of the maximal cliques in a graph.

\begin{theorem}[Computing maximal cliques \cite{tomita2006worst}]
    \label{thm:maximal_clique_alg}
    Given an undirected graph $G$ with $n$ vertices, there is a classical algorithm that outputs a list of all of the maximal cliques in $G$ in time $O(3^{n/3})$.
\end{theorem}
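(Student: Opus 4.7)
The plan is to design a branching recursion in the spirit of Bron--Kerbosch with pivot selection. The algorithm maintains three disjoint sets: $R$, the clique currently being built; $P$, the set of vertices adjacent to every vertex of $R$ and still available to extend it; and $X$, the set of vertices adjacent to every vertex of $R$ that have already been processed earlier in the recursion. The base case outputs $R$ as a maximal clique whenever $P = X = \emptyset$. Otherwise, the algorithm picks a pivot $u \in P \cup X$ maximizing $|P \cap N(u)|$ (where $N(u)$ denotes the neighborhood of $u$ in $G$), and recurses over each $v \in P \setminus N(u)$, moving $v$ from $P$ into $R$ and restricting both $P$ and $X$ to $N(v)$ before the recursive call.

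For correctness, the key observation is that any maximal clique $K$ extending $R$ must contain at least one vertex outside $N(u)$; otherwise $K \cup \{u\}$ would be a strictly larger clique containing $R$, contradicting maximality of $K$. This justifies branching only on vertices of $P \setminus N(u)$ rather than all of $P$. The role of $X$ is to prevent duplicate outputs: when the recursion reaches $P = \emptyset$ but $X \neq \emptyset$, the current $R$ is not maximal (it is extensible by some vertex in $X$ that has been explored along another branch) and is therefore not reported. Termination is automatic since $|P \cup X|$ strictly decreases at every recursive call.

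For the complexity bound, let $T(n)$ denote the worst-case running time when $|P \cup X| = n$. If the pivoting rule results in $p = |P \setminus N(u)|$ branches, then a careful amortized analysis of how $P$ and $X$ shrink along each branch yields a recurrence of the form $T(n) \le p \cdot T(n - p) + \mathrm{poly}(n)$. Treating $p$ as a continuous parameter, the product $p \cdot T(n-p)$ is maximized near $p = 3$, which gives the bound $T(n) = O(3^{n/3})$. This exactly matches the Moon--Moser upper bound on the number of maximal cliques in an $n$-vertex graph, so the bound is tight up to the polynomial factor absorbed into the big-$O$.

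The main obstacle is justifying the pivot analysis rigorously: one must show that the greedy rule of choosing $u$ to maximize $|P \cap N(u)|$ controls the branching factor across the entire recursion tree, rather than only at isolated well-behaved nodes. This requires tracking how $|P|$ and $|X|$ co-evolve under the pivoting rule, in particular ruling out pathological configurations where many successive recursive calls all branch on large sets. Handling this case analysis carefully is the principal technical contribution of Tomita, Tanaka, and Takahashi; the rest of the argument is a relatively standard application of Bron--Kerbosch-style pruning.
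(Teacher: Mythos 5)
This theorem is stated in the paper as a citation of Tomita, Tanaka, and Takahashi and is not proved there, so there is no internal proof to compare against. Your sketch correctly identifies the algorithm (Bron--Kerbosch with the ``maximize $|P \cap N(u)|$'' pivot) and the correct target recurrence, and you correctly note that the Moon--Moser bound shows the $O(3^{n/3})$ figure is output-size optimal.

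One caveat worth flagging if you tried to make this rigorous: the recurrence you wrote, $T(n) \le p \cdot T(n-p) + \poly(n)$, is not quite what holds. When branching on $v \in P \setminus N(u)$, the recursive call receives $P' = P \cap N(v)$ and $X' = X \cap N(v)$, and the size of $P' \cup X'$ is controlled by $|N(v)|$, not uniformly by $n - p$; moreover, after processing each $v$ it moves from $P$ to $X$, so later branches see a smaller $P$ but $|P \cup X|$ is unchanged across siblings. The actual Tomita et al. argument tracks $|P|$ as the recursion parameter, uses the pivot choice to bound the branching factor by $|P| - \max_u |P \cap N(u)|$, and amortizes across the whole tree to reach the recurrence whose worst case is $T(n) = 3 \cdot T(n-3)$. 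You flag this honestly as the technical core of the cited paper, so there is no gap for the purposes of matching the paper's level of detail, which is none; but the $T(n) \le p \cdot T(n-p)$ form as written would not survive a careful check without the amortization.
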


Note that this implies that the number of maximal cliques is at most $O(3^{n/3})$.

We are now ready to describe the stabilizer state approximation algorithm. At a high level, it uses Bell difference sampling to obtain a list of candidate Lagrangian subspaces generated by the sampled Weyl operators. Then, it iterates through the candidate groups to find the stabilizer state with largest fidelity, using classical shadows to perform the estimation.

\begin{algorithm}
\caption{Stabilizer State Approximation}
\label{alg:fidelity-estimation}
\SetKwInOut{Promise}{Promise}
\KwInput{$m_{\mathsf{shadow}} + 4m_{\mathsf{clique}}$ copies of $\ket\psi$}
\Promise{$\ket\psi$ has stabilizer fidelity at least $\tau$}
\KwOutput{A stabilizer state $\ket\phi$ such that $\abs{\braket{\phi|\psi}}^2 \geq F_\stabset(\ket{\psi}) - \eps$ with probability at least $1 - \delta$}

Initialize an empty graph $G$

\RepTimes{$m_{\mathsf{clique}}$}{
Using $4$ copies of $\ket{\psi}$, perform Bell difference sampling to obtain $x \in \F_2^{2n}$

Add a vertex for $x$ in $G$ and connect it to all vertices $y$ in $G$ such that $[x,y] = 0$.
}

\RepTimes{$m_{\mathsf{shadow}}$}{
Choose a random Clifford circuit $U$

Measure $U\ket{\psi}$ in the computational basis and store the result
}

\ForEach{maximal clique $(v_1, \ldots, v_k) \in G$ computed using \cref{thm:maximal_clique_alg}}{

Compute $S \coloneqq \langle v_1, \ldots, v_k \rangle$ via Gaussian elimination

\If{$|S| = 2^n$}{

\ForEach{stabilizer state $\ket{\phi}$ with $\weyl(\ket \phi) = S$}{

Let $\hat{o}_{\phi}$ be the estimator of $\abs{\braket{\psi|\phi}}^2$ computed using the algorithm in \cref{thm:classical_shadows}
}

}
}
\Return{
whichever $\ket{\phi}$ maximizes $\hat{o}_\phi$
}

\end{algorithm}

We first argue that with high probability, one of the maximal cliques generates the Lagrangian subspace corresponding to a state that maximizes stabilizer fidelity.

\begin{lemma}
\label{lem:fidelity_finds_samples}
Given an $n$-qubit state $\ket{\psi}$, let $\ket{\phi}$ be a stabilizer state that maximizes the stabilizer fidelity, and let $S^* = \weyl(\ket{\phi})$. Suppose $\abs{\braket{\phi|\psi}}^2 \ge \tau$.
Then choosing $m_\mathsf{clique} \ge \frac{8 + 4\sqrt{3}}{\tau^4}(n + \log(1/\delta))$ is sufficient to guarantee that with probability at least $1 - \delta$, the Bell difference sampling step of \Cref{alg:fidelity-estimation} samples a complete set of generators for $S^*$.
\end{lemma}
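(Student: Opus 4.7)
\emph{Plan.} The idea is to use \cref{lem:q-S-minus-t} as a ``progress'' lemma: each Bell difference sample has a non-trivial $q_\psi$-probability of strictly enlarging the subspace of $S^*$ we have generated so far. After $n$ such extensions, we have a complete generating set of $S^*$; a Chernoff bound then determines how many samples suffice.

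Concretely, label the samples $x_1,\dots,x_{m_\mathsf{clique}}$ and let $T_i \subseteq S^*$ be the subspace generated by those of $x_1,\dots,x_{i-1}$ that lie in $S^*$. Define the indicator $X_i \coloneqq 1$ if either $T_i = S^*$ already, or $x_i \in S^* \setminus T_i$ (so it strictly extends $T_i$); otherwise $X_i \coloneqq 0$. Since $\dim S^* = n$, the event $\sum_i X_i \ge n$ forces $T_{m_\mathsf{clique}+1} = S^*$, which is exactly the conclusion of the lemma. The key observation is that \cref{lem:q-S-minus-t}, applied to the proper subspace $T_i \subsetneq S^*$, together with $F_\stabset(\ket\psi) \ge \tau$, guarantees
\[
\Pr[X_i = 1 \mid x_1,\dots,x_{i-1}] \;\ge\; p \coloneqq \tfrac{2-\sqrt{3}}{2}\,\tau^4
\]
for \emph{every} history (the case $T_i = S^*$ being trivial).

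By a standard coupling, $\sum_i X_i$ then stochastically dominates a $\mathrm{Binomial}(m_\mathsf{clique}, p)$ random variable. The choice $m_\mathsf{clique} = \frac{8+4\sqrt{3}}{\tau^4}(n+\log(1/\delta))$ is calibrated so that $\mu \coloneqq m_\mathsf{clique}\,p = 2(n+\log(1/\delta))$, using the identity $(8+4\sqrt{3})(2-\sqrt{3}) = 4$. A one-sided multiplicative Chernoff bound then yields
\[
\Pr\!\Bigl[\textstyle\sum_i X_i \le n\Bigr] \;\le\; \exp\!\left(-\tfrac{(\mu-n)^2}{2\mu}\right) \;=\; \exp\!\left(-\tfrac{(n+2\log(1/\delta))^2}{4(n+\log(1/\delta))}\right) \;\le\; \delta,
\]
where the final step uses the elementary inequality $(n+2\ell)^2 \ge 4\ell(n+\ell)$.

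The main delicacy is ensuring the ``progress probability'' lower bound holds \emph{conditionally} on the history, and not merely on average: this is precisely what \cref{lem:q-S-minus-t} provides, since it gives a uniform bound that is valid for every proper subspace $T_i \subsetneq S^*$. Once that is in hand, the remainder of the analysis is a routine Chernoff/coupling calculation, and matching the exact constant in the statement just amounts to verifying $(8+4\sqrt 3)(2-\sqrt 3) = 4$.
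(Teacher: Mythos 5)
Your proof is correct and takes essentially the same approach as the paper: the same indicator variables $X_i$ tracking progress, the same use of \cref{lem:q-S-minus-t} to get a uniform conditional lower bound $\frac{2-\sqrt{3}}{2}\tau^4$ on progress, and the same one-sided Chernoff bound (just written in the $(\mu-n)^2/(2\mu)$ form rather than the paper's $(1-\gamma)\mu$ parameterization). Your explicit invocation of stochastic dominance via coupling to a $\mathrm{Binomial}(m,p)$ random variable is a slightly more careful justification than the paper's bare appeal to "a Chernoff bound," but the underlying argument is identical.
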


\begin{proof}
For notational simplicity, write $m_\mathsf{clique} = m$.
Let $x_1,...,x_{m} \in \F_2^{2n}$ be the results of the Bell difference sampling subprocedure in \Cref{alg:fidelity-estimation}.
Let $T_{i}$ be the subspace of $S^*$ spanned by all elements in $\{x_1,\ldots,x_i\} \cap S^*$. Define the indicator random variable $X_i$ as

\[
X_i = \begin{cases}
1 & x_i \in S^* \setminus T_i \textrm{ or } T_i = S^* \\
0 & \textrm{otherwise}.
\end{cases}
\]

Informally, $X_i = 1$ indicates a step at which the algorithm has made progress towards sampling a complete set of generators for $S^*$. Since $S^*$ has dimension $n$, we need to show that with high probability, $\sum_{i=1}^{m} X_i \geq n$, as this guarantees that $T_{m} = S^*$.

From \Cref{lem:q-S-minus-t} we have that any assignment of $X_1,\ldots,X_{i-1}$, $\E[X_i | X_1, \ldots, X_{i-1}] \geq c$, where $c = \frac{2 - \sqrt{3}}{2}\tau^4$, because $x_i$ is sampled with probability $q_\psi(x_i)$.
Let $\gamma = 1 - \frac{n}{cm}$.
Then, by a Chernoff bound (\cref{fact:chernoff}), we have
\begin{align}
\Pr\left[\sum_{i=1}^{m} X_i \leq n \right]
&= \Pr\left[\sum_{i=1}^{m} X_i \leq (1 - \gamma)cm \right]\nonumber\\
&\leq \exp\left(- \gamma^2 cm / 2\right)\nonumber\\
&= \exp\left(- \left(1 - \frac{2n}{cm} + \frac{n^2}{c^2m^2}\right) cm / 2\right)\nonumber\\
&\le \exp\left(- \left(1 - \frac{2n}{cm}\right) cm / 2\right)\nonumber\\
&= \exp\left(n - cm/2\right)\label{eq:chernoff_fidelity_estimate}
\end{align}
Hence, choosing
\[
m \ge \frac{2}{c}(n + \log(1/\delta)) = \frac{8 + 4\sqrt{3}}{\tau^4}(n + \log(1/\delta))
\]
suffices to guarantee that \cref{eq:chernoff_fidelity_estimate} is at most $\delta$.\qedhere
\end{proof}

Now we have everything needed to prove the correctness of \cref{alg:fidelity-estimation}.

\begin{theorem}\label{thm:fidelity_learning}
    Let $\ket{\psi}$ be an $n$-qubit state with $F_\stabset(\ket{\psi}) \ge \tau$. Then choosing 
    \begin{align*}
    m_\mathsf{clique} = \frac{8 + 4\sqrt{3}}{\tau^4}(n + \log(2/\delta)) && m_\mathsf{shadow} = O\left(\frac{n + \log(1/\delta)}{\eps^2 \tau^4}\right)
    \end{align*}
    suffices to guarantee that with probability at least $1 - \delta$, \cref{alg:fidelity-estimation} outputs a state $\ket{\phi}$ satisfying $\abs{\braket{\phi|\psi}}^2 \ge F_\stabset(\ket{\psi}) - \eps$ and it uses $O\left(\frac{n + \log(1/\delta)}{\eps^2\tau^4}\right)$ samples and $\exp\left(O\left(\frac{n + \log(1/\delta)}{\tau^4}\right)\right)\cdot \frac{1}{\eps^2}$ time. 
\end{theorem}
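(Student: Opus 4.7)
The plan is to combine \cref{lem:fidelity_finds_samples} (Bell difference sampling finds a generating set for the optimal unsigned stabilizer group) with the classical shadows guarantee (\cref{thm:classical_shadows}) and a union bound. Let $\ket{\phi^*}$ be a stabilizer state achieving $|\braket{\phi^*|\psi}|^2 \ge \tau$ and let $S^* = \weyl(\ket{\phi^*})$, which is Lagrangian. Apply \cref{lem:fidelity_finds_samples} with failure probability $\delta/2$; our choice of $m_\mathsf{clique}$ matches the hypothesis (up to the $\log(2/\delta)$ vs.\ $\log(1/\delta)$ slack), so with probability at least $1-\delta/2$ the Bell-difference-sampled vectors $x_1,\dots,x_{m_\mathsf{clique}}$ contain $n$ linearly independent elements of $S^*$.

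The next step, which is the key structural observation, is that these $n$ generators form a clique in the commutation graph $G$ (since they all lie in the isotropic subspace $S^*$), hence extend to some maximal clique $K \subseteq G$. Any maximal clique in $G$ corresponds to a mutually commuting set of Weyl operators, and therefore spans an isotropic subspace of $\F_2^{2n}$, which has dimension at most $n$. Since $K$ already contains $n$ independent elements of $S^*$, its span is exactly $S^*$. Thus, when \cref{alg:fidelity-estimation} iterates over maximal cliques and then over the $2^n$ stabilizer states with unsigned stabilizer group equal to the span, $\ket{\phi^*}$ appears as one of the candidates.

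Next, bound the number of candidate stabilizer states ever queried in the shadows subroutine: by \cref{thm:maximal_clique_alg} there are at most $3^{m_\mathsf{clique}/3}$ maximal cliques, each contributing at most $2^n$ stabilizer states, so $K \le 3^{m_\mathsf{clique}/3}\cdot 2^n$ and
\[
\log K = O\paren{m_\mathsf{clique} + n} = O\Paren{\frac{n+\log(1/\delta)}{\tau^4}}.
\]
Applying \cref{thm:classical_shadows} with accuracy $\eps/2$ and failure probability $\delta/2$ across these $K$ rank-$1$ projectors requires $m_\mathsf{shadow} = O(\log(K/\delta)/\eps^2) = O((n+\log(1/\delta))/(\eps^2\tau^4))$ samples, matching the claimed bound. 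On the good event (which, by the union bound, occurs with probability at least $1-\delta$), every estimate $\hat o_\phi$ is within $\eps/2$ of the true fidelity $|\braket{\phi|\psi}|^2$, and $\ket{\phi^*}$ is among the candidates. The output $\ket{\phi}$ maximizes $\hat o_\phi$, so
\[
|\braket{\phi|\psi}|^2 \ge \hat o_\phi - \eps/2 \ge \hat o_{\phi^*} - \eps/2 \ge |\braket{\phi^*|\psi}|^2 - \eps \ge \tau - \eps.
\]

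Finally, for the complexity bookkeeping: sample complexity is $m_\mathsf{shadow} + 4 m_\mathsf{clique} = O((n+\log(1/\delta))/(\eps^2\tau^4))$. For the runtime, sampling, graph construction, and Gaussian elimination are polynomial in $n$ and $m_\mathsf{clique}$; the dominant cost is enumerating the $3^{m_\mathsf{clique}/3}\cdot 2^n$ candidate stabilizer states and, for each, evaluating the shadow estimator in time $O(n^2 m_\mathsf{shadow})$ by the ``moreover'' clause of \cref{thm:classical_shadows}, giving total time $\exp(O((n+\log(1/\delta))/\tau^4))/\eps^2$. The only non-routine step is the structural claim that a maximal clique containing our $n$ sampled generators must span $S^*$ exactly; once that is in hand, the rest is a union bound and standard complexity accounting.
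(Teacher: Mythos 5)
Your proposal is correct and follows essentially the same argument as the paper: invoke \cref{lem:fidelity_finds_samples} and \cref{thm:classical_shadows} each with failure probability $\delta/2$, observe that the sampled generators of $S^*$ lie in a maximal clique whose span must be exactly $S^*$, union-bound the failures, and do the standard complexity accounting. The only cosmetic difference is in justifying that the maximal clique spans $S^*$: you argue via the dimension bound on isotropic subspaces, while the paper notes that $S^* = (S^*)^\sympcomp$ so no extra commuting vertex can exist; both are valid and essentially the same fact.
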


\begin{proof}
     Choose the failure probability in \cref{lem:fidelity_finds_samples} to be at most $\delta / 2$. Choose the parameters in \cref{thm:classical_shadows} so that the additive error in the estimates is $\eps / 2$ and the failure probability is at most $\delta / 2$; this requires choosing $K = 2^n \cdot O\left(3^{m_\mathsf{clique} / 3}\right)$ and thus $m_\mathsf{shadow} = O(\log(K/\delta)/\eps^2) = O((n + m_\mathsf{clique} + \log(1/\delta)) / \eps^2) = O(m_\mathsf{clique} / \eps^2)$.
     
     We assume henceforth that both \cref{thm:classical_shadows} and \cref{lem:fidelity_finds_samples} do not fail, which happens with probability at least $1 - \delta$ over the samples.
    
    Letting $\ket{\varphi}$ be the state maximizing stabilizer fidelity and $S^* = \weyl(\ket{\varphi})$, \cref{lem:fidelity_finds_samples} guarantees that the algorithm samples a complete set of generators for $S^*$. These generators are necessarily contained in some maximal clique of $G$ because they all commute, and moreover, the subspace spanned by this clique must equal $S^*$ because $S^*$ equals its symplectic complement (so the maximal clique cannot contain any elements not in $S^*$).

    By \cref{thm:classical_shadows}, the estimate $\hat{o}_\varphi$ is at least $F_\stabset(\ket{\psi}) - \eps / 2$, so $\max_\phi \hat{o}_\phi \ge F_\stabset(\ket{\psi}) - \eps / 2$. Thus, the state $\ket{\phi}$ that maximizes the estimate $\hat{o}_\phi$ (and is output by the algorithm) has $\abs{\braket{\phi|\psi}}^2 \ge \hat{o}_\phi - \eps / 2 \ge \hat{o}_\varphi - \eps / 2 \ge F_\stabset(\ket{\psi}) - \eps$.

Finally, we briefly comment on the sample and time complexities of \cref{alg:fidelity-estimation}.
The sample complexity of the algorithm is 
\[
m_{\mathsf{shadow}} + 4m_{\mathsf{clique}} = O\left(\frac{n + \log(1/\delta)}{\eps^2\tau^4}\right).
\]
 The runtime is dominated by iterating through all of the maximal cliques, iterating through all of the stabilizer states $\ket{\phi}$ such that $\weyl(\ket\phi) = S$, and computing $\hat{o}_\phi$. There are at most $O\left(3^{m_\mathsf{clique}/3}\right)$ maximal cliques, by \cref{thm:maximal_clique_alg}. There are exactly $2^n$ stabilizer states in each basis. \cref{thm:classical_shadows} then guarantees that computing each $\hat{o}_\phi$ from the classical shadows takes time $O(n^2 m_\mathsf{shadow})$. Thus the overall time complexity is at most
\[
O\left(3^{m_\mathsf{clique}/3} \cdot 2^n \cdot n^2 m_\mathsf{shadow}\right).
\]
Plugging in the bounds on $m_\mathsf{clique}$ and $m_\mathsf{shadow}$ gives
\[
\exp\left( O\left(\frac{n + \log(1/\delta)}{\tau^4}\right)\right) \cdot \frac{n^3 + n^2\log(1/\delta)}{\eps^2 \tau^4},
\]
which further simplifies to
\[
\exp\left( O\left(\frac{n + \log(1/\delta)}{\tau^4}\right)\right) \cdot \frac{1}{\eps^2}
\]
by absorbing the rightmost term into the big-$O$ in the exponent.
\end{proof}

\section{Bounded-Distance Stabilizer Approximation}
We present an efficient algorithm that, given copies of a quantum state $\ket{\psi}$ that has fidelity at least $\cos^2\left(\pi/8\right) + \gamma$ with a stabilizer state $\ket{\phi}$, outputs $\ket{\phi}$ with high probability. We note that this algorithm solves the same task as \cref{alg:fidelity-estimation}, but with the difference that it only works for $\tau > \cos^2(\pi/8)$.

We start by bounding the squared expectation of Weyl operators in the unsigned stabilizer group of $\ket{\phi}$.

\begin{proposition}\label{prop:stab-group-expectation-is-large}
Let $\ket{\psi}$ be an $n$-qubit quantum state that has fidelity $\tau$ with a stabilizer state $\ket{\phi}$, where $\tau \geq 1/2$. If $x \in \weyl(\ket{\phi})$, then 
$\braket{\psi|W_x|\psi}^2 \ge (2\tau-1)^2$.
\end{proposition}
\begin{proof}
We write the input state as $\ket{\psi} = \sqrt{\tau}\ket{\phi} + \sqrt{1 - \tau}\ket{\phi^\perp}$, where $\ket{\phi}$ is a stabilizer state and $\ket{\phi^\perp}$ is the part orthogonal to $\ket{\phi}$. Let $P = \pm W_x$ be the signing of $W_x$ for which $\braket{\phi|P|\phi} = 1$.
Then
\begin{align*}
    \braket{\psi|P|\psi} 
    &= \bra{\psi} \left(\sqrt{\tau}\ket{\phi} + \sqrt{1-\tau}P\ket{\phi^\perp} \right) \\
    &= \sqrt{\tau}\braket{\psi|\phi} + \sqrt{1-\tau}\braket{\psi|P|\phi^\perp} \\
    &= \tau + \sqrt{1-\tau}\braket{\psi|P|\phi^\perp} \\
    &= \tau + \sqrt{1-\tau}\left(\sqrt{\tau}\braket{\phi | P | \phi^\perp} + \sqrt{1-\tau}\braket{\phi^\perp | P | \phi^\perp}\right) \\
    &= \tau + (1-\tau)\braket{\phi^\perp|P|\phi^\perp}.
\end{align*}
Since $\braket{\phi^\perp|P|\phi^\perp} \in [-1, 1]$, we conclude that $\braket{\psi|P|\psi} \ge 2\tau-1$, and hence (using $\tau \ge 1/2$) that $\braket{\psi|W_x|\psi}^2 = \braket{\psi|P|\psi}^2 \ge (2\tau-1)^2$.
\end{proof}

If an operator is not in $\ket{\phi}$'s unsigned stabilizer group, then it must anticommute with at least half of the Pauli operators in that group. 
The uncertainty principle states that the expectation of these operators must be small, since the expectation of the operators in $\weyl(\ket{\phi})$ is large.
To show this formally, we use the Schr\"odinger uncertainty relation.

\begin{fact}[Schr\"odinger uncertainty relation \cite{schrodinger1930uncertainty, angelow2008heisenberg}]\label{fact:uncertainty}
        For a quantum state $\ket{\psi}$ and observables $A$ and $B$,
        \[
            \left(\braket{\psi | A^2 | \psi} - \braket{\psi | A | \psi}^2\right)\left(\braket{\psi | B^2 | \psi} - \braket{\psi | B | \psi}^2\right) \geq \left|\frac{1}{2}\braket{\psi |\left(AB + BA\right) | \psi} - \braket{\psi | A |\psi}\braket{\psi | B | \psi}\right|^2.
        \]
\end{fact}

\begin{proposition}\label{prop:upper-bound}
Let $\ket{\psi}$ be an $n$-qubit quantum state that has fidelity $\tau$ with a stabilizer state $\ket{\phi}$, where $\tau \ge 1/2$. 
If $y \notin \weyl(\ket{\phi})$, then 
\[
\braket{\psi|W_y|\psi}^2 \leq 4 \tau(1-\tau).
\]
\end{proposition}
\begin{proof}
Let $W_x$ be a Pauli operator that anticommutes with $W_y$ and for which $x \in \weyl(\ket{\phi})$. Note that such an operator must exist, for if it didn't, then $W_y$ would commute with every stabilizer of $\ket{\phi}$, and therefore we would have $y \in \weyl(\ket{\phi})^\sympcomp = \weyl(\ket{\phi})$, which contradicts the supposition $y \notin \weyl(\ket{\phi})$.
\cref{fact:uncertainty} simplifies to the following:
\[
 \left(1 - \braket{\psi | W_x | \psi}^2\right)\left(1 - \braket{\psi | W_y | \psi}^2\right) \geq \braket{\psi | W_x |\psi}^2\braket{\psi | W_y | \psi}^2,
\]
where we use the fact that all Pauli operators are Hermitian and unitary and that $W_x$ and $W_y$ anticommute. Then
\begin{align*}
&\left(1 - \braket{\psi | W_x | \psi}^2\right)\left(1 - \braket{\psi | W_y | \psi}^2\right) \geq \braket{\psi | W_x |\psi}^2\braket{\psi | W_y | \psi}^2  \\
&\iff 1 - \braket{\psi | W_x | \psi}^2 - \braket{\psi | W_y | \psi}^2 + \braket{\psi | W_x | \psi}^2  \braket{\psi | W_y | \psi}^2  \geq \braket{\psi | W_x |\psi}^2\braket{\psi | W_y | \psi}^2  \\
&\iff 1 - \braket{\psi | W_x | \psi}^2 - \braket{\psi | W_y | \psi}^2 \geq 0 \\
&\iff 1 - \braket{\psi | W_x | \psi}^2   \geq \braket{\psi | W_y | \psi}^2 \\
&\implies 1 -  (2\tau - 1)^2   \geq \braket{\psi | W_y | \psi}^2\\
&\iff 4 \tau (1-\tau) \geq \braket{\psi | W_y | \psi}^2. 
\end{align*}
In the second-to-last step, we used \cref{prop:stab-group-expectation-is-large}.
\end{proof}

\cref{prop:stab-group-expectation-is-large} and \cref{prop:upper-bound} 
suggest that we can determine whether a given Pauli operator is in the unsigned stabilizer group only from its squared expectation as long as, for all $y \notin \weyl(\ket\phi)$ and for all $x \in \weyl(\ket\phi)$,
\[
\braket{\psi|W_y|\psi}^2 < \braket{\psi|W_x|\psi}^2,
\]
which happens only when $4 \tau(1- \tau) < (2\tau - 1)^2 \iff  \cos^2(\pi/8) < \tau$. 
However, we must also take into account the fact that we cannot know the squared expectations exactly. 
Rather, we can only recover them to some $\pm O(\gamma)$ accuracy, which in turn implies that $\tau$ must be at least $\cos^2(\pi/8) + \gamma$ for some $\gamma > 0$. 
We formalize this in the following corollary.

\begin{corollary}
\label{prop:always-larger}
Let $\ket\psi$ be an $n$-qubit quantum state that has fidelity $\cos^2(\pi/8) + \gamma$ with a stabilizer state $\ket{\phi}$ for some $\gamma > 0$. 
Then for all $y \notin \weyl(\ket\phi)$ and all $x \in \weyl(\ket\phi)$,
\[
\braket{\psi|W_x|\psi}^2 \ge \frac{1}{2} + 4 \gamma^2 + 2\sqrt{2}\gamma > \frac{1}{2} + 2\sqrt{2}\gamma,
\]
and 
\[
\braket{\psi|W_y|\psi}^2 \le \frac{1}{2} - 4 \gamma^2 - 2\sqrt{2}\gamma < \frac{1}{2} - 2\sqrt{2}\gamma.
\]
\end{corollary}
\begin{proof}
Plugging $\tau \ge \cos^2(\pi/8) + \gamma$ into \cref{prop:stab-group-expectation-is-large} and \cref{prop:upper-bound} implies the result. 
\end{proof}

A noteworthy consequence of \Cref{prop:always-larger} is that the state $\ket{\phi}$ must be unique:

\begin{corollary}
\label{cor:uniqueness}
    If $\ket{\psi}$ has fidelity at least $\cos^2(\pi/8) + \gamma$ with a stabilizer state $\ket{\phi}$ for some $\gamma > 0$, then $\ket{\phi}$ must be unique. 
\end{corollary}

\begin{proof}
    Suppose towards a contradiction that $\ket{\psi}$ has stabilizer fidelity at least $\cos^2(\pi/8) + \gamma$ with two different stabilizer states $\ket{\phi_1} \neq \ket{\phi_2}$. If $\weyl(\ket{\phi_1}) = \weyl(\ket{\phi_2})$, then $\ket{\phi_1}$ and $\ket{\phi_2}$ must be orthogonal. But then having $\abs{\braket{\psi|\phi_1}}^2 > \cos^2(\pi/8)$ and $\abs{\braket{\psi|\phi_2}}^2 > \cos^2(\pi/8)$ violates normalization of $\ket{\psi}$. In the complementary case, if $\weyl(\ket{\phi_1}) \neq \weyl(\ket{\phi_2})$, then there exists an $x \in \weyl(\ket{\phi_1}) \setminus \weyl(\ket{\phi_2})$. But then \Cref{prop:always-larger} implies both $\braket{\psi|W_x|\psi}^2 > \frac{1}{2}$ and $\braket{\psi|W_x|\psi}^2 < \frac{1}{2}$, a contradiction.
\end{proof}

Observe that the threshold $\cos^2(\pi/8)$ in \Cref{cor:uniqueness} is tight, because $\cos(\pi/8)\ket{0} + \sin(\pi/8)\ket{1}$ has fidelity $\cos^2(\pi/8)$ with both $\ket{0}$ and $\ket{+} = \frac{\ket{0} + \ket{1}}{\sqrt{2}}$.

\subsection{The Algorithm}
We now state and analyze our algorithm.
\cref{prop:always-larger}---which is the starting point of our algorithm---implies that, based only on the squared expectation of a Weyl operator, we can decide whether nor not it is in $\weyl(\ket\phi)$, where $\ket{\phi}$ is the stabilizer state that has fidelity at least $\cos^2(\pi/8)+\gamma$ with the input state $\ket{\psi}.$
At a high level, there are two missing pieces to complete our algorithm.

First, we need to find a polynomial-size list of Weyl operators that is guaranteed to contain a list of generators of $\weyl(\ket\phi)$. By \cref{lem:fidelity_finds_samples}, we can achieve this by Bell difference sampling repeatedly from $\weyl(\ket\phi)$.
Second, we must estimate the squared expectations of the Weyl operators we sample. One way to do so is by na\"ively measuring each Weyl operator repeatedly, one at a time. However, an algorithm due to Huang, Kueng, and Preskill \cite{huang2021information} achieves a better runtime, letting us estimate the squared expectation value of many Weyl operators with only a logarithmic sample complexity. 

\begin{restatable}[{\cite[Proof of Theorem 4]{huang2021information}}]{theorem}{hkpthmrestate}\label{thm:weyl-estimation}
Given any $m$ Weyl operators $P_{1}, \ldots, P_{m}$ and copies of an unknown pure state $\ket \psi$, there is an algorithm that estimates $\braket{\psi|P_{i}|\psi}^2$ to $\pm \eps$ accuracy with probability at least $1 - \delta$ by performing Bell measurements on 
\[
\frac{4 \log(2m/\delta)}{\eps^2}
\]
copies of the unknown state $\ket\psi$. The time it takes is $O\left(n \cdot m \cdot \frac{\log\left(m/\delta \right)}{\eps^2}\right)$.
\end{restatable}

We also provide a simple proof of \cref{thm:weyl-estimation} in \cref{appendix:hkp}.

Putting the pieces together, our algorithm works as follows.

\begin{algorithm}[H]
\SetKwInOut{Promise}{Promise}
\caption{Bounded-Distance Stabilizer Approximation}\label{alg:main-alg}
\KwInput{$O\left(n + \frac{\log(n/\delta)}{\gamma^2}\right)$ copies of $\ket\psi$} 
\Promise{$\ket\psi$ has fidelity at least $\cos^2(\pi/8) + \gamma$ with a stabilizer state $\ket{\phi}$}
\KwOutput{$\ket{\phi}$ with probability at least $1 - \delta$}
Let $m \coloneqq \frac{8+4\sqrt{3}}{\cos^8(\pi/8)}\left(n + \log(3/\delta)\right)$.
\label{step:one}

Bell difference sample $m$ times to obtain $x_1,\ldots,x_m \in \F_2^{2n}$.
\label{step:two}

Using the algorithm in \cref{thm:weyl-estimation}, estimate $\braket{\psi|W_{x_i}|\psi}^2$ for each $i$ to accuracy $\pm 2 \sqrt{2}\gamma$ with failure probability at most $\delta/3$.
\label{step:three}

Discard any $x_i$'s for which the estimate of $\braket{\psi|W_{x_i}|\psi}^2$ is less than $\frac{1}{2}$. Let $S$ be the subspace spanned by the remaining samples. If $S$ is not Lagrangian, then output ``\textsc{failure}''. Otherwise, find a Clifford circuit that measures in the stabilizer basis induced by $S$.
\label{step:four}

Measure $4\log(3/\delta)$ copies of $\ket{\psi}$ in the stabilizer basis induced by $S$ and output the majority result.
\label{step:five}
\end{algorithm}

\begin{theorem}\label{thm:bounded-distance}
Let $\ket{\psi}$ be an $n$-qubit state with fidelity at least $\cos^2(\pi/8) + \gamma$ with a stabilizer state $\ket{\phi}$ for $\gamma > 0$.
Given $O\left( n + \frac{\log(n/\delta)}{\gamma^2} \right)$ copies of $\ket\psi$ and $O\left( n^3 + \frac{n^2 \log(n/\delta) + n \log^2(1/\delta)}{\gamma^2}   \right)$ time, \cref{alg:main-alg} outputs $\ket\phi$ with probability at least $1-\delta$.
\end{theorem}
\begin{proof}
Let $m \coloneqq \frac{8+4\sqrt{3}}{\cos^8(\pi/8)}\left(n + \log(3/\delta)\right)$ as in \cref{step:one}.
Since Bell difference sampling requires $4$ copies of $\ket \psi$, the number of copies needed is $4 m = 4\frac{8+4\sqrt{3}}{\cos^8(\pi/8)}\left(n + \log(3/\delta)\right)$ for \cref{lem:fidelity_finds_samples} to guarantee that $\weyl(\ket \phi)$ is generated by some subset of the samples in \cref{step:two} with probability at least $1-\delta/3$.

Our goal now is to filter out the samples outside of $\weyl(\ket \phi)$.
To do so, we estimate the squared expectation of each of the sampled Weyl operators to within error $\eps \coloneqq 2 \sqrt{2}\gamma$, using \cref{thm:weyl-estimation}.
This requires 
\[
\frac{4\log\left(6m / \delta\right)}{\eps^2} = \frac{\log\left(6m / \delta\right)}{2\gamma^2}
\]
copies to succeed with probability $1-\delta/3$.
Conditioned on the success of \cref{step:three}, \Cref{prop:always-larger} tells us that at the end of \cref{step:four}, we will have filtered out all elements outside of $\weyl(\ket \phi)$.
Assuming \cref{step:two,step:three} both succeed, we will be left with $S = \weyl(\ket \phi)$, which is Lagrangian.

Finally, we need to bound how many samples are necessary to measure $\ket \phi$ for the majority result when we measure in the basis specified by $S$.
Let $k$ be the number of times we measure in \cref{step:five}, and for indicator random variables $X_1, \cdots, X_k$ let $X_i = 1$ if and only if measurement $k$ was $\ket \phi$. Let $\mu \coloneqq \Ex\left[\sum_{i=1}^k X_i \right] > \cos^2(\pi/8) \cdot k$. Then via Hoeffding's inequality (\cref{fact:hoeffding}):
\begin{align*}
    \Pr\left[\sum_{i=1}^k X_i \leq \frac{k}{2}\right] < \Pr\left[\sum_{i=1}^k X_i \leq \mu - \cos^2(\pi/8)k  + \frac{k}{2} \right]
    \leq \exp\left(-\frac{k}{4}\right).
\end{align*}
Therefore, by taking $k \geq 4\log(3/\delta)$, we can guarantee that we output the correct state $\ket \phi$ with probability at least $1-\delta/3$.

Via the union bound, the failure probability is at most $\delta$.

In total, we use 
\[
4\frac{8+4\sqrt{3}}{\cos^8(\pi/8)}\left(n + \log(3/\delta)\right) + \frac{\log(6m/\delta)}{2\gamma^2} + 4\log(3/\delta)
\]
copies.

For time complexity, 
performing all of the Bell sampling takes $O(mn) = O(n^2 + n\log(1/\delta))$ time in total. 
Running \cref{thm:weyl-estimation} takes 
\[
O\left(m \cdot n \cdot \frac{\log(m/\delta)}{\gamma^2}\right) = O\left(  \frac{n  \left(n + \log(1/\delta)\right)\log(n/\delta)}{\gamma^2}\right) = O\left(\frac{n^2 \log(n/\delta) + n \log^2(1/\delta)}{\gamma^2}\right) 
\]
time. 
In \cref{step:four}, determining whether $S$ is Lagrangian and computing a Clifford circuit that measures in the basis induced by $S$ can be done using Gaussian elimination on an $m \times 2n$ matrix in $O\left(mn \cdot \min(m, n)\right)$ time \cite[Section VI]{aaronson2004simulation}.\footnote{For further detail, see also \cite[Section 3]{grewal2023efficient}.} Furthermore, the computed Clifford circuit contains at most $O(n^2)$ gates. Thus, \cref{step:four} takes
\[
O\left( n^3 + n^2 \log(1/\delta) \right)
\]
time.
Finally, measuring in the Clifford basis takes $O(n^2)$ time, so \cref{step:five} takes $O(n^2 \log(1/\delta))$ time.
Overall, the time complexity is
\[
O\left( n^3 + \frac{n^2 \log(n/\delta) + n \log^2(1/\delta)}{\gamma^2}   \right).\qedhere
\]
\end{proof}

\section{Tolerant Property Testing of Stabilizer States}

We collect (and give alternative proofs of) a few results related to the property testing algorithm for stabilizer states due to Gross, Nezami, and Walter \cite{gross2021schur} (hereafter, the ``GNW algorithm''). 
We combine these results with the prior work of \cite{grewal2023efficient} to
give a \emph{tolerant property testing} algorithm for stabilizer states. 
By tolerant property testing, we mean that the tester must accept inputs that are $\eps_1$-close to having some property and reject inputs that are $\eps_2$-far from having the same property. 
This is more general than the standard setting where $\eps_1$ is set to $0$.  

The following two remarks are important for understanding the extent of our contribution. First, our algorithm is similar to the algorithm given in \cite{grewal_et_al:LIPIcs.ITCS.2023.64}\footnote{Which is itself a repeated application of the base GNW algorithm, for the purposes of error amplification.} that distinguishes Haar-random states from quantum states with at least $1/\poly(n)$ stabilizer fidelity. Second, we note that our algorithm only works in certain parameter regimes, not for all sensible settings of $\eps_1$ and $\eps_2$. This is discussed further in \cref{subsec:parameter-regime}.

To explain our property testing model in more detail, we are testing whether or not a quantum state is close to a stabilizer state, where distance is measured with fidelity. 
Specifically, we are given copies of an $n$-qubit quantum pure state $\ket\psi$ as input, and we must decide whether $F_\stabset(\ket\psi) \geq 1- \eps_1$ or $F_\stabset(\ket\psi) \leq 1 - \eps_2$, promised that one of them is the case.   

\subsection{Completeness and Soundness Analysis}
We record several statements that will go into our proof that our tolerant property testing algorithm is sound. 
It is first convenient to recall the GNW algorithm, which works as follows. 
Perform Bell difference sampling on the input state to get a string $x \in \F_2^{2n}$. Then perform the $\{\pm 1\}$ measurement $W_x^{\otimes 2}$ on $\ket{\psi}^{\otimes 2}$ and accept if the result is $1$.
The algorithm uses six copies of the input state.

Following the notation of \cite{grewal_et_al:LIPIcs.ITCS.2023.64} let: 
\begin{align}\label{eq:eta-definition}
\eta \coloneqq \E_{x \sim q_\psi}[2^n p_\psi(x)],
\end{align}
be the expected value of such a measurement.
We use the following identity proven by \cite{gross2021schur} and \cite{grewal_et_al:LIPIcs.ITCS.2023.64}.

\begin{fact}[{\cite[Fact 16]{grewal_et_al:LIPIcs.ITCS.2023.64}, \cite[Section 3.1]{gross2021schur}}]\label{fact:eta-p-cubed}
    Let $\ket \psi$ be an $n$-qubit quantum pure state. Then,
    \[\eta = 4^n \sum_{x \in \F_2^{2n}} p_\psi(x)^3,\]
    where $\eta$ is the expected value of the GNW algorithm measurement defined in \cref{eq:eta-definition}.
\end{fact}
\begin{proof}
    \begin{align*}
        \eta \coloneqq \E_{x \sim q_\psi}[2^n p_\psi(x)]
        &= 2^n \sum_{x \in \F_2^{2n}} q_\psi(x) p_\psi(x)\\
        &= 8^n \sum_{x \in \F_2^{2n}} \widehat{q}_\psi(x) \widehat{p}_\psi(x) && (\text{\nameref{fact:plancherel}})\\
        &= 32^n \sum_{x \in \F_2^{2n}} \widehat{p}_\psi(x)^3 && (\text{\cref{thm:convolution-theorem}})\\
        &= 4^n \sum_{x \in \F_2^{2n}} p_\psi(x)^3. && (\text{\cref{fact:weyl-fourier-duality}})\qedhere
    \end{align*}
\end{proof}
Prior to this work, it was known that, for any quantum pure state $\ket\psi$, 
\[
(4 \eta - 1)/3 \leq F_\stabset(\ket{\psi}) \leq \eta^{1/6}, 
\]
where the lower bound is due to \cite{gross2021schur} and the upper bound is due to \cite[Lemma 15]{grewal_et_al:LIPIcs.ITCS.2023.64}.\footnote{The lower bound is discussed only at the end of \cite[Section 3.1]{gross2021schur}. } In this subsection, we re-prove these bounds 
using the formalism and techniques developed in this work. 
We begin by giving an alternative proof of the upper bound. 

\begin{lemma}
\label{lemma:improved-completeness}
Let $\ket\psi$ be an $n$-qubit quantum state. 
\[
F_\calS(\ket\psi) \leq \eta^{1/6}.
\]
\end{lemma}
\begin{proof}
Let $\ket\phi$ be the stabilizer state that achieves the maximum overlap with $\ket\psi$ and let $S^* \coloneqq \weyl(\ket\phi)$.
Then 
\begin{align*}
F_\stabset(\ket{\psi})\leq \sqrt{\sum_{x \in S^*} p_\psi(x)}  \leq \left(4^n \sum_{x \in S^*} p_\psi(x)^3\right)^{1/6} \leq \left(\sum_{x \in \F_2^{2n}} p_\psi(x)^3\right)^{1/6} = \eta^{1/6}.
\end{align*}
The first inequality follows from \cref{lem:f_lowerbound}, the second inequality follows from H\"older's inequality, the third inequality follows from the nonnegativity of $p_\psi$, and the final equality follows from \cref{fact:eta-p-cubed}.
\end{proof}

We now move on to the lower bound which says that $(4 \eta - 1)/3 \leq F_\stabset(\ket{\psi})$.
We begin by proving a lower bound on stabilizer fidelity in terms of $p_\psi$-mass.  

\begin{lemma}[{\cite[Lemma 4.6]{grewal2023efficient}}]\label{lem:product-state-approximation-general}
Let $T$ be an isotropic subspace of dimension $n-t$, and suppose that 
    \[\sum_{x \in T} p_\psi(x) \geq \frac{1 - \eps}{2^t}.\]  
    Then there exists a state $\ket{\hat{\psi}}$ with $T \subseteq \weyl(\ket{\hat{\psi}})$ such that the fidelity between $\ket{\hat{\psi}}$ and $\ket{\psi}$ is at least $1 - \eps$. 
\end{lemma}

\begin{corollary}\label{cor:product-state-approximation-general}
    For any $n$-qubit quantum state $\ket\psi$ and Lagrangian subspace $T$, 
    \[
    F_\calS(\ket\psi) \geq \sum_{x \in T} p_\psi(x).
    \]
\end{corollary}
\begin{proof}
We apply \cref{lem:product-state-approximation-general} with the Lagrangian subspace $T$ (so that $t = 0$).
Following the notation in \cref{lem:product-state-approximation-general}, 
there must exist a state $\ket{\hat{\psi}}$ whose fidelity with $\ket\psi$ is at least $\sum_{x \in T} p_\psi(x)$. 
Note also that $\ket{\hat{\psi}}$ is a stabilizer state since its unsigned stabilizer group is Lagrangian.
We conclude that the stabilizer fidelity $F_\calS(\ket\psi)$ must be at least $\sum_{x \in T} p_\psi(x)$ because the stabilizer fidelity is the maximum fidelity over all stabilizer states.
\end{proof}

We also need the following fact: 
any pair of Weyl operators whose expectation with $\ket\psi$ is each at least $1/2$ must commute. 

\begin{fact}[{\cite{gross2021schur,grewal2023efficient}}]\label{fact:M_half_commute}
    Let $M = \{x \in \F_2^{2n} : 2^n p_\psi(x) > \frac{1}{2}\}$. Then for all $x, y \in M$, $[x, y] = 0$.
\end{fact}

We now prove the lower-bound.

\begin{lemma}\label{prop:improved-soundness}
Let $\ket\psi$ be an $n$-qubit  pure state. Then
\[
\frac{4 \eta - 1}{3} \leq F_\stabset(\ket{\psi}). 
\]
\end{lemma}

\begin{proof}
Let $M \coloneqq \{ x \in \F_2^{2n} : 2^n p_\psi(x) > 1/2 \}$. 
By \cref{fact:M_half_commute}, $M$ is isotropic. 
We can arbitrarily complete $M$ to some Lagrangian subspace $M^\prime \supseteq M$, and, since $p_\psi$ only takes non-negative values, it is obvious that $\sum_{x \in M^\prime} p_\psi(x) \geq \sum_{x \in M} p_\psi(x)$. 
Furthermore, by \cref{cor:product-state-approximation-general}, we know that \[F_\stabset(\ket\psi) \geq \sum_{x \in M^\prime} p_\psi(x) \geq \sum_{x \in M} p_\psi(x).\]
All that remains is proving that 
$\sum_{x \in M} p_\psi(x) \geq (4\eta - 1)/3$.
\begin{align*}
\sum_{x \in M}p_\psi(x) 
&= \Pr_{x \sim p_\psi}\left[x \in M\right]\\
&= \Pr_{x \sim p_\psi}\left[2^n p_\psi(x) > 1/2\right] && \text{(Definition of $M$)}\\
&= \Pr_{x \sim p_\psi}[4^{n}p_\psi^{2}(x) > 1/4]\\
&= 1 - \Pr_{x \sim p_\psi}[4^{n}p_\psi^{2}(x) \leq 1/4]\\
&= 1-\Pr_{x \sim p_\psi}[1-4^{n}p_\psi^{2}(x) \geq 3/4]\\
&\geq 1- \frac{4}{3}\left(1 - \E_{x \sim p_\psi}[4^n p_\psi^2(x)] \right) && \text{(Markov's Inequality)}\\
&= 1- \frac{4}{3}\left(1 - \eta \right) && (\text{\cref{fact:eta-p-cubed}})\\
&= \frac{4 \eta - 1}{3}.&&\qedhere 
\end{align*}
\end{proof}

We note that \cite{gross2021schur} gave an alternative proof of the fact that $F_\stabset(\ket\psi) \geq \sum_{x \in M} p_\psi(x)$ for $M \coloneqq \{ x \in \F_2^{2n} : 2^n p_\psi(x) > \frac{1}{2} \}$.
Our proof uses the more general \cref{cor:product-state-approximation-general}.

\subsection{The Algorithm}\label{subsec:tolerant-testing}

In the previous subsection, we established that for all quantum states $\ket{\psi}$,
\[
\frac{4 \eta - 1}{3} \leq F_\stabset(\ket{\psi}) \leq \eta^{1/6}.
\]
To simplify notation, let $\alpha_1 \coloneqq 1 - \eps_1$ and $\alpha_2 \coloneqq 1 - \eps_2$.
Observe that if $F_\stabset(\ket{\psi}) \geq \alpha_1$ then $\eta \geq \alpha_1^6$, and if $F_\stabset(\ket{\psi}) \leq \alpha_2$ then $\eta \leq \frac{3 \alpha_2 + 1}{4}$. 
This is the basis of our testing algorithm.
Specifically, as long as 
\[
\alpha_1^6 - \frac{3 \alpha_2+ 1}{4} \geq \frac{1}{\poly(n)},
\]
then we can efficiently distinguish the two cases simply by estimating $\eta$. 
For the remainder of this section, define 
\begin{align*}
\gamma \coloneqq \alpha_1^6 - \frac{3 \alpha_2+ 1}{4}. 
\end{align*}
Our algorithm is stated in \Cref{alg:tolerant-tester}.

\vspace{\baselineskip}
\begin{algorithm}[H]
\SetKwInOut{Promise}{Promise}
\caption{Tolerant Property Testing of Stabilizer States}
\label{alg:tolerant-tester}
\DontPrintSemicolon
\KwInput{$48\log(2/\delta)/\gamma^2$ copies of $\ket\psi$}
\Promise{Either case (i): $F_\stabset(\ket{\psi})\geq \alpha_1$ or case (ii): $F_\stabset(\ket\psi) \leq \alpha_2$, for $\alpha_1, \alpha_2 \in [0,1]$ such that $\gamma > 0$}
\KwOutput{$1$ if case (i) holds and $0$ if case (ii) holds, with probability at least $1-\delta$}
Let $m =  \frac{8\log(2/\delta)}{\gamma^2}$.

\RepTimes{$m$}{
    Perform Bell difference sampling to obtain $W_x \sim q_{\psi}$.

    Perform the measurement $W_x^{\otimes 2}$ on $\ket{\psi}^{\otimes 2}$. Let $X_i \in \{\pm 1\}$ denote the measurement outcome.
}

Set $\hat{\eta} = \frac{1}{m} \sum_i X_i$. 
Output $1$ if $\hat{\eta} > \alpha_1^6 - \frac{\gamma}{2}$ and $0$ otherwise. 
\end{algorithm}
\vspace{\baselineskip}

\begin{theorem}\label{thm:tolerant-tester}
For $\gamma > 0$, \Cref{alg:tolerant-tester} is correct and it uses $48\log(2/\delta)/\gamma^2$ copies of the input state, $O(n \log(1/\delta)/\gamma^2)$ time, and succeeds with probability at least $1 - \delta$.
\end{theorem}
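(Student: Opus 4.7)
The plan is to verify that $\hat{\eta}$ concentrates around $\eta$ and then use the bounds $\frac{4\eta-1}{3} \leq F_\stabset(\ket\psi) \leq \eta^{1/6}$ to convert a threshold test on $\hat{\eta}$ into a stabilizer fidelity decision.

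First, I will show that $\E[X_i] = \eta$. The $X_i$ is the $\pm 1$ outcome of performing the GNW test a single time: sample $W_x$ from $q_\psi$, then measure $W_x^{\otimes 2}$ on two fresh copies of $\ket\psi$. The $+1$ outcome corresponds to the two Pauli measurements agreeing, i.e., the acceptance event of the GNW algorithm, which occurs with probability $p_{\mathsf{accept}}$. Hence $\E[X_i] = 2p_{\mathsf{accept}} - 1 = \eta$, using the identity from \cite{gross2021schur}. Since each round uses $4$ copies for Bell difference sampling plus $2$ copies for the Weyl measurement, the total sample count is $6m = 48 \log(2/\delta)/\gamma^2$ as claimed.

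Next, since $X_1,\dots,X_m$ are i.i.d.\ in $\{\pm 1\}$, Hoeffding's inequality gives
\[
\Pr\!\left[\, \lvert \hat{\eta} - \eta \rvert \geq \tfrac{\gamma}{2} \,\right] \leq 2\exp\!\left(-\frac{m\gamma^2}{8}\right) \leq \delta,
\]
using our choice $m = 8\log(2/\delta)/\gamma^2$. Conditioning on the event $\lvert \hat{\eta} - \eta \rvert < \gamma/2$, I will then check correctness in both cases. In case (i), $F_\stabset(\ket\psi) \geq \alpha_1$ and the upper bound $F_\stabset(\ket\psi) \leq \eta^{1/6}$ give $\eta \geq \alpha_1^6$, so $\hat{\eta} > \alpha_1^6 - \gamma/2$ and the algorithm outputs $1$. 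In case (ii), $F_\stabset(\ket\psi) \leq \alpha_2$ combined with the improved lower bound $\frac{4\eta-1}{3} \leq F_\stabset(\ket\psi)$ from \cref{prop:improved-soundness} yields $\eta \leq \frac{3\alpha_2+1}{4} = \alpha_1^6 - \gamma$, so $\hat{\eta} < \alpha_1^6 - \gamma/2$ and the algorithm outputs $0$.

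Finally, for the time complexity, each of the $m$ iterations requires an $O(n)$-time Bell difference sampling procedure plus an $O(n)$-time Pauli measurement, and aggregating the $\pm 1$ outcomes into $\hat{\eta}$ and comparing against the threshold is $O(m)$. This gives overall runtime $O(nm) = O(n\log(1/\delta)/\gamma^2)$. There is no real obstacle here: the two bounds on stabilizer fidelity in terms of $\eta$ do all the heavy lifting, and the tester itself amounts to a standard Chernoff-style estimator. The only care needed is matching the $\pm 1$ encoding of $X_i$ to the accept/reject outcomes of GNW so that $\E[X_i]$ is exactly $\eta$ rather than $p_{\mathsf{accept}}$.
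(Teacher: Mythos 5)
Your proof is correct and follows essentially the same route as the paper: unbiasedness of $\hat{\eta}$ (which the paper attributes to \cite{grewal_et_al:LIPIcs.ITCS.2023.64}, while you re-derive it from the GNW identity $\eta = 2p_{\mathsf{accept}}-1$; either way the expected outcome of measuring $W_x^{\otimes 2}$ with $x\sim q_\psi$ is exactly $\E_{x\sim q_\psi}[2^n p_\psi(x)]=\eta$), then Hoeffding with $m = 8\log(2/\delta)/\gamma^2$, and finally the two-sided bound $\frac{4\eta-1}{3}\le F_{\stabset}(\ket\psi)\le\eta^{1/6}$ to separate the promise cases. Your explicit case analysis simply spells out what the paper does in the preamble of \cref{subsec:tolerant-testing}; the sample count ($6m$) and $O(nm)$ runtime are computed identically.
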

\begin{proof}
\Cref{alg:tolerant-tester} fails when $|\hat\eta - \eta| \geq \gamma/2$. 
    \cite{grewal_et_al:LIPIcs.ITCS.2023.64} proved that $\hat{\eta} = \frac{1}{m} \sum_i X_i$ is an unbiased estimator of $\eta$ (i.e., $\Ex[\hat{\eta}] = \eta$). 
Therefore, by Hoeffding's inequality (\cref{fact:hoeffding}),
    \[
    \Pr[\text{\Cref{alg:tolerant-tester} fails}] = 
    \Pr[|\hat \eta - \eta| \geq \gamma/2] \leq 2e^{-m\gamma^2/8} = \delta.
    \]
    
The number of copies follows from the fact that Bell difference sampling consumes $4$ copies of the input state, the measurement in Step 4 of \Cref{alg:tolerant-tester} consumes $2$ copies of the input state, and that the loop is repeated $m$ times. The running time is clearly $O(mn)$.
\end{proof}

\subsection{Parameter Regime Discussion}\label{subsec:parameter-regime}

\begin{figure}[t!]
    \centering
    \begin{subfigure}{1.0\textwidth}
    \centering 
    \includegraphics[width=.55\linewidth]{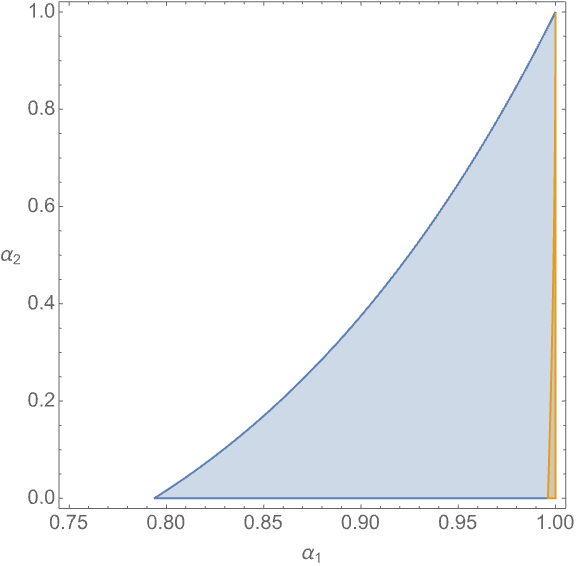}
    \hfill
    \end{subfigure}
    \caption{
    The shaded regions indicate the parameter regimes of $\alpha_1$ and $\alpha_2$ that are permissible by the analysis of the GNW algorithm in \Cref{subsec:parameter-regime} (orange) and \cref{alg:tolerant-tester} (blue). 
    Thus, the difference between the orange and blue regions illustrates the improvement due to \cref{lemma:improved-completeness}.}
    \label{fig:parameter-regime}
\end{figure}

We conclude this section by studying the regime in which our tolerant testing algorithm works in comparison to prior work.
The GNW algorithm already implicitly functions as a tolerant property tester: because it uses $6$ copies of $\ket{\psi}$ and accepts any stabilizer state with probability $1$, if the trace distance between $\ket{\psi}$ and some stabilizer state is at most $\eps$, then the test accepts $\ket{\psi}$ with probability at least $1 - 6\eps$. We can use this observation to establish the values of $\alpha_1$ and $\alpha_2$ in which repeated applications of the GNW algorithm works, given only the soundness analysis of \cite{gross2021schur} but not applying the completeness bound implied by \Cref{lemma:improved-completeness}.

Let $p_{\mathsf{accept}}$ denote the acceptance probability of the GNW algorithm.
It is easy to show that $\eta = 2 p_{\mathsf{accept}} -1$ (see \cite[Page 19]{gross2021schur}).
As mentioned above, \cite{gross2021schur} proved that for any quantum state $\ket{\psi}$, $\frac{4 \eta - 1}{3} \leq F_\stabset(\ket\psi)$. 
Additionally, since the GNW algorithm uses 6 copies of the input state and accepts stabilizer states with probability $1$, it follows that $1 - 6\sqrt{1 - F_\stabset(\ket{\psi})} \leq p_{\mathsf{accept}}$, where we are using the fact that the trace distance between $\ket\psi$ and the stabilizer state maximizing fidelity is $\sqrt{1 - F_\stabset(\ket\psi)}$. 
Finally, using the fact that $\eta = 2 p_{\mathsf{accept}} -1$, we get
$F_\stabset(\ket\psi) \leq \frac{1}{144}(2\eta - \eta^2 + 143)$.

In a ``yes'' instance, where we are promised that $F_\calS(\ket\psi) \geq \alpha_1$, we 
have the inequality $\frac{1}{144}(2\eta - \eta^2 + 143) \geq \alpha_1$.
Solving for $\eta$ gives $\eta \geq 1-12\sqrt{1-\alpha_1}$.
Similarly, in a ``no'' instance, where we are promised that $F_\calS(\ket\psi)\leq \alpha_2$, we have $\eta \leq (3\alpha_2 +1)/4$.
Hence, following the same argument as in \cref{subsec:tolerant-testing}, the GNW algorithm tolerantly tests stabilizer states as long as:
\[
1 - 12\sqrt{1-\alpha_1} >\frac{3\alpha_2 + 1}{4}, 
\]
whereas, as shown earlier, our algorithm works as long as 
\[
\alpha_1^6 > \frac{3 \alpha_2+ 1}{4}.
\]
This is a significant improvement, which is shown visually in \cref{fig:parameter-regime}.

\section{Discussion and Open Problems}

A natural direction for future work is to improve the performance of our algorithms or to prove (conditional or unconditional) lower bounds. 
In particular, can the exponential running time of \cref{alg:fidelity-estimation} be improved upon, or is stabilizer state approximation computationally hard for general parameter regimes? We are optimistic that the exponential factors in our runtime analysis could be made much smaller in practice, because our bound on the sample complexity of finding a complete set of generators is probably far from optimal.

We also remark that, at least superficially, our problem of finding the nearest stabilizer state resembles the closest vector problem (CVP): given a lattice $L$ and a target vector, find the nearest lattice point to the target vector. In our problem, we are given a target vector, and we want to find the nearest stabilizer state to the target vector. While not a lattice, the stabilizer states are ``evenly spread'' across the complex unit sphere due to their 3-design property \cite{kueng2015qubit, webb2016clifford,zhu2017multiqubit}.
CVP is known to be $\NP$-hard to solve approximately to within any constant and some almost-polynomial factors \cite{van1981another, arora1997hardness, dinur1998approximating}. Is there a formal connection between these two problems?

Can tighter bounds between $\eta$ and stabilizer fidelity be proven? In \cite[Appendix B]{grewal_et_al:LIPIcs.ITCS.2023.64}, the authors prove that one can hope for at most a roughly quadratic improvement in the bound $\fidelity_\stabset(\ket\psi)^6 \leq \eta$. 
In addition to $\eta$, are there other statistics related to stabilizer fidelity (or any other stabilizer complexity measure) that can be estimated efficiently?
Progress in this direction would extend the parameter regimes for which our property testing algorithm works (see \cref{fig:parameter-regime}).

One can view the output of \cref{alg:fidelity-estimation} as an approximation of the input state by a nearby stabilizer state. 
Following this theme, a natural objective is to design similar approximation algorithms relative to other classes of quantum states such as product states or matchgate states. 
We note that it is even open to design a time-efficient algorithm that, given copies of an $n$-qubit quantum state, outputs the nearest state from the set $\{\ket{0}, \ket{1}, \ket{+}, \ket{-}, \ket{i}, \ket{-i}\}^{\otimes n}$, which is a subset of stabilizer states. 
In addition to potentially improving Clifford+$T$ simulation algorithms (as discussed in \cref{subsec:our-results}), are there other applications for these types of state approximation algorithms?

\section*{Acknowledgments}
We thank David Gosset, Srinivasan Arunachalam, Sepehr Nezami, and Arkopal Dutt for helpful conversations.
SG, VI, DL are supported via Scott Aaronson by a Vannevar Bush Fellowship from the US Department of Defense, the Berkeley NSF-QLCI CIQC Center, a Simons Investigator Award, and the Simons ``It from Qubit'' collaboration. WK is supported by an NDSEG Fellowship, and also acknowledges support from the U.S. Department of Energy, Office of Science,
National Quantum Information Science Research Centers, Quantum Systems Accelerator.
VI is supported by an NSF Graduate Research Fellowship.
DL is also supported by NSF award FET-2243659.
\bibliographystyle{alphaurl}
\bibliography{refs}

\appendix

\section{Estimating Squared Expectation of Weyl Operators}\label{appendix:hkp}

We provide an alternative proof of \cref{thm:weyl-estimation}, which was first proved in \cite{huang2021information}. 
The remarkable simplicity of our proof illustrates the power of the duality theorems in \cref{sec:duality} and underscores the usefulness of our Fourier-analytic techniques. 
For convenience, we restate the theorem.

\hkpthmrestate*

Let $t_\psi(x)$ be the probability of Bell sampling (without the difference) $x \in \F_2^{2n}$. The following was shown by Montanaro.

\begin{proposition}[{\cite[Lemma 2]{montanaro-bell-sampling}}]\label{prop:montanaro-bell-sampling}
Let $\ket \psi$ be an $n$-qubit quantum state. Bell sampling on two copies of $\ket \psi$ outputs $x \in \F_2^{2n}$ with probability
\[
t_\psi(x) \coloneqq \frac{1}{2^n}\abs{\braket{\psi | W_x | \psi^*}}^2
\]
where $\ket{\psi^*}$ is the state produced by taking the complex conjugate of each amplitude in the computational basis of $\ket \psi$.
\end{proposition}

Among the single-qubit Pauli matrices, only $Y$ has complex entries; $I$, $X$, and $Z$ are invariant under complex conjugation. $Y$ also has the property that its complex conjugate is its negation. For this reason, switching between $\ketbra{\psi}{\psi}$ and $\ketbra{\psi^*}{\psi^*}$ has the effect of flipping the sign of the coefficients $c_\psi$ in the Weyl expansion (\cref{def:weyl-expansion}) each time a Pauli-$Y$ appears.
We formalize this below, starting with a function that ``counts'' how many times $Y$ appears.

\begin{definition}
For $x = (a,b) \in \F_2^{2n}$, define $\pi(x) \coloneqq a \cdot b $.
\end{definition}

From this, we can see that $c_{\psi^*}(x) = c_\psi(x) (-1)^{\pi(x)}$. This allows us to prove the following about the Fourier decomposition of $t_\psi$.

\begin{corollary}[{\cite[Proposition 3.15]{damanik2018optimality}}]\label{cor:damanik}
For any $n$-qubit quantum state $\ket\psi$,
\[t_\psi(x) = \frac{1}{2^n}\sum_{y \in \F_2^{2n}} p_\psi(y) (-1)^{\pi(y) + [x, y]}.\]
\end{corollary}
\begin{proof}
\begin{align*}
t_\psi(x) &= \frac{1}{2^n} \abs{\braket{\psi | W_x | \psi^*}}^2  && \text{(\cref{prop:montanaro-bell-sampling})}\\
&= \frac{1}{2^n} \tr\left[W_y \ketbra{\psi^*}{\psi^*} W_x \ketbra{\psi}{\psi}\right] && (\text{Cyclic property of trace})\\
&= \frac{1}{4^n} \tr\left[W_y \left(\sum_{y \in \F_2^{2n}} (-1)^{\pi(y)} c_\psi(y) W_y\right) W_x \left( \sum_{z \in \F_2^{2n}} c_\psi(z) W_z\right) \right] && (\text{Weyl expansion})\\
&= \frac{1}{4^n} \sum_{y, z \in \F_2^{2n}} (-1)^{\pi(y)} c_\psi(y) c_\psi(z) \tr\left[W_x W_y W_x W_z \right]\\
&= \frac{1}{4^n} \sum_{y, z \in \F_2^{2n}} (-1)^{\pi(y) + [x, y]} c_\psi(y) c_\psi(z) \tr\left[W_y W_z \right] && (W_x W_y = (-1)^{[x, y]}W_y W_x)\\
&= \frac{1}{2^n} \sum_{y \in \F_2^{2n}} c_\psi(y)^2 (-1)^{\pi(y) + [x, y]} && (\tr\left[W_x W_y \right] = 2^n \cdot \indic{x = y})\\
&= \frac{1}{2^n} \sum_{y \in \F_2^{2n}} p_\psi(y) (-1)^{\pi(y) + [x, y]}. && (p_\psi(x) \coloneqq c_\psi(x)^2) && \qedhere
\end{align*}
\end{proof}

We can now prove a duality theorem for $t_\psi$, just as we did for $p_\psi$ and $q_\psi$ in \cref{thm:p-mass-identity-subgroups,thm:q-mass-identity-subgroups}, respectively.

\begin{theorem}\label{thm:bell-diff-duality}
    For a subspace $S \subseteq \F_2^{2n}$:
    \[
        \sum_{x \in S} t_\psi(x) = \frac{\abs{S}}{2^n} \sum_{x \in S^\sympcomp} p_\psi(x)(-1)^{\pi(x)}.
    \]
\end{theorem}
\begin{proof}
    \begin{align*}
        \sum_{x \in S} t_\psi(x) &= \sum_{x \in S} \frac{1}{2^n}\sum_{y \in \F_2^{2n}} p_\psi(x) (-1)^{\pi(y) + [x, y]} && (\text{\cref{cor:damanik}})\\
        &= \frac{1}{2^n}\sum_{y \in \F_2^{2n}} p_\psi(x) (-1)^{\pi(y)} \sum_{x \in S}(-1)^{[x, y]}\\
        &= \frac{\abs{S}}{2^n} \sum_{x \in S^\sympcomp} p_\psi(x)(-1)^{\pi(x)}. && (\text{\cref{lemma:sum-over-characters}}) \qedhere
    \end{align*}
\end{proof}

We are now ready to prove \cref{thm:weyl-estimation}.

\begin{proof}[Proof of \cref{thm:weyl-estimation}]
Because $\braket{\psi | I |\psi}^2 = 1$ always, we can assume without loss of generality that all of the Weyl operators $P_1,\ldots,P_m$ are non-identity.

Let $S_x \coloneqq \langle x \rangle$ be the one-dimensional subspace spanned by $x \in \F_2^{2n} \setminus \{0^{2n} \}$.
Define $\eta_x$ as the probability that Bell sampling chooses a $y$ for which $W_x$ and $W_y$ commute, i.e.,
\[\eta_x \coloneqq \Pr_{y \sim t_\psi}\big[[x, y] = 0\big] = \sum_{y \in S_x^\sympcomp} t_\psi(y) = \frac{\abs{S_x^\sympcomp}}{2^n}\sum_{y \in S_x} p_\psi(y) (-1)^{\pi(y)} = \frac{1 + (-1)^{\pi(x)}\braket{\psi | W_x | \psi}^2}{2},\]
where the second equality follows from \cref{thm:bell-diff-duality}.
Thus we can estimate $\braket{\psi | W_x | \psi}^2$ via the relationship:
\[\braket{\psi | W_x | \psi}^2 = (-1)^{\pi(x)}\left(2\eta_x - 1\right).\]

By linearity of expectation, we only need to estimate $\eta_x$, a Bernoulli random variable, to accuracy $\pm\eps/2$ for our estimate of $\braket{\psi | W_x | \psi}^2$ to be within $\pm\eps$. 
By Hoeffding's inequality (\cref{fact:hoeffding}), we can estimate a Bernoulli random variable to accuracy $\pm \eps/2$ using only $\frac{2\log(2/\delta)}{\eps^2}$ samples, regardless of the value of $\braket{\psi | W_x | \psi}^2$. 
To estimate the squared expectation of all $m$ Weyl operators to within $\eps$ error, we can simply use $\frac{2\log(2m/\delta)}{\eps^2}$ samples and appeal to the union bound. 
Since Bell sampling uses $2$ copies of $\ket \psi$, we use $\frac{4\log(2m/\delta)}{\eps^2}$ copies of $\ket\psi$ in total.
Finally, because performing each Bell measurement, computing the symplectic product, and computing $\pi(x)$ all require $O(n)$ time, estimating any particular squared expectation value takes $O\left(\frac{4n\log(2m/\delta)}{\eps^2}\right)$ time. Hence, the algorithm uses a grand total of $O\left(\frac{4mn\log(2m/\delta)}{\eps^2}\right)$ time.
\end{proof}

\end{document}